\newcommand{\bN} { {\mathbb{N}}}
\newcommand{\bQ} { {\mathbb{Q}}}
\newcommand{\bZ} { {\mathbb{Z}}}
\newcommand{\cM} { {\mathcal{M}}}
\def\bm#1{\mathchoice{\kern-.5pt\mathord{\text{\bfseries\itshape#1}}\kern+.25pt}
                      {\kern-.5pt\mathord{\text{\bfseries\itshape#1}}\kern+.25pt}
                      {\kern+.25pt\mathord{\text{\scriptsize\bfseries\itshape#1}}\kern+.5pt}
                      {\kern+.25pt\mathord{\text{\tiny\bfseries\itshape#1}}}\kern+.25pt}
\newcommand{\De} { {\Delta}}
\newcommand{\si} { {\sigma}}
\newcommand{\vv}{\mathbf{v}}
\newcommand{\vw}{\mathbf{w}}
\newcommand{\e}{\mathbf{e}}
\newcommand{\x}{\mathbf{x}}
\newcommand{\xh}{\widehat{\mathbf{x}}}
\newcommand{\bsi}{\boldsymbol{\si}}
\newcommand{\btau}{\boldsymbol{\tau}}
\newtheorem{theorem}{Theorem}
\newtheorem{corollary}[theorem]{Corollary}
\newtheorem{lemma}[theorem]{Lemma}
\newtheorem{remark}[theorem]{Remark}
\newtheorem{definition}[theorem]{Definition}
\newtheorem{example}[theorem]{Example}
\DeclareMathOperator{\Span}{Span}
\DeclareMathOperator{\res}{res}
\newtheorem{fact}[theorem]{Fact}
\newtheorem{problem}[theorem]{Problem}
\journal{xxxxxx}
\begin{document}

\begin{frontmatter}



\title{How to generate all possible rational \\Wilf--Zeilberger forms?
\footnote{S. Chen was partially supported by the National Key R\&D Program of China (No. 2020YFA0712300 and No. 2023YFA1009401), the NSFC grant (No. 12271511), CAS Project for Young Scientists in Basic Research (Grant No. YSBR-034), and the CAS Fund of the Youth Innovation Promotion Association (No. Y2022001).
C.\ Koutschan and Y.\ Wang were partially supported by the Austrian Science Fund (FWF): 10.55776/I6130. All authors also were supported by the International Partnership Program of Chinese Academy of Sciences (Grant No.\ 167GJHZ2023001FN).}
}

\author[a,b]{Shaoshi Chen}
\address[a]{%
  KLMM, Academy of Mathematics and Systems Science, \\
  Chinese Academy of Sciences, Beijing 100190, China}
\address[b]{%
  School of Mathematical Sciences, \\
  University of Chinese Academy of Sciences, Beijing 100049, China}
\ead{schen@amss.ac.cn}

\author[c]{Christoph Koutschan}
\address[c]{%
	Johann Radon Institute for Computational and Applied Mathematics (RICAM),\\
  Austrian Academy of Sciences, Altenberger Stra\ss e 69, 4040 Linz, Austria}
\ead{christoph.koutschan@ricam.oeaw.ac.at}

\author[a,b,c]{Yisen Wang}
\ead{wangyisen@amss.ac.cn}

\begin{abstract}
  Wilf--Zeilberger pairs are fundamental in the algorithmic theory of Wilf and
  Zeilberger for computer-generated proofs of combinatorial
  identities. Wilf--Zeilberger forms are their high-dimensional
  generalizations, which can be used for proving and discovering convergence
  acceleration formulas. This paper presents a structural description of all
  possible rational such forms, which can be viewed as an additive
  analog of the classical Ore--Sato theorem. Based on this analog, we show a
  structural decomposition of so-called multivariate hyperarithmetic expressions,
  which extend multivariate hypergeometric terms to the additive setting.
\end{abstract}

\begin{keyword}
Wilf--Zeilberger form \sep
additive Ore--Sato theorem \sep
hyperarithmetic expression \sep
orbital decomposition

\end{keyword}

\end{frontmatter}


\section{Introduction}\label{sec:intro}

In the 1990s, Wilf and Zeilberger developed an algorithmic theory for proving combinatorial identities~\cite{Wilf1992, PWZbook1996}.
The notion of WZ-pairs is one of the core concepts in their theory, which was originally introduced in~\cite{WilfZeilberger1992} with a recent brief description in~\cite{Tefera2010}. It is an elegant and powerful tool for proving identities involving sums of hypergeometric terms in an algorithmic fashion, and
there are implementations in several computer algebra systems.
A WZ-pair is a pair of functions~$\bigl(F(n, k), G(n, k)\bigr)$ satisfying the relation
\[ F(n+1, k) - F(n, k) = G(n, k+1) - G(n, k),
\]
where both $F$ and $G$ are hypergeometric terms, i.e., their shift quotients with respect to $n$ and $k$ are rational functions in $n$ and~$k$.
Assume that $F(n,k)$ vanishes except for $k$ in some finite interval for each~$n$. Therefore, one can sum both sides of the above equation w.r.t.~$k$ from $0$ to $\infty$ to get, by telescoping,
\[\sum_{k=0}^{\infty} F(n+1, k) - \sum_{k=0}^\infty F(n, k)= \lim_{k\rightarrow\infty} G(n, k+1) - G(n, 0).\]
If the boundary terms on the right-hand side vanish, we obtain that
\[\sum_{k=0}^{\infty} F(n+1, k) = \sum_{k=0}^\infty F(n, k),\]
which implies that the definite sum~$\sum_{k=0}^\infty F(n, k)$ is independent of~$n$. Thus, we get
the identity  $\sum_{k=0}^\infty F(n, k) = c$, where the constant $c$ can be determined by evaluating the sum for one specific value of~$n$. We can also get a companion identity by summing w.r.t.~$n$. For example, the WZ-pair 
$(F, G)$ with
\[F = \frac{\binom{n}{k}^2}{\binom{2n}{n}} \quad \text{and} \quad  G= {\frac {( 2k-3n-3) {k}^{2}}{ 2(2n+1)( -n-1+k)^{2}}} \cdot \frac{\binom{n}{k}^2}{\binom{2n}{n}}
 \]
leads to two identities
\[\sum_{k=0}^\infty \binom{n}{k}^2 = \binom{2n}{n} \quad \text{and} \quad \sum_{n=0}^\infty \frac{(3n-2k+1)}{(2n+1)\binom{2n}{n}} \binom{n}{k}^2 = 2. \]

WZ-pairs have been employed by Guillera to prove Ramanujan-type series~\cite{Guillera2002,Guillera2006,Guillera2010,Guillera2013} and new congruences involving harmonic numbers and Ap\'ery numbers conjectured by Sun~\cite{Sun2023,XiaSun2023,HouSun2023}. This shows that WZ-pairs are not only beneficial in combinatorics but also in the fields of mathematical analysis and number theory.
Wilf--Zeilberger forms, in short WZ-forms, are a direct generalization of WZ-pairs to tuples with more than two entries. The naming ``WZ-form'' is reminiscent of the classical concepts of differential forms~\cite{Cartan1970} and difference forms, to which it is indeed related. To be more precise, we first recall some basic terminologies and properties about difference forms from~\cite{Zeilberger1993,Mansfield2008}. Let $\cM$ be a well-chosen module of discrete functions defined on some region of $\bZ^n$ so that one can define the usual shift operators $\si_1, \ldots, \si_n$ that commute.  Let $\delta x_1,\ldots, \delta x_n$ be ``indeterminates" satisfying the relations
\[\delta x_i \, \delta x_j = - \delta x_j \, \delta x_i\quad \text{for all $i, j$ with $1\leq i \leq j\leq n$}.\] 
An (exterior) \emph{difference $k$-form} is a linear combination of~\lq\lq words\rq\rq~in the alphabet $\{\delta x_1,\ldots, \delta x_n\}$ with coefficients from the module $\cM$, which can be written as
\[
  \omega = \sum_{i_1, \ldots, i_k} f_{i_1, \ldots, i_k} \,  \delta x_{i_1} \cdots \delta x_{i_k}, \quad \text{where $f_{i_1, \ldots, i_k}\in \cM$}.
\]
Any element of $\cM$ can be seen as a~\emph{$0$-form}.
The \emph{exterior difference} $\delta$ of~$ \omega$ is the $(k+1)$-form defined by
\[
  \delta \omega = \sum_{i_1, \ldots, i_k} \left(\sum_{i=1}^n \biggl(\si_i(f_{i_1, \ldots, i_k})- f_{i_1, \ldots, i_k}\biggr)\delta_{x_i} \right) \,  \delta x_{i_1} \cdots \delta x_{i_k}
\]
A difference form $\omega$ is called \emph{closed} if~$\delta \omega =0 $ and it is called \emph{exact} if  $\omega = \delta \theta$ for some~$\theta$.
Analogous to the de Rham complex for differential forms, the property $\delta^2 = 0$ holds for difference forms since the shift operators commute. That is to say, every exact form is always closed, but the converse is not true in general. Closed difference $1$-forms with hypergeometric coefficients are called \emph{WZ-forms} in~\cite{Zeilberger1993} and~\emph{WZ-cohomology} is the quotient of the modules of closed forms modulo that of exact forms. Similar to WZ-pairs, WZ-forms as well can be used to prove combinatorial identities and to derive convergence acceleration formulas~\cite{Zimmermann2000}. For example, Dixon's identity
\begin{equation}\label{eq:Dixon}
  \sum_k (-1)^k \frac{(a+b)!(a+c)!(b+c)!a!b!c!}{(a+k)!(a-k)!(b+k)!(b-k)!(c+k)!(c-k)!(a+b+c)!} = 1
\end{equation}
can be derived from the closed difference $1$-form
\[
	\omega = F \, \delta k + G \, \delta a + H \, \delta b + I \, \delta c,
\]
where $F$ denotes the summand in~\eqref{eq:Dixon}, and
\begin{align*}
	G := & - \frac{(b+k)(c+k)}{2(a-k+1)(a+b+c+1)} \cdot F,\\
	H := & - \frac{(a+k)(c+k)}{2(b-k+1)(a+b+c+1)}\cdot F,\\
	I := & - \frac{(a+k)(b+k)}{2(c-k+1)(a+b+c+1)} \cdot F.
\end{align*}
The idea is similar to that used in deriving identities from WZ-pairs.  For any given closed difference $1$-form $\omega = f_1 \, \delta x_1 + \cdots + f_n \, \delta x_n$, 
Zeilberger applied the discrete Stokes theorem~\cite[p.\ 583]{Zeilberger1993} under some finitely supported conditions on the $f_i$'s to prove that $g(\widehat{\mathbf{x}}_i) = \sum f_i \, \delta x_i$ is identically constant in~$\widehat{\mathbf{x}}_i=(x_1,\dots,x_{i-1},x_{i+1},\dots,x_n)$ and this constant can be determined by checking the sum for one special value of $\widehat{\mathbf{x}}_i$.  To get convergence acceleration formulas, the idea is to apply the discrete Stokes theorem to a closed difference $1$-form  $\omega = F(n,k) \, \delta k + G(n,k) \, \delta n $ to obtain
	\[
	\sum_{n=0}^{\infty} G(n,0) = \sum_{n=1}^{\infty} \biggl( F(n,n-1)+G(n-1,n-1)\biggr)
	\]
	whenever both sums converge. For example, the closed form $\omega = F \, \delta k + G \, \delta n$ with
\begin{equation*}
     F = \frac{(-1)^{n+k}k!^2(n-k-1)!}{(n+k+1)!}\quad  \text{and}\quad 
     G = \frac{2\,(-1)^{n+k}k!^2(n-k)!}{(n+1)(n+k+1)!}.
\end{equation*}
leads to the convergence acceleration formula~\cite{Poorten1978} for~$\zeta(2)$
\begin{equation}\label{eq:zeta2}
  \zeta(2) = 2 \sum_{n=0}^{\infty} \frac{(-1)^n}{(n+1)^2} =
  3 \sum_{n=1}^{\infty} \frac{1}{\binom{2n}{n}n^2}.
\end{equation}
Then the natural questions arise how one can construct such WZ-forms? And how can we decide whether a WZ-form is exact or not? Note that the idea of deriving WZ-forms from known identities and then employing them to generate new identities has been shown in~\cite{Gessel1995,Zimmermann2000,Mu2019,Au2025}. 
In this paper, we restrict our attention to WZ-forms with rational functions instead of hypergeometric entries and use $n$-tuples $(f_1, \ldots, f_n)$ to denote WZ-forms instead of difference forms. We shall describe the structure of rational WZ-forms which is an additive analog of the Ore--Sato theorem~\cite{Ore1930, Sato1990, Gelfand, AbramovPetkovsek2002a, Abramov2008} in Theorem~\ref{thm:addOreSato}. Before proving the main theorem, we first recall the structure theorem on WZ-pairs from~\cite{Chen2019} in Section~\ref{sec:wzpair} which will be used as the base case in our induction proof and then overview some basic properties about orbital decomposition and orbital residues in Section~\ref{sec:orbitaldecomp}. The proof of Theorem~\ref{thm:addOreSato} splits into two steps: the first step is to show that any WZ-form can be decomposed into one exact WZ-form plus several uniform WZ-forms in Section~\ref{sec:decompofWZforms} and the second step is to describe the explicit integer-linear structure of 
uniform WZ-forms in Section~\ref{sec:uniformWZ-forms}. In the last section, we 
present an algorithm for computing the additive structure of WZ-forms that minimizes the uniform part. With the help of this minimal decomposition, we can detect the exactness of WZ-forms by just checking whether the uniform part vanishes.

\section{Preliminaries}

Throughout this paper, let $\bN$ denote the set of nonnegative integers. Let $ K $ be an algebraically closed field of characteristic zero and $ K(x_{1},\dots,x_{n}) $ be the field of rational functions in the variables $ x_1,\ldots,x_n $ over $ K $, which is also written as $ K(\mathbf{x}) $. 
For a multivariate function~$f$, the shift maps~$\si_i$ are defined as 
\[
  \sigma_i f(x_1,\ldots,x_n) = f(x_1,\ldots,x_i+1,\ldots,x_n),
  \quad \forall i \in \{1,\ldots,n\}.
\]
The action of operators on functions is also denoted by~$\bullet$, e.g.,
$\si_i\bullet f = \si_i(f)$.
Analogously, the forward difference operators are defined as
\[
	 \De_{i}(f):=\si_{i}(f)-f,
	 \quad\forall i \in \{1,\dots,n\}.
\]

\begin{definition}[Hypergeometric, hyperarithmetic]\label{def:hypg,hypa}
  A nonzero expression~$H$ is said to be \emph{hypergeometric} over
  $K(\mathbf{x})$ if there exist rational functions
  $ f_1,\dots,f_n \in K(\mathbf{x})$ such that
  \[
    \dfrac{\si_{i} (H)}{H}=f_i , \quad\forall i \in \{ 1, \dots, n\}.
  \]
  Analogously, $H$ is said to be \emph{hyperarithmetic} over
  $K(\mathbf{x})$ if there exist rational functions
  $ f_1,\dots,f_n \in K(\mathbf{x})$ such that
  \[
    \si_{i} (H)-H=f_i, \quad\forall i \in \{ 1, \dots, n\}.
  \]
  In both cases, the rational functions $f_1,\dots,f_n$ are called
  the \emph{certificates} of~$H$.
  Two hypergeometric (resp.\ hyperarithmetic) expressions~$ H_1 $ and~$ H_2 $
  are called \emph{conjugate}, denoted by $ H_1\simeq H_2 $, if they have the
  same certificates.
\end{definition}
Since $ \si_i $ and $\si_j $ commute,
the certificates $ f_1,\dots,f_n $ of a hypergeometric term~$ H $ satisfy the following compatibility conditions:
\begin{equation}\label{eq:compatiblehypergeometric}
 \frac{\si_{i}(f_j)}{f_j}=\frac{\si_j(f_i)}{f_i},
 \quad\forall i,j \in \{ 1,\dots, n\}.
\end{equation}

Similarly, the certificates $ f_1,\dots,f_n $ of a hyperarithmetic expression~$H$
satisfy the following compatibility conditions:
\begin{equation}\label{eq:compatiblehyperarithmetic}
	\si_{i}(f_j)-{f_j}=\si_j(f_i)-{f_i} ,\quad\forall i,j \in \{ 1,\dots, n\}.
\end{equation}
\begin{definition}
	An $n$-tuple $ (f_1,\ldots,f_n) \in K(\mathbf{x})^{n}$ is called a \emph{WZ-form} with respect to $ (\De_1,\ldots,\De_n) $ if $ \De_i(f_j)=\De_j(f_i)$ for all $i,j\in\{ 1,\dots,n\}$. 
	Note that $(f_1,\ldots,f_n)$ is a WZ-form with respect to $ (\De_1,\ldots,\De_n) $ if and only if $\omega = f_1 \delta_1 + \cdots + f_n \delta_n$ is a closed difference $1$-form. 
\end{definition}

The classical Ore--Sato theorem plays an important role in the theory of multivariate hypergeometric terms~\cite{Gelfand, AbramovPetkovsek2002a, Abramov2008}, because it describes the multiplicative structure of nonzero rational functions $ f_1,\dots,f_n \in K(\mathbf{x}) $ that satisfy the compatibility conditions~\eqref{eq:compatiblehypergeometric}. The bivariate case was proven by Ore~\cite{Ore1930} and the multivariate case by Sato~\cite{Sato1990}.  According to this theorem, any multivariate hypergeometric term can be decomposed into a product of a rational function and several factorial terms (which are basically products of Gamma functions).

\begin{theorem}[Ore--Sato theorem]\label{thm:OreSato}
	Let $ f_1,\ldots,f_n \in K(\mathbf{x})$ be nonzero rational functions satisfying the compatibility conditions~\eqref{eq:compatiblehypergeometric}. Then there exist a rational function $ a\in K(\mathbf{x}) $, constants $ \mu_1,\ldots,\mu_n \in K$, a finite set $ V\subset \bZ^n $, and for each $ \mathbf{v}\in V $ a univariate monic rational function $ r_{\mathbf{v}}\in K(z) $ such that
	\[f_j=\frac{\si_{j}(a)}{a}\mu_{j}\prod_{\mathbf{v}\in V}\sideset{}{_0^{v_j}}
	\prod_{\ell} r_{\mathbf{v}}(\mathbf{v}\cdot \mathbf{x}+\ell),
    \]
	where $ \mathbf{v} \cdot \mathbf{x} := v_1x_1+\dots+v_nx_n $ and where the product notation is defined as follows: for~$s, t\in \bZ$,
\begin{equation*}
	\sideset{}{_s^t}\prod_\ell \alpha_\ell:=
	\begin{cases}
        \alpha_s\alpha_{s+1}\cdots\alpha_{t-1},
        & \text{if } t\geq s; \\[0.5em]
        \dfrac{1}{\alpha_t\alpha_{t+1}\cdots\alpha_{s-1}},
        & \text{if } t<s.
	\end{cases}
\end{equation*}
\end{theorem}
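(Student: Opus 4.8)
The plan is to carry out an \emph{orbital decomposition} directly on the tuple $(f_1,\dots,f_n)$. Write each $f_i$ in lowest terms and factor its numerator and denominator into monic irreducibles over $K[\mathbf x]$; let $c_i\in K^{\ast}$ be the leading coefficient of $f_i$ and $\operatorname{ord}_p(f_i)\in\bZ$ the exponent of a monic irreducible $p$. The shift $\si^{\mathbf k}:=\si_1^{k_1}\cdots\si_n^{k_n}$ preserves the top homogeneous part of a polynomial, hence permutes monic irreducibles, and taking the $p$-adic valuation of $\si_i(f_j)/f_j=\si_j(f_i)/f_i$ yields a relation among exponents of polynomials lying in one and the same $\bZ^n$-orbit. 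So the compatibility conditions~\eqref{eq:compatiblehypergeometric} decouple orbit by orbit: it suffices to recover each orbit's contribution to every $f_j$, after which the leading coefficients $c_j$ --- which compatibility does not constrain --- become the constants $\mu_j$.

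Fix the orbit $\mathcal O$ of a non-constant monic irreducible $p$, with stabilizer lattice $L\le\bZ^n$, so $\mathcal O\cong\bZ^n/L$ and the finitely supported exponent functions $\mathbf a_i\colon\bZ^n/L\to\bZ$, $\overline{\mathbf k}\mapsto\operatorname{ord}_{\si^{\mathbf k}(p)}(f_i)$, satisfy the closedness relations $\mathbf a_i(\overline{\mathbf k}-\overline{\mathbf e_j})-\mathbf a_i(\overline{\mathbf k})=\mathbf a_j(\overline{\mathbf k}-\overline{\mathbf e_i})-\mathbf a_j(\overline{\mathbf k})$. Distinguish three cases by $\operatorname{rank}L$. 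If $\operatorname{rank}L=n$ then $\mathcal O$ is finite and $\prod_{q\in\mathcal O}q$ is a non-constant shift-invariant polynomial --- impossible --- so this never happens. If $\operatorname{rank}L\le n-2$, pass to the group algebra $\bZ[\bZ^n/L]$ and let $\widehat{\mathbf a}_i$ be the element representing $\mathbf a_i$; closedness reads $\widehat{\mathbf a}_i\,(u^{\overline{\mathbf e_j}}-1)=\widehat{\mathbf a}_j\,(u^{\overline{\mathbf e_i}}-1)$, so the common fraction $b:=\widehat{\mathbf a}_i/(u^{\overline{\mathbf e_i}}-1)$ has denominator dividing $\gcd_i(u^{\overline{\mathbf e_i}}-1)$; but that gcd is a unit, because $\bigcap_i\{u^{\overline{\mathbf e_i}}=1\}$ is a single point whereas any prime divisor would cut out a positive-dimensional hypersurface contained in it. Hence $b$ is a genuine element of $\bZ[\bZ^n/L]$, $\widehat{\mathbf a}_i=b\,(u^{\overline{\mathbf e_i}}-1)$, and the $\mathcal O$-part of $f_i$ equals $\si_i(B)/B$ with $B=\prod_{\overline{\mathbf k}}\si^{\mathbf k}(p)^{b(\overline{\mathbf k})}$; all such orbits are swept into the prefactor $a$. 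Finally, if $\operatorname{rank}L=n-1$, periodicity of $p$ under $L$ forces $p\in K[\mathbf v\cdot\mathbf x]$ for the primitive vector $\mathbf v$ spanning $L^{\perp}$, whence $p=\mathbf v\cdot\mathbf x-\alpha$ by irreducibility over the algebraically closed $K$, and $\bZ^n/L\cong\bZ$ via $\mathbf k\mapsto\mathbf v\cdot\mathbf k$, so $\mathcal O=\{\mathbf v\cdot\mathbf x-\alpha+m:m\in\bZ\}$.

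For each primitive $\mathbf v$ occurring this way, treat all its linear-form orbits at once: with $\widehat e_i^{(\alpha)}(t)=\sum_m\operatorname{ord}_{\mathbf v\cdot\mathbf x-\alpha+m}(f_i)\,t^m\in\bZ[t,t^{-1}]$ the compatibility conditions become $\widehat e_i^{(\alpha)}(t)(t^{v_j}-1)=\widehat e_j^{(\alpha)}(t)(t^{v_i}-1)$, and since $\gcd(v_1,\dots,v_n)=1$ a B\'ezout identity $\sum_iu_i(t)(t^{v_i}-1)=t-1$ yields a Laurent polynomial $\widehat d^{(\alpha)}:=\sum_iu_i\widehat e_i^{(\alpha)}$ with $\widehat e_i^{(\alpha)}=\widehat d^{(\alpha)}\cdot\frac{t^{v_i}-1}{t-1}$ for all $i$. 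Reading off the coefficients of $\widehat d^{(\alpha)}$ as root multiplicities and combining over the residues $\alpha$ gives a univariate monic rational $r_{\mathbf v}\in K(z)$ for which the direction-$\mathbf v$ part of $f_j$ is precisely $\sideset{}{_0^{v_j}}\prod_\ell r_{\mathbf v}(\mathbf v\cdot\mathbf x+\ell)$, the two branches of the product notation matching the two signs of $v_j$ in $\tfrac{t^{v_j}-1}{t-1}$. Assembling the contributions of all orbits (rational ones into $a$, linear-form ones into the $r_{\mathbf v}$), with $V$ the finite set of primitive vectors that appear, gives the claimed formula. The step I expect to be hardest is the middle case: proving cleanly that orbits with $\operatorname{rank}L\le n-2$ contribute only exact discrete $1$-forms (the ``$\gcd$ is a unit'' claim), and then the bookkeeping that reconciles the rank-$(n-1)$ analysis and the monic normalizations with the $\sideset{}{_0^{v_j}}\prod$ notation and the factor $a$.
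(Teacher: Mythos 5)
The paper does not actually prove Theorem~\ref{thm:OreSato}: it is quoted as known background (Ore for $n=2$, Sato in general), and the paper's own technical effort goes entirely into the additive analog, Theorem~\ref{thm:addOreSato}. So there is no in-paper proof to measure you against; what you have written is, in essence, Sato's original isotropy-group argument (the paper introduces exactly this notion, crediting Sato, when it needs it for the additive case), and I find the sketch sound. The valuation-wise decoupling of the compatibility conditions into shift-orbits, the trichotomy on $\operatorname{rank}L$, the group-algebra reformulation of closedness, the codimension-$\geq 2$ argument showing that $\gcd_i(u^{\overline{\mathbf e_i}}-1)$ is a unit when $\operatorname{rank}L\le n-2$, and the B\'ezout identity $\sum_i u_i(t)(t^{v_i}-1)=t-1$ in $\bZ[t,t^{-1}]$ for primitive $\mathbf v$ in the rank-$(n-1)$ case all check out, and together they do yield the stated normal form with the leading coefficients surviving as the $\mu_j$. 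It is worth noting that the paper deliberately does \emph{not} mirror this architecture for the additive theorem: there, unique factorization of exponent functions is unavailable, so the authors instead use an orbital decomposition with respect to $x_1$ and the full group $G$, reduce to a bivariate lemma (Lemma~\ref{lemma:cyclicproof}), and induct on $n$. Your multiplicative argument and the paper's additive one are therefore genuinely different in structure, even though both ultimately hinge on stabilizers of orbits of irreducible factors.

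Three points in your sketch deserve to be made explicit, though none is a real gap. First, for $\bZ[\bZ^n/L]$ to be a Laurent polynomial ring (a UFD in which ``prime divisors cut out hypersurfaces'' makes sense), you need $\bZ^n/L$ to be torsion-free, i.e., $L$ saturated; this is automatic because a polynomial satisfying $p(\mathbf x+m\mathbf k)=p(\mathbf x)$ for a single nonzero $m$ is already constant in the direction $\mathbf k$ (a periodic univariate polynomial is constant), and the same observation disposes of the rank-$n$ case and gives $p=\mathbf v\cdot\mathbf x-\alpha$ in the rank-$(n-1)$ case. Second, if some $\overline{\mathbf e_i}=0$ the quotient $\widehat{\mathbf a}_i/(u^{\overline{\mathbf e_i}}-1)$ is undefined; define $b$ from an index $j$ with $\overline{\mathbf e_j}\neq 0$ (which exists since $\operatorname{rank}L<n$) and verify the degenerate relations $\widehat{\mathbf a}_i=0$ separately. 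Third, ``monic'' must be taken with respect to a graded term order so that the shifts really do permute monic irreducibles.
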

Christopher's theorem~\cite{Christopher1999,Zoladek1998} is an analog of the Ore--Sato theorem in the continuous case. Other analogs concern the $ q $-discrete case~\cite{DuLi2019} and the continuous-discrete case~\cite{Chen2011}. In this paper, we want to explore the additive structure of nonzero rational functions $ f_1,\dots,f_n \in K(\mathbf{x}) $ satisfying the compatibility conditions~\eqref{eq:compatiblehyperarithmetic}, i.e., $(f_1,\dots,f_n)$ is a WZ-form. Our main result, which is stated in the following theorem, reveals this additive structure and therefore implies an additive decomposition of hyperarithmetic expressions.

\begin{theorem}[Additive Ore--Sato theorem]\label{thm:addOreSato}
	Let $ f_1,\ldots,f_n \in K(\mathbf{x})$ be nonzero rational functions satisfying the compatibility conditions~\eqref{eq:compatiblehyperarithmetic}. Then there exist a rational function $ a\in K(\mathbf{x}) $, constants $ \mu_1,\ldots,\mu_n \in K$, a finite set $ V\subset \bZ^n $, and for each $ \mathbf{v}\in V $ a univariate monic rational function $ r_\mathbf{v}\in K(z) $ such that
	\[
    f_j=\si_{j}(a)-a+\mu_{j}+\sum_{\mathbf{v}\in V}\sideset{}{_0^{v_j}}
	\sum_{\ell}r_{\mathbf{v}}(\mathbf{v} \cdot \mathbf{x}+\ell),
    \]
	where $ \mathbf{v} \cdot \mathbf{x} := v_1x_1+\dots+v_nx_n $ and where we use the sum notation (for~$s, t\in \bZ$)
    \[
	\sideset{}{_s^t}\sum_\ell \alpha_\ell:=
	\begin{cases}
		 \alpha_s+\alpha_{s+1}+\dots+\alpha_{t-1}, & \text{if } t\geq s; \\[0.5em]
		-(\alpha_t+\alpha_{t+1}+\dots+\alpha_{s-1}), & \text{if } t<s.
	\end{cases}
    \]
\end{theorem}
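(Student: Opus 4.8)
\medskip\noindent\textbf{Proof strategy.}
One direction is immediate: a tuple $(\De_1(a),\ldots,\De_n(a))$ with $a\in K(\mathbf{x})$ is trivially a WZ-form, a constant tuple is trivially a WZ-form, and for fixed $\mathbf{v}\in\bZ^n$ and $r\in K(z)$ the building block $\eta_{\mathbf{v},r}$ whose $j$-th entry is $\sideset{}{_0^{v_j}}\sum_{\ell}r(\mathbf{v}\cdot\mathbf{x}+\ell)$ is a WZ-form: using $\sideset{}{_s^t}\sum+\sideset{}{_t^u}\sum=\sideset{}{_s^u}\sum$ one checks that $\si_i\bigl((\eta_{\mathbf{v},r})_j\bigr)-(\eta_{\mathbf{v},r})_j=\sideset{}{_0^{v_i+v_j}}\sum_{\ell}r(\mathbf{v}\cdot\mathbf{x}+\ell)-\sideset{}{_0^{v_i}}\sum_{\ell}r(\mathbf{v}\cdot\mathbf{x}+\ell)-\sideset{}{_0^{v_j}}\sum_{\ell}r(\mathbf{v}\cdot\mathbf{x}+\ell)$, which is symmetric in $i$ and $j$. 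Thus the whole content is the converse, which I would prove by induction on $n$; for $n=1$ it is clear (split $f_1$ into its polynomial part, which is $\De_1(a)+\mu_1$ by Faulhaber summation, plus its proper part, which after grouping poles into $\bZ$-orbits is a sum of terms $r_O(x_1)$ that combine into one $r_{(1)}$).

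For $n\ge2$, form the partial fraction decomposition of each $f_j$ with respect to $x_n$. Since the shifts $\si_i$ permute the monic irreducible polynomials of positive $x_n$-degree within each $\bZ^n$-orbit while preserving coprimality and $x_n$-degree, this writes $f_j=\sum_O f_{j,O}+\hat f_j$, where $O$ runs over these orbits and $\hat f_j\in K(\mathbf{x}')[x_n]$ ($\mathbf{x}':=(x_1,\ldots,x_{n-1})$) has denominator free of $x_n$; by uniqueness of the decomposition, conditions~\eqref{eq:compatiblehyperarithmetic} decouple, so each $(f_{1,O},\ldots,f_{n,O})$ and also $(\hat f_1,\ldots,\hat f_n)$ is a WZ-form. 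For the latter, $\De_n$ is surjective on $K(\mathbf{x}')[x_n]$ (Faulhaber summation), so choosing $g$ with $\De_n(g)=\hat f_n$ and subtracting the exact tuple $(\De_1(g),\ldots,\De_n(g))$ leaves a tuple whose last entry is $0$; closedness then forces the remaining entries to be independent of $x_n$, giving a WZ-form in the $n-1$ variables $\mathbf{x}'$, to which the inductive hypothesis applies (padding the resulting direction vectors with a zero last coordinate).

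The heart is the single-orbit tuple $(f_{1,O},\ldots,f_{n,O})$. Let $p$ be a monic irreducible representative of $O$ of positive $x_n$-degree $d$, let $L\subseteq\bZ^n$ be its shift-stabilizer, $\Lambda:=\bZ^n/L\cong\bZ^e$ with $e:=n-\mathrm{rank}\,L\ge1$, and let $V$ be the $K(\mathbf{x}')$-space of numerators of $x_n$-degree ${<}\,d$, which carries commuting operators induced by $\si_1,\ldots,\si_n$. Expanding each $f_{j,O}$ into partial fractions along the orbit, one multiplicity $k$ at a time, $f_{j,O}=\sum_{\bar{\mathbf{m}}\in\Lambda}N_{j,\bar{\mathbf{m}}}\big/p_{\bar{\mathbf{m}}}^{\,k}$, identifies the tuple with a $1$-cocycle of the $\bZ^n$-module $M:=V\otimes_K K[\Lambda]$, so the obstruction to the tuple being \emph{exact} (of the form $(\De_1(a),\dots,\De_n(a))$ with $a\in K(\mathbf{x})$) is its class in $H^1(\bZ^n,M)$. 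After untwisting the $\Lambda$-part of the action on the numerators, $M$ becomes the induced module $\mathrm{Ind}_L^{\bZ^n}(\mathrm{Res}_LV)$; writing $\bZ^n\cong L\oplus\Lambda$ and applying the K\"unneth formula over $K$ gives
\[
  H^1(\bZ^n,M)\cong\bigl(H^1(L,V)\otimes H^0(\Lambda,K[\Lambda])\bigr)\oplus\bigl(H^0(L,V)\otimes H^1(\Lambda,K[\Lambda])\bigr),
\]
where $K[\Lambda]$ carries the regular $\Lambda$-action. The Koszul resolution of the trivial $\bZ^e$-module shows $H^q(\Lambda,K[\Lambda])=0$ for $q\ne e$ and $=K$ for $q=e$; hence $H^1(\bZ^n,M)=0$ unless $e=1$, in which case it equals $H^0(L,V)=V^L$. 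Now $e=1$ holds exactly when $p$ is shift-equivalent to a linear form $\mathbf{v}\cdot\mathbf{x}-c$, because an irreducible polynomial invariant under a corank-one lattice of shifts factors through one linear form and hence, $K$ being algebraically closed, is itself linear; in that case $d=1$, $V=K(\mathbf{x}')$, and since a rational function invariant under a finite-index lattice of shifts is constant, $V^L=K$. Collecting the one constant obtained per multiplicity $k$ and per orbit of a given direction assembles a monic $r\in K(z)$. Transporting the class back, $(f_{j,O})_j$ is exact when $O$ is non-linear, and equals an exact tuple plus $\eta_{\mathbf{v}_O,r_O}$ when $O$ has direction $\mathbf{v}_O$.

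Finally one reassembles: all exact contributions (the $a^{(O)}$ together with the $g$ and the $a'$ produced in the inductive step) sum to a single $a\in K(\mathbf{x})$; by the $K$-linearity of $r\mapsto\eta_{\mathbf{v},r}$ the contributions sharing a primitive direction $\mathbf{v}$ combine into one monic $r_{\mathbf{v}}$; the remaining constants are the $\mu_j$; and $V\subseteq\bZ^n$ is finite because the $f_j$ have finitely many poles, hence only finitely many linear orbits, hence finitely many directions, occur along the recursion. The main obstacle is the single-orbit step: although the outcome is clean, making the cocycle identification rigorous requires care with the unipotent part of the $x_n$-shift on $V$ when $d>1$, with a possibly non-saturated stabilizer $L$ (harmless in characteristic zero but needing comment), and with the explicit matching of principal parts when turning the $H^1$-class into an $\eta_{\mathbf{v},r}$. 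The homological input --- the induced-module description of $M$, the K\"unneth formula, and the cohomology of the regular module --- is exactly what makes the dichotomy ``linear versus non-linear pole,'' i.e.\ the additive Ore--Sato phenomenon, transparent.
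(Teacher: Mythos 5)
Your proposal is correct in its overall architecture but proves the crucial step by a genuinely different method. The outer layer coincides with the paper's proof of Theorem~\ref{thm:decompthm}: a partial-fraction/orbital splitting with respect to one distinguished variable and the full shift group, the observation that the compatibility conditions decouple across orbits and multiplicities (our Lemma~\ref{lemma:invariantspace} and Theorem~\ref{thm:decompkeepprop}), and an induction on $n$ that strips off the part free of that variable. The difference is the single-orbit analysis. The paper reduces it to the bivariate cyclic-pair Lemma~\ref{lemma:cyclicproof}, deduces integer-linearity via Lemma~\ref{lemma:pairwiseinteger}, and then makes the uniform forms explicit through a unimodular change of variables (Theorem~\ref{thm:structureUniWZ-forms}); you instead identify the orbit component of multiplicity $k$ with a $1$-cocycle of $\bZ^n$ valued in $M\cong\mathrm{Ind}_L^{\bZ^n}(V)$ and compute $H^1(\bZ^n,M)$ by K\"unneth and the Koszul complex, obtaining $0$ unless the stabilizer $L$ has corank one, in which case $H^1\cong V^L\cong K$. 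This is sound: the Koszul model of $H^1(\bZ^n,-)$ is exactly closed $n$-tuples modulo $\bigl(\De_1(a),\dots,\De_n(a)\bigr)$; $L$ is automatically saturated (a polynomial periodic along $k\mathbf{m}$ is constant along $\mathbf{m}$, so your worry there is vacuous); and in the corank-one case irreducibility over the algebraically closed $K$ forces $p$ linear, so $d=1$ and $V^L=K$ (likewise the unipotent action on $V$ for $d>1$ never enters, since then the relevant K\"unneth factor vanishes). Your route is more conceptual and explains the ``linear versus non-linear pole'' dichotomy uniformly in $n$, whereas ours is elementary and directly algorithmic (it is what Algorithm~\ref{alg:WZFormDecomp} implements). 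The one point you must still supply is the realization step in the corank-one case: knowing $\dim_K H^1(\bZ^n,M)=1$ does not yet give the normal form; you must verify that $\eta_{\mathbf{v},(z-c)^{-k}}$ represents a \emph{nonzero} class (equivalently, is not an exact WZ-form), so that subtracting a suitable multiple of it from the given cocycle leaves a coboundary. This is precisely the statement that a nonzero discrete residue obstructs summability, and it is where, in our proof, Lemma~\ref{lemma:cyclicproof} together with Definition~\ref{def:polynomialresidue} does the work; it is routine but does not follow automatically from the dimension count.
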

In the proof of the classical Ore--Sato theorem, the complete irreducible factorization was
used as a key ingredient. When it comes to the additive case,
we need another auxiliary tool, the so-called orbital decomposition, which compensates the missing of partial fraction decompositions of multivariate rational functions.
Hence, our additive Ore--Sato theorem is not just a straight-forward
analog of its multiplicative predecessor, but is based on a significantly
different proof strategy.

\section{WZ-forms and structure of WZ-pairs}\label{sec:wzpair}

The goal of this section is to introduce some notions that will help us to describe the proofs in the later sections more concisely.
\begin{definition}[(Pairwise) shift-invariant]\label{def:shiftinvariant}
  A rational function $ f\in K(\mathbf{x}) $ is called \emph{shift-invariant}
  if there exists a nonzero integer vector $ \mathbf{v}\in \bZ^n $ such that
  $ f(\mathbf{x}+\mathbf{v})=f(\mathbf{x}) $.
  It is called \emph{pairwise shift-invariant} if for all $i,j \in \{1,\ldots,n\}$,
  there exist $ s,t\in \bZ $, not both zero, such that $ \si_i^{s}(f)=\si_j^{t}(f) $.
\end{definition}
\begin{definition}[Integer-linearity]\label{def:integerlinear}
  An irreducible polynomial $ p\in K[\mathbf{x}] $ is called
  \emph{integer-linear} over~$K$ if there exist a univariate
  polynomial $ P\in K[z]$ and a nonzero integer vector
  $ \mathbf{v}\in \bZ^n $ such that
  \[
    p(\mathbf{x})=P(\mathbf{v} \cdot \mathbf{x}).
  \]
  Without loss of generality, we can assume that $ \gcd(v_1,\ldots,v_n)=1 $,
  because a common factor can be extracted and absorbed by~$P$.
  Such a vector $\mathbf{v}$ is called the \emph{integer-linear type} of~$p$.
  We say that $f \in K(\mathbf{x})$ is integer-linear of type $\mathbf{v}$
  if all the irreducible factors of its numerator and its denominator are
  of the common integer-linear type~$\mathbf{v}$.
\end{definition}
There is an efficient algorithm for the computation of the integer-linear
decomposition of multivariate polynomials~\cite{Huang2019}, which will be used
for computing additive decompositions in Section~\ref{sec:implementation}. The
next lemma reveals the equivalence between the pairwise shift-invariance and
the integer-linearity of a rational function.
\newcommand\citePropSevenAP{\cite[Proposition~7]{AbramovPetkovsek2002a}}
\begin{lemma}[\citePropSevenAP]\label{lemma:pairwiseinteger}
	A rational function $ f\in K(\mathbf{x}) $ is pairwise shift-invariant if and only if there exist a nonzero integer vector $ \mathbf{v} \in \bZ^n $ and a univariate rational function $ r\in K(z) $ such that
    \[f(\mathbf{x})=r(\mathbf{v} \cdot \mathbf{x}),\]
	i.e., $f$ is integer-linear of type $\mathbf{v}$.
\end{lemma}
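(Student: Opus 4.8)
The plan is to establish the two implications of the equivalence separately; only the converse is substantial. For the direction ``$f$ integer-linear $\Rightarrow f$ pairwise shift-invariant'': if $f(\mathbf{x})=r(\mathbf{v}\cdot\mathbf{x})$ and $\si,\tau\in\{\si_1,\dots,\si_n\}$ are the shifts in the variables $x_i$ and $x_j$, then the operator $\si^{s}\tau^{-t}$ changes the argument $\mathbf{v}\cdot\mathbf{x}$ by $sv_i-tv_j$, so taking $(s,t)=(v_j,v_i)$ --- or $(s,t)=(1,0)$ in the degenerate case $v_i=v_j=0$ --- gives $(s,t)\neq(0,0)$ with $\si^{s}(f)=\tau^{t}(f)$.

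For the converse I would work with the invariance lattice
\[
  L:=\{\mathbf{w}\in\bZ^n : f(\mathbf{x}+\mathbf{w})=f(\mathbf{x})\},
\]
which is a subgroup of $\bZ^n$. Rewriting each hypothesis $\si_i^{s}(f)=\si_j^{t}(f)$ in the form $f(\mathbf{x}+s\mathbf{e}_i-t\mathbf{e}_j)=f(\mathbf{x})$ shows that $L$ contains, for every pair $i<j$, a nonzero integer vector supported on $\{i,j\}$. The crux is the combinatorial claim that \emph{any} subgroup of $\bZ^n$ with this property has rank at least $n-1$, and I would prove it by duality: put $U:=\{\mathbf{c}\in\bQ^n : \mathbf{c}\cdot\mathbf{w}=0 \text{ for all }\mathbf{w}\in L\}$, so that $\operatorname{rank}L=n-\dim U$; for each pair $i<j$, a nonzero $\mathbf{w}\in L$ supported on $\{i,j\}$ imposes the single relation $c_iw_i+c_jw_j=0$ on every $\mathbf{c}\in U$, so the projection of $U$ onto the $(i,j)$-coordinate plane lies in a line. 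If $\dim U\geq2$, two linearly independent vectors $\mathbf{c},\mathbf{c}'\in U$ would satisfy $c_ic_j'-c_jc_i'\neq0$ for some $i<j$, making the projection of $U$ onto that $(i,j)$-plane equal to all of $\bQ^2$ --- contradicting the previous sentence. Hence $\dim U\leq1$ and $\operatorname{rank}L\geq n-1$.

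The remainder is bookkeeping. Using an adapted basis (Smith normal form), choose a $\bZ$-basis $\mathbf{w}_1,\dots,\mathbf{w}_n$ of $\bZ^n$ and positive integers $d_1,\dots,d_k$, with $k=\operatorname{rank}L\geq n-1$, such that $d_1\mathbf{w}_1,\dots,d_k\mathbf{w}_k$ is a basis of $L$. Letting $\mathbf{c}_1,\dots,\mathbf{c}_n$ be the rows of the inverse of $[\mathbf{w}_1\mid\dots\mid\mathbf{w}_n]\in\mathrm{GL}_n(\bZ)$ and setting $y_\ell:=\mathbf{c}_\ell\cdot\mathbf{x}$, one gets $K(\mathbf{x})=K(\mathbf{y})$, and the substitution $\mathbf{x}\mapsto\mathbf{x}+d_j\mathbf{w}_j$ becomes $y_j\mapsto y_j+d_j$; hence $f$, regarded as an element of $K(\mathbf{y})$, is fixed by $y_j\mapsto y_j+d_j$ for $j=1,\dots,n-1$. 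I would then invoke the elementary fact that a rational function fixed by a nontrivial integer shift in one of its variables does not involve that variable --- written as a reduced quotient of polynomials, both numerator and denominator must be fixed by the shift up to a common scalar, which comparison of leading and next-to-leading coefficients forces to be trivial --- and apply it successively for $y_1,\dots,y_{n-1}$. This gives $f=r(y_n)=r(\mathbf{c}_n\cdot\mathbf{x})$ with $\mathbf{c}_n\in\bZ^n\setminus\{0\}$ (a row of an invertible integer matrix), which is the desired form; in the exceptional case $k=n$ the same argument makes $f$ constant, a trivial instance with any nonzero $\mathbf{v}$. The only genuinely nonroutine point is the rank estimate $\operatorname{rank}L\geq n-1$; everything else is a coordinate change adapted to $L$ together with the one-variable periodicity fact.
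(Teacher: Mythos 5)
Your proof is correct, but it follows a different route from the one the paper relies on. The paper does not prove the lemma itself: it cites Abramov--\Petkovsek{} and, in Remark~\ref{rmk:bivariateintegerlinear}, indicates the intended argument, namely to settle the bivariate case directly (a relation $\si_x^s(f)=\si_y^t(f)$ forces $f$ to be integer-linear of a type read off from $(s,t)$) and then to lift to $n$ variables ``by an inductive argument''. You instead give a single global argument: collect all the hypotheses into the invariance lattice $L=\{\mathbf{w}\in\bZ^n : f(\mathbf{x}+\mathbf{w})=f(\mathbf{x})\}$, prove by duality that a subgroup of $\bZ^n$ containing, for every pair $i<j$, a nonzero vector supported on $\{i,j\}$ has rank at least $n-1$ (the projection-onto-coordinate-planes argument is sound: two independent elements of $L^{\perp}$ would have a nonvanishing $2\times 2$ minor, contradicting that each projection of $L^{\perp}$ lies in a line), and then change coordinates by a unimodular matrix adapted to $L$ so that $f$ becomes periodic, hence constant, in $n-1$ of the new variables. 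This is appealing for two reasons: it is self-contained (no appeal to the bivariate case as a black box, and the ``periodic rational function in one variable is constant in that variable'' fact is elementary and correctly justified via leading coefficients in characteristic zero), and the unimodular coordinate change is exactly the device the paper itself uses later (Lemmas~\ref{lemma:differencehomomorphism} and~\ref{lemma:integermatrix}), so your argument fits the paper's toolkit well. What the induction-on-variables route buys instead is that it isolates the bivariate phenomenon once and reuses it, which is why the paper can get away with stating only Remark~\ref{rmk:bivariateintegerlinear}. Two cosmetic points: you should say explicitly that $L$ contains a \emph{nonzero} vector even when $n=1$ or when all the pairwise relations degenerate (your degenerate case $(s,t)=(1,0)$ in the forward direction shows you are aware of this), and note that the vector $\mathbf{v}=\mathbf{c}_n$ you produce need not have coprime entries, which is harmless since the lemma does not require it.
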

Given the integer-linear type of~$f$, one can easily see that $f$ is pairwise shift-invariant. In contrast, the opposite direction of Lemma~\ref{lemma:pairwiseinteger} is not that obvious. However, it follows, by using an inductive argument, from the bivariate case that is illustrated in the following remark.
\begin{remark}\label{rmk:bivariateintegerlinear}
  Let $f\in K(x,y)$ satisfy $ \si_x^s\si_y^t(f)=f $ with $ s, t \in \bZ $
  not both zero. If $ s=0 $, then $ f $ is free of $ y $, which implies that
  $ f $ is integer-linear of type~$(1,0)$. Similarly, if $t=0 $, then $ f $ is
  integer-linear of type~$(0,1)$. If both of them are nonzero, then $ f$ is
  integer-linear of type~$(\bar{t},\bar{s})$, where $ \bar{t}=t/\gcd(s,t) $
  and $ \bar{s}=s/\gcd(s,t) $.
\end{remark}
According to Definition~\ref{def:integerlinear}, an element in $ K $ can be viewed as having any integer-linear type. But for a non-constant rational function whose factors are of the same integer-linear type, its type is unique. Such a type remains unchanged under addition and under application of shift operators.

We now introduce two special kinds of WZ-forms, namely exact WZ-forms and
uniform WZ-forms, which will play an important role in describing the
structure of general WZ-forms (see Theorem~\ref{thm:addOreSato}).
\begin{definition}[Exact WZ-form]
  A WZ-form $(f_1,\ldots,f_n)$ with respect to $ (\De_1,\ldots,\De_n) $ is
  said to be \emph{exact} if there exists $ g\in K(\mathbf{x}) $ such that
  $ f_i=\De_{i}(g)$ for all $i\in \{1,\ldots,n\} $.
\end{definition}
\begin{definition}[Uniform WZ-form]\label{def:uniformWZ-formdefinition}
  A WZ-form $(f_1,\ldots,f_n)$ with respect to $ (\De_1,\ldots,\De_n) $ is
  called a \emph{uniform WZ-form} if there exists an integer vector
  $\mathbf{v}$ such that each $f_i$ is integer-linear of type~$\mathbf{v}$.
\end{definition}
\begin{remark}\label{rmk:exactuniform}
  A WZ-form can be both exact and uniform: for example,
  $\bigl(\De_x(\frac{1}{x+y}),\De_y(\frac{1}{x+y})\bigr)$ is an exact WZ-pair
  where each component is integer-linear of type~$(1,1)$.
\end{remark}
In the remaining part of this section, we recall the structure
theorem~\cite{Chen2019} on WZ-pairs that is described in terms of exact and
cyclic pairs, see Theorem~\ref{thm:structureWZ-pairs}.
\begin{definition}[Cyclic operator]\label{def:cyclicoperator}
  Let $ G=\left\langle \si_1,\dots,\si_n \right\rangle$ be the free abelian
  group generated by the shift operators $\si_1,\dots,\si_n$.
  For any $ m\in \bZ $ and $ \tau \in G  $, define
  \[
    \frac{\tau^{m}-1}{\tau -1}:=
    \begin{cases}
      1+\tau+\dots+\tau^{m-1}, & \text{if } m > 0;\\
      0, &\text{if } m=0;\\
      -(\tau^{m}+\dots+\tau^{-1}), & \text{if } m <0.
    \end{cases}
  \]
\end{definition}
\begin{definition}[Cyclic pair]
  A WZ-pair $(f,g)$ w.r.t.\ $(\De_x,\De_y)$ is called a \emph{cyclic pair}
  if there exists $h\in K(x,y)$ that satisfies $\si_x^s(h)=\si_y^t(h)$ for some $s,t \in \bZ$, not both zero, such that
	\[
    f=\dfrac{\si_y^t-1}{\si_y-1}\bullet h \quad \text{and} \quad g=\dfrac{\si_x^s-1}{\si_x-1}\bullet h.
    \]
\end{definition}
Note that any cyclic pair is a uniform WZ-pair by Remark~\ref{rmk:bivariateintegerlinear}. The following theorem shows that each WZ-pair can be decomposed into one exact WZ-pair plus several cyclic pairs.
\newcommand\citeThmWZpairs{\cite[Theorem~3]{Chen2019}}
\begin{theorem}[Structure of WZ-pairs, \citeThmWZpairs]\label{thm:structureWZ-pairs}
  Any WZ-pair can be decomposed into one exact WZ-pair plus several cyclic WZ-pairs.
\end{theorem}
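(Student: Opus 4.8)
The plan is to decompose a WZ-pair along the poles of its entries into ``atomic'' pairs supported on a single shift-orbit of irreducible polynomials, together with a polynomial pair, and to observe that only the orbits generated by integer-linear polynomials can carry a nonzero cyclic part. Concretely, given a WZ-pair $(f,g)$, use the partial-fraction decomposition over $K[x,y]$ to write $f=f_{\mathrm{pol}}+\sum_O f_O$, where $f_{\mathrm{pol}}\in K[x,y]$ and, for each orbit $O$ of $G=\langle\si_x,\si_y\rangle$ acting on associate classes of irreducibles, $f_O$ collects the partial-fraction terms whose denominator is a power of a member of $O$; likewise for $g$. Each $f_O$ is strictly proper in $x$, and since $\si_x,\si_y$ permute each orbit and preserve strict properness in $x$, applying $\De_y$ to the decomposition of $f$ and $\De_x$ to that of $g$ and matching the (unique) polynomial part and the orbit-components shows that $(f_{\mathrm{pol}},g_{\mathrm{pol}})$ is a WZ-pair and each $(f_O,g_O)$ is a WZ-pair, only finitely many of the latter being nonzero. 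Because a finite sum of decompositions of the form ``exact $+$ cyclics'' is again of that form, it then suffices to treat each atom.

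\textbf{The exact atoms.} For a polynomial pair $(f,g)$: choose $G_1\in K[x,y]$ with $\De_xG_1=f$ (a discrete antiderivative in $x$, e.g.\ via Bernoulli polynomials); then $\De_x(\De_yG_1-g)=\De_yf-\De_xg=0$, so $\De_yG_1-g$ is $\si_x$-invariant, hence a polynomial $\phi(y)$; choosing $\psi(y)$ with $\De_y\psi=-\phi$, the potential $G=G_1+\psi$ gives $\De_xG=f$ and $\De_yG=g$, so $(f,g)$ is exact. For an atom supported on an orbit $O$ generated by an irreducible $p$ that is \emph{not} integer-linear: by Lemma~\ref{lemma:pairwiseinteger} and Remark~\ref{rmk:bivariateintegerlinear} the map $(i,j)\mapsto\si_x^i\si_y^j(p)$ is injective on $\bZ^2$, so one may write $f=\sum_{(i,j)}\si_x^i\si_y^j(\hat f_{ij})$ and $g=\sum_{(i,j)}\si_x^i\si_y^j(\hat g_{ij})$ with $\hat f_{ij},\hat g_{ij}$ supported at $p$ and finitely many nonzero; matching the WZ-identity block by block yields the discrete closedness $\hat f_{i,j-1}-\hat f_{ij}=\hat g_{i-1,j}-\hat g_{ij}$ for all $(i,j)$. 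Summing over $i$ (a finite telescope) forces $\sum_i\hat f_{ij}$ to be independent of $j$, hence zero; therefore $\hat G_{ij}:=-\sum_{i'\le i}\hat f_{i'j}$ has finite support and is readily checked to satisfy $\hat G_{i-1,j}-\hat G_{ij}=\hat f_{ij}$ and $\hat G_{i,j-1}-\hat G_{ij}=\hat g_{ij}$, so $(f,g)$ is exact with potential $\sum_{(i,j)}\si_x^i\si_y^j(\hat G_{ij})$.

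\textbf{The integer-linear atom.} Finally let $(f,g)$ be supported on an orbit generated by an integer-linear irreducible $p$ of type $\mathbf v=(a,b)$ with $\gcd(a,b)=1$; this is where cyclic pairs are born. Change to coordinates $(u,w)=(ax+by,\,cx+dy)$ with $ad-bc=1$; then $f$ and $g$ become polynomials in $w$ with coefficients in $K(u)$ (their denominators, being products of shifts of $p$, are free of $w$), while $\si_x,\si_y$ act as lattice shifts in $(u,w)$. Expanding $\De_yf=\De_xg$ in powers of $w$ gives a triangular system whose leading equation reads $(\rho^{b}-1)f_{\mathrm{top}}=(\rho^{a}-1)g_{\mathrm{top}}$ in $K(u)$, where $\rho$ is the shift $u\mapsto u+1$. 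The plan is to invoke the one-variable theory of rational summability: a cyclic pair $\bigl(\tfrac{\si_y^{a}-1}{\si_y-1}\bullet h,\ \tfrac{\si_x^{b}-1}{\si_x-1}\bullet h\bigr)$ with $h\in K(u)$ automatically satisfies $\si_x^{b}(h)=\si_y^{a}(h)$ (since $h$ is free of $w$) and $h$ is integer-linear of type $\mathbf v$; one then shows that, after subtracting a suitable such cyclic pair and a suitable exact pair, the leading $w$-coefficients are killed, and recurses on the $w$-degree, the degree-$0$ case being a uniform type-$\mathbf v$ WZ-pair handled by the same leading-equation argument. The degenerate sub-cases $a=0$ or $b=0$ (where $p$ is free of one variable, so one of $\De_x,\De_y$ annihilates the orbit) are similar and easier.

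\textbf{Main obstacle.} The hard part will be this last step. After the coordinate change the operators $\si_x$ and $\si_y$ are \emph{mixed} shifts in $u$ and $w$ (neither is a pure $u$- or $w$-shift unless $a=\pm1$), so the sub-leading equations of the triangular system genuinely couple to the leading one; one must prove that the leading obstruction is always realized by a genuine cyclic pair and that, once it and the matching exact pair are removed, the remaining coupled system $\De_xG=f$, $\De_yG=g$ is consistent over $K(u,w)$. This combines the additive one-variable summability criterion (to certify that the relevant rational function becomes summable after peeling off the cyclic part) with a compatibility argument in the spirit of the polynomial case --- correcting a partial potential by a summand annihilated by one of the $\De_i$ --- to assemble a single potential. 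Carrying these out, together with the elementary degenerate sub-cases, completes the proof.
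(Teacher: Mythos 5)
There is a genuine gap, and it sits exactly where the real content of this theorem lies. Your overall route — group the partial-fraction components of $(f,g)$ by shift-orbits of irreducible denominators, show the polynomial and non-integer-linear orbit components are exact, and extract the cyclic pairs from the integer-linear orbits — is the same strategy the paper follows (it recalls this theorem from Chen et al.\ 2019 and reuses the machinery for the multivariate Theorem~\ref{thm:decompthm}). But the decisive step, namely that an orbit component supported on an integer-linear irreducible decomposes into one exact pair plus one cyclic pair, is precisely the paper's Lemma~\ref{lemma:cyclicproof} (Lemma~6 of \cite{Chen2019}): after Abramov's reduction one has $f=\sum_{\ell=0}^{n}a_\ell/\si_y^\ell(b^m)$ with the $\si_y^\ell(b)$ in distinct $\langle\si_x\rangle$-orbits, and $\si_x$-summability of $\De_y(f)$ forces $a_\ell=\si_y^\ell(a_0)$ and $\si_y^{n+1}(a_0/b^m)=\si_x^{t}(a_0/b^m)$, which is what makes $f$ a cyclic image of $a_0/b^m$. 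In your write-up this step is replaced by a coordinate change and a "plan" to recurse on the $w$-degree of a triangular system, and you explicitly flag that proving the leading obstruction is realized by a genuine cyclic pair, and that the remaining coupled system is consistent, is the hard part you have not carried out. Without that argument the theorem is not proved.

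A secondary problem is the setup of the decomposition. A "partial-fraction decomposition over $K[x,y]$" with $f_{\mathrm{pol}}\in K[x,y]$ and polynomial numerators does not exist for general rational functions (e.g.\ $1/(xy)$ cannot be written as $p+a/x+b/y$ with $p,a,b\in K[x,y]$). The correct decomposition is taken with respect to one variable over the field generated by the others, as in~\eqref{eq:orbitaldecomp}; but then the "polynomial part" lies in $K(y)[x]$, and your argument for its exactness (solve $\De_y\psi=-\phi$ for $\phi$ depending only on $y$) breaks down when $\phi$ is a non-summable rational function of $y$: the pair $(0,1/y)$ is cyclic but not exact. So the polynomial/leftover part must be handled as a lower-dimensional WZ-form (which in the bivariate case again produces cyclic pairs), not declared exact. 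Your treatment of the non-integer-linear orbits via the telescoping of the discrete closedness relation is sound and matches the spirit of the paper's use of orbital residues, but the two gaps above — above all the integer-linear case — mean the proof is incomplete.
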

When it comes to a multivariate generalization of
Theorem~\ref{thm:structureWZ-pairs}, cyclic pairs will be replaced by uniform
WZ-forms, see Theorem~\ref{thm:decompthm}. For this purpose, we define orbital decompositions and orbital residues of rational functions in the next section.

\section{Orbital decompositions and orbital residues}\label{sec:orbitaldecomp}

In this section, we recall the notion of orbital decomposition of a rational
function, which was first used for studying the existence problem of
telescopers~\cite{Chen2016}, and propose a modified definition of discrete
residues, which were originally introduced in~\cite{ChenSinger2014}
with polynomial and elliptic analogs in~\cite{Hou2015, Singer2018}.

\begin{definition}[Shift-equivalence]\label{def:shiftequivalence}
  Let $ F $ be a subgroup of $\langle\si_1,\ldots,\si_n\rangle$.
  For $a,b\in K(\mathbf{x})$, we say that $a$ and $b$ are
  \emph{$F$-equivalent}, denoted by $a\sim_{F} b$, if there exists
  $\tau \in F  $ with $ \tau(a)=b $. We call the set
  \[
    [a]_{F}:=\left\lbrace \tau(a) \ \vert \ \tau \in F\right\rbrace
  \]
  the \emph{$F$-orbit of $a$}. Note that $ a\sim_{F} b $ implies $ [a]_{F}=[b]_{F} $.
\end{definition}

\begin{example}\label{ex:shiftequivalence}
    Let $b=4x+6y+5z$ and let $F$ be the subgroup $\langle\si_x,\si_y\rangle$
    of $G=\langle\si_x,\si_y,\si_z\rangle$. Then $b$ and $b+1$ are $G$-equivalent
    because $\tau(b)=b+1$ for $\tau=\si_x^{-1}\si_z\in G$. In contrast, $b$ and $b+1$ are not $F$-equivalent.
\end{example}

The orbital decomposition of a rational function $ f=P/Q \in K(\mathbf{x}) $
depends on the variable~$x_1$ and a subgroup~$F$. In order to define it, we
first focus on its denominator as a polynomial in~$x_1$, that
is, $Q\in K(\widehat{\mathbf{x}})[x_1]$ with $\xh:=x_2,\ldots,x_n$.  The
first step consists in factoring the polynomial~$Q$ completely over
$K(\xh)$. We sort all of its irreducible factors into distinct $F$-orbits as
follows:
\[
  Q=c\cdot\prod_{i=1}^{I}\prod_{j=1}^{J}\prod_{\tau\in \Lambda_{i,j}}\tau(b_i^j),
\]
where $c\in K(\xh)$, $ \Lambda_{i,j}$ are finite subsets of~$F$, and the $b_i\in
K(\xh)[x_1]$ are monic irreducible polynomials in distinct
$F$-orbits. Note that this factorization is unique up to the choice of the representative $b_i$ in each $F$-orbit.
Moreover, we impose on the sets $\Lambda_{i,j}$ the condition that
$ \tau(b_i)\neq \tau'(b_i) $ for $ \tau,\tau'\in \Lambda_{i,j} $ with $
\tau\neq \tau' $. In the second step, we compute the unique irreducible
partial fraction decomposition of~$f$ with respect to the above factorization:
\begin{equation}\label{eq:orbitaldecomp}
	f=p+\sum_{i=1}^{I}\sum_{j=1}^{J}\sum_{\tau\in \Lambda_{i,j}}\frac{a_{i,j,\tau}}{\tau(b_i^j)},
\end{equation}	
where $ p, a_{i,j,\tau} \in K(\xh)[x_1]$ with
$\deg_{x_1}(a_{i,j,\tau})<\deg_{x_1}(b_i)$ for all $ i,j,\tau $.
For a polynomial $b\in K(\xh)[x_1] $, a subgroup $F\leq G$, and $ j>0 $, we
define the following linear $K(\xh)$-subspace:
\begin{equation}\label{eq:orbitalsubspace}
	U_{b,j}^{F}:=\Span_{K(\xh)}\left\lbrace \frac{a}{\tau(b^j)} \ \middle|\ \tau\in F,\ a\in K(\mathbf{\widehat{x}})[x_1],\ \deg_{x_1}(a)<\deg_{x_1}(b)\right\rbrace .
\end{equation}
Note that in Equation~\eqref{eq:orbitaldecomp}, each sum
$\sum_{\tau}\frac{a_{i,j,\tau}}{\tau(b_i^j)}$ lies in the
corresponding subspace~$U^F_{b_i,j}$.
Since the decomposition~\eqref{eq:orbitaldecomp} exists for any $f\in K(\x)$, and since the orbits $[b]_F$ do not overlap, we obtain the following direct sum decomposition:
\begin{equation}\label{eq:directsum}
    K(\x)= K(\xh)[x_1]\oplus \Biggl(\bigoplus_{j>0}\ \bigoplus_{[b]_F}\ U^F_{b,j}\Biggr)  ,
\end{equation}
where $[b]_F$ runs over all orbits in $K(\xh)[x_1]/{\sim_F} $.
Such a direct sum decomposition is called~\cite{Chen2016} \emph{the orbital decomposition of $ K(\x) $ with respect to the variable $ x_1 $ and the group $ F $}.

According to the definition of $U^F_{b,j}$, it is easy to check that this linear subspace is closed under the application of any operator in $K(\xh)[F]$, that is, any operator of the form $\sum_{\tau\in F}c_\tau \tau$ with $c_\tau \in K(\xh)$.
The following lemma is a direct generalization of~\cite[Lemma~5.1]{Chen2016}.

\begin{lemma}\label{lemma:invariantspace}
	If $ f\in U_{b,j}^F $ and $ \theta\in K(\xh)[F] $, then $ \theta(f)\in U_{b,j}^F $.
\end{lemma}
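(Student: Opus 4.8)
The plan is to reduce to a single spanning vector by $K(\xh)$-linearity and then to observe that shift operators preserve the two features that define those spanning vectors: membership of the numerator in $K(\xh)[x_1]$ with $x_1$-degree below $\deg_{x_1}(b)$, and membership of the denominator's base in the $F$-orbit of $b$.

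First I would write an arbitrary element $f\in U_{b,j}^F$ as a finite sum $f=\sum_{\tau}\frac{a_\tau}{\tau(b^j)}$ with $a_\tau\in K(\xh)[x_1]$ and $\deg_{x_1}(a_\tau)<\deg_{x_1}(b)$, and an arbitrary $\theta\in K(\xh)[F]$ as $\theta=\sum_{\sigma\in F}c_\sigma\sigma$ with $c_\sigma\in K(\xh)$ and only finitely many $c_\sigma$ nonzero. Since $U_{b,j}^F$ is by definition a $K(\xh)$-subspace of $K(\x)$ and $\theta$ is $K(\xh)$-linear, it suffices to show that $c_\sigma\,\sigma\!\bigl(\frac{a}{\tau(b^j)}\bigr)\in U_{b,j}^F$ for a single pair $(\sigma,\tau)\in F\times F$ and a single $a\in K(\xh)[x_1]$ with $\deg_{x_1}(a)<\deg_{x_1}(b)$; the general case then follows by summing the finitely many resulting terms.

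Next I would compute $\sigma\!\bigl(\frac{a}{\tau(b^j)}\bigr)=\frac{\sigma(a)}{(\sigma\tau)(b)^j}$, using that $\sigma$ is a $K$-algebra automorphism of $K(\x)$, so it commutes with taking $j$-th powers, and that $\sigma\circ\tau\in F$ because $F$ is a subgroup of $\langle\si_1,\dots,\si_n\rangle$. The key observation is that every shift operator $\sigma=\si_1^{k_1}\cdots\si_n^{k_n}$ maps $K(\xh)[x_1]$ into itself and leaves the degree in $x_1$ unchanged: the substitution $x_1\mapsto x_1+k_1$ fixes the leading term in $x_1$, while the substitutions $x_i\mapsto x_i+k_i$ for $i\ge 2$ act only on the coefficients, which lie in $K(\xh)$. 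Hence $\sigma(a)\in K(\xh)[x_1]$ with $\deg_{x_1}(\sigma(a))=\deg_{x_1}(a)<\deg_{x_1}(b)$, so $c_\sigma\,\sigma(a)\in K(\xh)[x_1]$ still has $x_1$-degree below $\deg_{x_1}(b)$, and therefore $\frac{c_\sigma\sigma(a)}{(\sigma\tau)(b)^j}$ is a $K(\xh)$-multiple of one of the generators of $U_{b,j}^F$ appearing in~\eqref{eq:orbitalsubspace}. Summing over the finitely many relevant $\sigma$ and $\tau$ yields $\theta(f)\in U_{b,j}^F$.

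There is no serious obstacle here; the statement is a bookkeeping exercise once the two invariance facts above are isolated. The only point that deserves a line of care is that $(\sigma\tau)(b^j)=\bigl((\sigma\tau)(b)\bigr)^j$ with $\sigma\tau\in F$, so that the shifted denominator is again an admissible denominator $\tau'(b^j)$ for some $\tau'\in F$; this is exactly where the hypothesis that $F$ is a \emph{subgroup} (rather than merely a subset) of $\langle\si_1,\dots,\si_n\rangle$ enters. The argument is otherwise parallel to Lemma~5.1 in~\cite{Chen2016}, the passage to a proper subgroup $F$ affecting none of the steps.
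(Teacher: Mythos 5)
Your proof is correct and matches the paper's (unwritten) argument: the paper merely asserts that closure of $U_{b,j}^F$ under operators in $K(\xh)[F]$ is easy to check from the definition~\eqref{eq:orbitalsubspace}, and your term-by-term verification --- shifts preserve $K(\xh)[x_1]$ and the $x_1$-degree, and $\sigma\tau\in F$ because $F$ is a subgroup --- is exactly that check. One pedantic remark: $\theta$ is not $K(\xh)$-linear, since $\sigma(\lambda g)=\sigma(\lambda)\sigma(g)$ with $\sigma(\lambda)\neq\lambda$ in general for $\lambda\in K(\xh)$; but your reduction to a single generator only needs additivity of $\sigma$ together with the fact that $\sigma(\lambda)\in K(\xh)$, so the argument stands as written.
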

\begin{theorem}\label{thm:decompkeepprop}
Let $f = p+\sum_{i=1}^{I}\sum_{j=1}^{J} f_{i,j}$ with $p\in K(\xh)[x_1]$ and $f_{i,j}\in U_{b_i,j}^F$ be an orbital decomposition of~$f$ with respect to~$x_1$ and~$F$, and let $\theta_1,\theta_2 \in K(\xh)[F]$. For $g\in K(\x)$, we have $\theta_1(f)=\theta_2(g)$
if and only if there exist $q\in K(\xh)[x_1]$ and $g_{i,j}\in U_{b_i,j}^F$ such that
$\theta_1(p)=\theta_2(q)$ and $\theta_1(f_{i,j})=\theta_2(g_{i,j})$ for all~$i,j$.
\end{theorem}
\begin{proof}
  The sufficiency is due to the linearity of the operators $\theta_1,\theta_2\in K(\xh)[F]$.
	For the necessity, suppose $ g=q+\sum_{i=1}^{I}\sum_{j=1}^{J} g_{i,j} $, where $q\in K(\xh)[x_1]$ and $g_{i,j}\in U_{b_i,j}^F$ for each~$i,j$.
    By Lemma~\ref{lemma:invariantspace}, the orbital decomposition of
	$ \theta_1(f) $ with respect to $ x_1$ and $ F $ is
	\[
	\theta_1(f)=\theta_1(p)+\sum_{i=1}^{I}\sum_{j=1}^{J} \theta_1(f_{i,j}).
	\]
	Similarly, we get
	\[
	\theta_2(g)=\theta_2(q)+\sum_{i=1}^{I}\sum_{j=1}^{J} \theta_2(g_{i,j}) .
	\]
	By the uniqueness of the direct sum decomposition~\eqref{eq:directsum}, we have $ \theta_1(p)=\theta_2(q) $ and $ \theta_1(f_{i,j})=\theta_2({g_{i,j}}) $ for each $ i,j $.
\end{proof}

For $ f\in K(\mathbf{x}) $, we say that $ f $ is \emph{$ \si_i $-summable} if there exists $g\in K(\x)$ such that $ f=\De_i(g) $. Let $ (f_1,\ldots,f_n) $ be a WZ-form w.r.t. $ (\De_1,\dots,\De_n) $. Then $ \De_i(f_1) $ is $ \si_1 $-summable, because we have $\De_i(f_1)=\De_1(f_i)$. The first step in our proof of Theorem~\ref{thm:addOreSato} is to decompose $ f_1 $ and to find the shift-invariance of each part.

Next, for the definition of orbital residues, let us look at
the orbital decomposition of $f\in K(\x)$ with respect to~$x_1$
and the subgroup $F=\langle\si_1\rangle$. In this case,
the decomposition~\eqref{eq:orbitaldecomp} can be written as
\begin{equation}\label{eq:parfrac}
  f= p+\sum_{i=1}^{I}\sum_{j=1}^{J}\sum_{\ell=0}^L\frac{a_{i,j,\ell}}{\si_1^\ell(d_i^j)},
\end{equation}
where the $d_i$ are irreducible polynomials in distinct $\langle\si_1\rangle$-orbits.

\begin{definition}[Orbital residue]\label{def:polynomialresidue}
  Let $f$ be given in the form~\eqref{eq:parfrac}, let $d\in K(\xh)[x_1]$ be
  irreducible, and let $j\in\{1,\dots,J\}$. If there exists $i\in\{1,\dots,I\}$ such that
  $d_i\in[d]_{\langle\si_1\rangle}$ (by the properties of the orbital
  decomposition, such $i$ is uniquely determined), then the \emph{orbital residue}
  of $f$ at $d$ of multiplicity~$j$, denoted by $\res_{\si_1}(f,d,j)$,
  is defined to be the $\langle\si_1\rangle$-orbit $[r]_{\langle \si_1 \rangle}$ with
  \[
    r := \sum_{\ell=0}^L\si_1^{-\ell}(a_{i,j,\ell}).
  \]
  If no such $i$ exists, we define $\res_{\si_1}(f,d,j)=0$. If it is clear
  from the context, we will abbreviate $[r]_{\langle\si_1\rangle}$ by~$[r]$.
\end{definition}
Note that the definition of orbital residue does not depend on the
representation~\eqref{eq:orbitaldecomp} of~$f$: if instead of $d_i$
some other representative of $[d_i]_{\langle\si_1\rangle}$ is used,
at the cost of changing the range of~$\ell$, then also the polynomial~$r$ in
Definition~\ref{def:polynomialresidue} changes, but it will stay in the
same $\langle\si_1\rangle$-orbit. This is the reason why the residue
is defined to be an orbit, instead of a single polynomial. Similarly,
we have $\res_{\si_1}(f,d,j)=\res_{\si_1}(f,d',j)$ whenever
$d\sim_{\left\langle \si_1\right\rangle }d'$.
\begin{example}\label{ex:orbitalresidue}
  Let $b=4x+6y+5z$ as in Example~\ref{ex:shiftequivalence} and let
  \[
    f = \frac{x}{b^2}+\frac{x+y}{(b+1)^2}+\frac{2x}{(b-3)^2}+\frac{2x+3}{(b+3)^2}.
  \]
  We observe that $b+1=\si_x(b-3)$ and that $b, b-3, b+3$ are in distinct
  $\left\langle \si_x\right\rangle$-orbits.
  By Definition~\ref{def:polynomialresidue}, we have
  \begin{equation*}
    \res_{\si_x}(f,b,2) = [x], \quad
    \res_{\si_x}(f,b-3,2) = [3x+y-1], \quad
    \res_{\si_x}(f,b+3,2) = [2x+3].
  \end{equation*}
\end{example}

\section{Additive decompositions of WZ-forms}\label{sec:decompofWZforms}

Exact and uniform WZ-forms are special kinds of WZ-forms. Conversely, the
following theorem shows that these two forms are the only basic building
blocks of all possible WZ-forms. This section is dedicated to proving the
following theorem, which is a multivariate generalization of
Theorem~\ref{thm:structureWZ-pairs}.
\begin{theorem}\label{thm:decompthm}
  Any WZ-form can be decomposed into one exact WZ-form plus several uniform WZ-forms.
\end{theorem}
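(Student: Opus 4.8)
The plan is to proceed by induction on the number $n$ of variables, using the bivariate structure theorem (Theorem~\ref{thm:structureWZ-pairs}) as the base case and the orbital decomposition with respect to $x_1$ as the main engine. Given a WZ-form $(f_1,\ldots,f_n)$ with respect to $(\De_1,\ldots,\De_n)$, the first step is to fix the group $F=\langle\si_1\rangle$ and write the orbital decomposition of $f_1$ with respect to $x_1$ and $F$, say $f_1 = p + \sum_{i,j} f_{1,i,j}$ with $p\in K(\xh)[x_1]$ and $f_{1,i,j}\in U_{d_i,j}^{\langle\si_1\rangle}$. Since $(f_1,\ldots,f_n)$ is a WZ-form, for each $i\in\{2,\ldots,n\}$ we have $\De_i(f_1)=\De_1(f_i)$, so $\De_i(f_1)$ is $\si_1$-summable. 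Applying Theorem~\ref{thm:decompkeepprop} with $\theta_1=\De_i$ and $\theta_2=\De_1$ (both operators lie in $K(\xh)[\langle\si_1\rangle]$, since the $x_1$-shift commutes past $\De_i$ for $i\geq 2$ and $\De_i$ has coefficients in $K(\xh)$) shows that the orbital decomposition is compatible with the WZ-relations: each piece $\De_i(f_{1,i,j})$ is $\si_1$-summable, and likewise the polynomial part $\De_i(p)$ is $\si_1$-summable. This lets us split $(f_1,\ldots,f_n)$ into a sum of WZ-forms, one coming from the polynomial part $p$ and one from each orbital block $U_{d_i,j}^{\langle\si_1\rangle}$.

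The second step is to analyze each orbital block separately. For a fixed irreducible $d=d_i\in K(\xh)[x_1]$ and multiplicity $j$, the component $f_{1,i,j}=\sum_{\ell} a_\ell/\si_1^\ell(d^j)$ has, by the compatibility condition and the $\si_1$-summability of $\De_i(f_{1,i,j})$, a strong constraint on how its orbital residue $\res_{\si_1}(f_1,d,j)$ depends on the other variables: $\si_1$-summability of a proper rational function forces the residue orbit to vanish, so when $\De_i(f_{1,i,j})$ is $\si_1$-summable, $\De_i$ annihilates the residue $[r]$ of $f_{1,i,j}$, i.e.\ $\si_i(r)-r$ lies in the span generated by $\si_1-1$. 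Pushing this for all $i$, one extracts from each block either an exact contribution (when the residue is a $\si_1$-difference, so the block itself is $\si_1$-summable and one can exhibit a potential $g$) or a contribution whose denominators are forced, by the combination of $\si_1$-periodicity of the orbit structure and the compatibility relations in the remaining directions, to be pairwise shift-invariant—hence integer-linear of a common type by Lemma~\ref{lemma:pairwiseinteger}. The numerators are then handled by the same mechanism applied recursively in the remaining $n-1$ variables (treating $\xh$), or directly via Remark~\ref{rmk:bivariateintegerlinear} in the bivariate slices, to conclude each such block yields a uniform WZ-form.

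The third step treats the polynomial part. Here $p\in K(\xh)[x_1]$ with $\De_i(p)$ $\si_1$-summable for each $i\geq 2$; one repeats the argument with the roles of the variables cycled (orbital decomposition with respect to $x_2$, etc.), peeling off uniform and exact pieces, and the induction hypothesis in $n-1$ variables finishes the purely-$\xh$ remainder. Finally one checks that the exact pieces collected at each stage sum to a single exact WZ-form (exactness is closed under addition, since $\De_i(g)+\De_i(g')=\De_i(g+g')$) and that each uniform piece genuinely satisfies the WZ-compatibility on its own—this follows because the decomposition was obtained by projecting onto direct summands that are each $K(\xh)[F]$-stable, so the relation $\De_i(f_j)=\De_j(f_i)$ descends to each summand.

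The main obstacle I expect is the second step: controlling the numerators $a_\ell$ inside a single orbital block and proving that, once the residue is not a $\si_1$-difference, the block is forced to be integer-linear of a uniform type in all $n$ variables simultaneously. The orbital residue captures only the ``$\si_1$-direction'' obstruction; reconciling it with the compatibility conditions in the directions $\si_2,\ldots,\si_n$ to pin down a single integer vector $\mathbf v$ is exactly where the analog of the Ore--Sato rigidity has to be proved, and where the interplay between orbital residues and pairwise shift-invariance (Lemma~\ref{lemma:pairwiseinteger}, Remark~\ref{rmk:bivariateintegerlinear}) does the real work. Care is also needed to ensure the inductive bookkeeping does not generate infinitely many uniform blocks—but finiteness follows from the finiteness of the factorization of the denominator of $f_1$ and, at each recursion level, of the finitely many new denominators produced.
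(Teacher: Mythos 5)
Your overall skeleton (induction on $n$, orbital decomposition of $f_1$, peeling off an exact part and block-by-block uniform parts, recursing on an $(n-1)$-variable remainder) is the same as the paper's, but the proposal has two genuine problems. The first is the choice of group: you decompose with respect to $F=\langle\si_1\rangle$ and then invoke Theorem~\ref{thm:decompkeepprop} with $\theta_1=\De_i$, claiming that $\De_i\in K(\xh)[\langle\si_1\rangle]$. That is false: $\si_i$ for $i\geq 2$ is not a $K(\xh)$-linear combination of powers of $\si_1$, and it does not preserve the summands $U_{d,j}^{\langle\si_1\rangle}$ — it maps $U_{d,j}^{\langle\si_1\rangle}$ to $U_{\si_i(d),j}^{\langle\si_1\rangle}$. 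Consequently the condition ``$\De_i(f_1)$ is $\si_1$-summable'' does \emph{not} decouple into per-block conditions in your setup: the component of $\De_i(f_1)$ living over the orbit $[d]_{\langle\si_1\rangle}$ mixes contributions from $f_{1,[d],j}$ and from $\si_i\bigl(f_{1,[\si_i^{-1}(d)],j}\bigr)$, so you only get a coupled system across blocks. The paper avoids this by taking $F=G=\langle\si_1,\ldots,\si_n\rangle$, so that every $U_{b,j}^{G}$ is stable under all $\De_k$ (Lemma~\ref{lemma:invariantspace}) and Theorem~\ref{thm:decompkeepprop} legitimately splits the summability condition blockwise.

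The second and more serious problem is that the core rigidity step is asserted rather than proved. Getting from ``$\De_k(f_{1,i,j})$ is $\si_1$-summable for all $k$'' to ``the denominator \emph{and all numerators} of the block are integer-linear of one common type $\vv$'' is precisely Lemma~\ref{lemma:subspaceWZ-forms}, whose proof requires: a further reduction inside each block to isolate a $\langle\si_1,\si_k\rangle$-subblock in the normal form of Lemma~\ref{lemma:cyclicproof}; that bivariate lemma to produce the relations $\si_k^{L_k}(b)=\si_1^{N_k}(b)$ and the corresponding relations for the chosen numerator representatives; the orbital-residue bookkeeping together with Lemma~\ref{lemma:isotropygroupinvariant} to transfer the shift-invariance back to the original numerators $a_\lambda$; and Remark~\ref{rmk:pairwise} to combine the pairwise relations for the various $k$ into a single integer-linear type. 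You explicitly flag this as ``the main obstacle I expect'' and offer no argument for it, so the proof is incomplete exactly where the theorem's content lies. Relatedly, you never construct the companion entries $f_2',\ldots,f_n'$ of each uniform form (the paper builds them with the cyclic operator) nor verify that they satisfy $\De_\ell(f_k')=\De_k(f_\ell')$ among themselves; the paper gets this from $\De_\ell(f_k')-\De_k(f_\ell')\in U_{b,m}^{G}\cap K(\xh)=\{0\}$, and your appeal to ``projecting onto stable summands'' does not supply these entries or that verification.
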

First we recall the notion of isotropy group, which was introduced by
Sato~\cite{Sato1990} in order to prove the classical Ore--Sato theorem.
\begin{definition}[Isotropy group]\label{def:isotropygroup}
	Let $ p\in K[\mathbf{x}] $. The set
	\[G_p=\{\tau \in G \ \vert \ \tau(p)=p\}\]
	is a subgroup of $ G $, called the \emph{isotropy group} of $ p $ in $ G$.
\end{definition}
\begin{example}\label{ex:isotropy}
    Let $b=4x+6y+5z$ and $G=\langle\si_x,\si_y,\si_z\rangle$ be as in Example~\ref{ex:shiftequivalence}, and let $c=3y+2z$. The isotropy group of~$b$
    consists of all monomials $\si_x^i\si_y^j\si_z^k$ that satisfy $4i+6j+5k=0$, i.e.,
    $G_b=\bigl\langle\si_x\si_y\si_z^{-2},\si_y^5\si_z^{-6}\bigr\rangle$.
    Similarly, one finds
    $G_c=\bigl\langle\si_x,\si_y^2\si_z^{-3}\bigr\rangle$ and
    $G_{b\cdot c}=G_b\cap G_c=\bigl\langle\si_x^3\si_y^8\si_z^{-12}\bigr\rangle$.
\end{example}
Definition~\ref{def:isotropygroup} can directly be extended to rational functions.
The next lemma shows that shift-equivalent elements have the same isotropy group.
\begin{lemma}\label{lemma:isotropygroupinvariant}
	Let $ f,g\in K(\mathbf{x}) $. If $ f\sim_{G}g $, then $ G_f=G_g $.
\end{lemma}
\begin{proof}
  Let $ \si \in G $ such that $ f=\si(g) $. For $ \tau\in G_g $ we have
  $ \tau(g)=g $. Applying $\si$ to both sides of the equation yields
  $\si\bigl(\tau(g)\bigr)=\si(g)$. Since $ \si $ and $ \tau $ commute, we have
  $\tau\bigl(\si(g)\bigr)=\si(g)$, i.e., $\tau(f)=f$. Thus $ \tau \in G_f $,
  which implies that $ G_g\subseteq G_f $. Since $ \si^{-1}\in G $ such that
  $ g=\si^{-1}(f) $, we similarly have $ G_f\subseteq G_g $. Hence $G_f=G_g$.
\end{proof}
We recall a crucial lemma that led to the structure theorem of WZ-pairs.
Here it will be used to conduct the induction step in the proof of
Theorem~\ref{thm:decompthm}.
\newcommand\citeLemmaWZpairs{\cite[Lemma~6]{Chen2019}}
\begin{lemma}[\citeLemmaWZpairs]\label{lemma:cyclicproof}
	Let $ f\in K(x,y) $ be a rational function of the form
	\[f=\frac{a_0}{b^m}+\frac{a_1}{\si_y(b^m)}+\cdots+\frac{a_n}{\si_y^{n}(b^m)},\]
	where $ m,n\in \bN$ with $m>0$, $ a_0,\ldots,a_n,b \in K(y)[x] $ with $ a_n\neq 0 $. Moreover, we assume that $ \deg(a_i)<\deg(b) $, $b$ is irreducible and monic, and that $ \si_y^{i}(b)\not\sim_{\left\langle \si_x\right\rangle }\si_y^{j}(b)  $ for all $ i,j\in \{0,\ldots,n\} $ with $ i\neq j $. If for some $g\in K(x,y)$ we have $ \De_y(f)=\De_x(g) $, then there exists $ t\in\bZ $ such that $ \si_y^{n+1}(a_0)=\si_x^{t}(a_0) $, $ \si_y^{n+1}(b)=\si_x^{t}(b) $, and $ a_\ell=\si_y^{\ell}(a_0)$ for all $\ell \in \{0,\ldots,n\}$. Furthermore,
	for some $g_0\in K(y)$ we get
	\[
	f=\frac{\si_y^{n+1}-1}{\si_y-1}\bullet \frac{a_0}{b^m} \quad\text{and}\quad
	g=\frac{\si_x^{t}-1}{\si_x-1}\bullet \frac{a_0}{b^m} +g_0.
	\]
\end{lemma}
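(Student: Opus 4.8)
\textbf{Proof proposal for Lemma~\ref{lemma:cyclicproof}.}

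The plan is to exploit the orbital decomposition machinery of Section~3 with respect to the variable $x$ (playing the role of $x_1$) and the group $F=\langle\si_x\rangle$, combined with the hypothesis that $\De_y(f)=\De_x(g)$. First I would observe that, because $b$ is irreducible and monic in $K(y)[x]$ and the $\si_y^i(b)$ ($0\le i\le n$) lie in distinct $\langle\si_x\rangle$-orbits, the given expression for $f$ is \emph{already} an orbital decomposition of $f$ with respect to $x$ and $F$: each summand $a_i/\si_y^i(b^m)$ lies in $U^F_{b_i',m}$ for the appropriate representative, and there is no polynomial part. Applying Theorem~\ref{thm:decompkeepprop} with $\theta_1=\De_y=\si_y-1$ and $\theta_2=\De_x=\si_x-1$ (both lie in $K(y)[F]$ since $\si_y$ acts as a scalar automorphism that commutes with $F$ and preserves each $U^F_{b,j}$), the equation $\De_y(f)=\De_x(g)$ forces $\De_y\bigl(a_i/\si_y^i(b^m)\bigr)=\De_x(g_{i})$ for suitable $g_i$, orbit by orbit, and the polynomial parts must separately match. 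This reduces the multi-term problem to understanding when a single ``telescoping-compatible'' term $a/\si_y^i(b^m)$ is $\si_x$-summable after applying $\De_y$.

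Next I would extract the combinatorial/shift structure. Writing out $\De_y(f)=\si_y(f)-f$ explicitly produces terms involving $\si_y^{n+1}(b)$ and $b$ that were not present in $f$ itself; for $\De_y(f)$ to be $\si_x$-summable, these ``boundary'' terms must pair up under the $\langle\si_x\rangle$-orbit structure. Concretely, the term $\si_y^{n+1}(a_0)/\si_y^{n+1}(b^m)$ coming from $\si_y(f)$ and the term $-a_0/b^m$ coming from $-f$ must be linked: since a $\si_x$-summable rational function has zero orbital residue at every $\langle\si_x\rangle$-orbit (by the modified discrete-residue theory of Section~3, residues are additive and vanish on images of $\De_x$), and since $\si_y^{n+1}(b)$ and $b$ a priori sit in \emph{different} orbits unless they are $\langle\si_x\rangle$-equivalent, the only way to cancel these residues is to have $\si_y^{n+1}(b)=\si_x^t(b)$ for some $t\in\bZ$, and correspondingly $\si_y^{n+1}(a_0)=\si_x^t(a_0)$. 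An analogous residue-matching argument at each intermediate orbit $[\si_y^\ell(b)]$, comparing the contribution $\si_y^\ell(a_{\ell-1})$ from $\si_y(f)$ with $-a_\ell$ from $-f$, yields $a_\ell=\si_y^\ell(a_0)$ for $\ell\in\{1,\dots,n\}$ by induction on~$\ell$. This is the step I expect to be the main obstacle: one must set up the orbital-residue bookkeeping carefully enough that the ``telescoping chain'' $b,\si_y(b),\dots,\si_y^{n+1}(b)$ is forced to close up into a single $\langle\si_x\rangle$-orbit, using only the hypothesis that distinct $\si_y^i(b)$ are $\langle\si_x\rangle$-inequivalent for $0\le i\le n$ but allowing the wrap-around at $n+1$.

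Finally, with $a_\ell=\si_y^\ell(a_0)$ established, I would recognize $f=\sum_{\ell=0}^n \si_y^\ell\bigl(a_0/b^m\bigr)=\tfrac{\si_y^{n+1}-1}{\si_y-1}\bullet\tfrac{a_0}{b^m}$ directly from Definition~\ref{def:cyclicoperator}. Substituting back into $\De_y(f)=\De_x(g)$ and using $\si_y^{n+1}(a_0/b^m)=\si_x^t(a_0/b^m)$, a short telescoping computation gives $\De_x\bigl(g-\tfrac{\si_x^t-1}{\si_x-1}\bullet\tfrac{a_0}{b^m}\bigr)=0$, so the bracketed quantity is a rational function annihilated by $\De_x$, i.e.\ free of $x$; calling it $g_0\in K(y)$ completes the proof. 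The routine verifications I am deferring are: checking that $\De_x$ and $\De_y$ indeed lie in $K(\widehat{\mathbf{x}})[F]$ and preserve the subspaces $U^F_{b,j}$ (immediate from Lemma~\ref{lemma:invariantspace}), and the elementary identity $\De_y\circ\tfrac{\si_y^{n+1}-1}{\si_y-1}=\si_y^{n+1}-1$ together with its $\si_x$-analog.
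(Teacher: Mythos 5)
First, note that the paper does not prove this lemma at all: it is quoted verbatim from \cite[Lemma~6]{Chen2019}, so there is no in-paper proof to compare against; what follows is an assessment of your argument on its own terms. Your opening reduction is not legitimate. Theorem~\ref{thm:decompkeepprop} requires $\theta_1,\theta_2\in K(\widehat{\mathbf{x}})[F]$, i.e.\ $K(y)$-linear combinations of elements of $F=\langle\si_x\rangle$; the operator $\De_y=\si_y-1$ is not of this form, and $\si_y$ does not preserve the $\langle\si_x\rangle$-components — it maps $U^{\langle\si_x\rangle}_{\si_y^{i}(b),m}$ into $U^{\langle\si_x\rangle}_{\si_y^{i+1}(b),m}$. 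Worse, the conclusion you draw from it, namely that $\De_y\bigl(a_i/\si_y^{i}(b^m)\bigr)=\De_x(g_i)$ orbit by orbit, is false and incompatible with the lemma itself: a single term $a_i/\si_y^{i}(b^m)$ has $\De_y$-image supported on the two distinct orbits $[\si_y^{i}(b)]$ and $[\si_y^{i+1}(b)]$ (at least for $i<n$), so $\si_x$-summability of that image would force both residues, hence $a_i$, to vanish, contradicting $a_n\neq 0$. The legitimate use of Theorem~\ref{thm:decompkeepprop} (as in the paper's Lemma~\ref{lemma:subspaceWZ-forms}) is at the level of the group $\langle\si_x,\si_y\rangle$, where the whole of $f$ sits in a single component; within that component one must argue globally, exactly as your second paragraph then does.

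That second paragraph has the right mechanism (the residues of $\De_y(f)=\De_x(g)$ with respect to $\si_x$ must vanish at every orbit and multiplicity — the necessary direction of the discrete-residue criterion of \cite{ChenSinger2014}, which is easy and fair to invoke), but the part you explicitly defer as ``the main obstacle'' is precisely the content of the lemma, and it is left unproved. What is missing is the case analysis that forces the chain to close at $b$ itself: (i) if $[\si_y^{n+1}(b)]_{\langle\si_x\rangle}$ is a new orbit, the residue there is $[\si_y(a_n)]\neq 0$, a contradiction; (ii) if $\si_y^{n+1}(b)\sim_{\langle\si_x\rangle}\si_y^{\ell_0}(b)$ for some $\ell_0\in\{1,\dots,n\}$, then applying $\si_y^{-\ell_0}$ gives $\si_y^{\,n+1-\ell_0}(b)\sim_{\langle\si_x\rangle}b$ with $1\le n+1-\ell_0\le n$, contradicting the hypothesis; hence $\si_y^{n+1}(b)=\si_x^{t}(b)$ for some $t\in\bZ$, and only after this do the vanishing residues at $[\si_y^{\ell}(b)]$ give $a_\ell=\si_y(a_{\ell-1})$, while the residue at $[b]$, namely $[-a_0+\si_x^{-t}\si_y(a_n)]$, gives $\si_y(a_n)=\si_x^{t}(a_0)$ and hence $\si_y^{n+1}(a_0)=\si_x^{t}(a_0)$ (with $a_0\neq 0$ because $a_n=\si_y^{n}(a_0)\neq 0$). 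There are also two indexing slips to repair: the boundary term of $\si_y(f)$ is $\si_y(a_n)/\si_y^{n+1}(b^m)$, not $\si_y^{n+1}(a_0)/\si_y^{n+1}(b^m)$, and the contribution of $\si_y(f)$ at $[\si_y^{\ell}(b)]$ is $\si_y(a_{\ell-1})$, not $\si_y^{\ell}(a_{\ell-1})$. Your final paragraph (reassembling $f$ via the cyclic operator and absorbing the $\De_x$-constant into $g_0\in K(y)$) is correct as stated.
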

According to Remark~\ref{rmk:bivariateintegerlinear}, the bivariate
function~$f$ in Lemma~\ref{lemma:cyclicproof} is of a certain integer-linear
type. We will use this lemma to reduce the problem from the multivariate
case to the bivariate case, see the proof of Lemma~\ref{lemma:subspaceWZ-forms} below.

Recall that $G=\left\langle \si_1,\ldots,\si_n \right\rangle $ and
$\xh=x_2,\ldots,x_n$. Let $\omega=(f_1,\ldots,f_n)\in K(\mathbf{x})^n$ be a
WZ-form w.r.t.~$(\De_1,\ldots,\De_n)$. Applying the orbital
decomposition~\eqref{eq:orbitaldecomp} with respect to~$x_1$ and~$G$ to~$f_1$
yields
\begin{equation}\label{eq:orbitaldecompf1}
f_1=p+\sum_{i=1}^{I}\sum_{j=1}^{J}\sum_{\tau\in \Lambda_{i,j}}\frac{a_{i,j,\tau}}{\tau(b_i^j)},
\end{equation}
where for all $ i,j,\tau $ we have $ p, a_{i,j,\tau} \in K(\xh)[x_1]$ with $\deg_{x_1}(a_{i,j,\tau})<\deg_{x_1}(b_i)$ and $\Lambda_{i,j}\subset G$.
The following reduction formula is crucial in Abramov's algorithm for rational summation~\cite{Abramov1971,Abramov1975}.
\begin{fact}
For all $a,u\in K[\x]$ with $u\neq 0$ and automorphism $ \phi  $ of $ K(\mathbf{x}) $, we have
\begin{equation}\label{eq:reductionformula}
	\frac{a}{\phi^{m}(u)}=\phi(g)-g+\frac{\phi^{-m}(a)}{u},
\end{equation}
where
\begin{equation}\label{eq:reductiong}
	g=
	\begin{cases}
		\displaystyle\sum_{i=0}^{m-1}\frac{\phi^{i-m}(a)}{\phi^{i}(u)}, & \text{if } m \geq 0;\\[0.5em]
		\displaystyle -\sum_{i=m}^{-1}\frac{\phi^{i-m}(a)}{\phi^{i}(u)}, & \text{if } m < 0.
	\end{cases}
\end{equation}
\end{fact}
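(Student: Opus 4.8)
The plan is to prove the identity~\eqref{eq:reductionformula} by a direct telescoping computation, split into the cases $m>0$, $m=0$, and $m<0$. First note that all the fractions involved are well defined: since $\phi$ is an automorphism of $K(\mathbf{x})$ and $u\neq 0$, every $\phi^{i}(u)$ is a nonzero polynomial, so the finite sums in~\eqref{eq:reductiong} and the right-hand side of~\eqref{eq:reductionformula} make sense. The case $m=0$ is immediate, since then $g=0$ (the empty sum) and $\phi^{-m}(a)/u = a/u = a/\phi^{m}(u)$.

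For $m>0$, I would substitute the first branch of~\eqref{eq:reductiong}, apply $\phi$ to each summand using $\phi\bigl(\phi^{i-m}(a)/\phi^{i}(u)\bigr)=\phi^{i+1-m}(a)/\phi^{i+1}(u)$, and re-index $i\mapsto i-1$, obtaining $\phi(g)=\sum_{i=1}^{m}\phi^{i-m}(a)/\phi^{i}(u)$. Subtracting $g=\sum_{i=0}^{m-1}\phi^{i-m}(a)/\phi^{i}(u)$, all terms with $1\le i\le m-1$ cancel and what remains is
\[
\phi(g)-g=\frac{a}{\phi^{m}(u)}-\frac{\phi^{-m}(a)}{u},
\]
which is exactly~\eqref{eq:reductionformula}. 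The case $m<0$ is handled in the same way from the second branch of~\eqref{eq:reductiong}: applying $\phi$ and re-indexing turns $g=-\sum_{i=m}^{-1}\phi^{i-m}(a)/\phi^{i}(u)$ into $\phi(g)=-\sum_{i=m+1}^{0}\phi^{i-m}(a)/\phi^{i}(u)$, and the difference $\phi(g)-g$ telescopes to $a/\phi^{m}(u)-\phi^{-m}(a)/u$ once more.

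A more conceptual way to record the same argument, which I would include, is to observe that in the notation of Definition~\ref{def:cyclicoperator} (applied to the cyclic group $\langle\phi\rangle$ instead of $G$ --- the formal manipulation is identical) the rational function $g$ in~\eqref{eq:reductiong} is precisely $g=\dfrac{\phi^{m}-1}{\phi-1}\bullet\dfrac{\phi^{-m}(a)}{u}$, as one checks against~\eqref{eq:reductiong} for each sign of $m$. Since the operator identity $(\phi-1)\circ\dfrac{\phi^{m}-1}{\phi-1}=\phi^{m}-1$ holds for every $m\in\bZ$, applying both sides to $\phi^{-m}(a)/u$ gives $\phi(g)-g=\phi^{m}\bullet\bigl(\phi^{-m}(a)/u\bigr)-\phi^{-m}(a)/u=a/\phi^{m}(u)-\phi^{-m}(a)/u$, i.e.~\eqref{eq:reductionformula}.

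There is no real obstacle here: the statement is an elementary telescoping identity. The only points requiring (minimal) care are the bookkeeping of the summation bounds when $m$ changes sign and the remark that the fractions are well defined because $\phi$ is an automorphism; I would write out the $m>0$ computation in full and note that $m=0$ is trivial and $m<0$ is symmetric.
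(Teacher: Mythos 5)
Your proof is correct. The paper states this as a Fact without proof (it is the classical reduction step from Abramov's algorithm, cited rather than derived), and your direct telescoping verification --- including the reformulation of $g$ as $\frac{\phi^{m}-1}{\phi-1}\bullet\frac{\phi^{-m}(a)}{u}$ in the sense of Definition~\ref{def:cyclicoperator} --- is exactly the standard argument; the case analysis on the sign of $m$ and the index bookkeeping all check out.
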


Let $ E:=\left\langle \si_2,\ldots,\si_{n}\right\rangle  $. Then each $ \tau \in G $ can be written as $ \si_1^{m}\lambda$ for some $ m\in \bZ $ and $\lambda \in E$. By taking $\phi= \si_1$ and $u=\lambda(b)$ in Formula~\eqref{eq:reductionformula}, we get
\begin{equation}\label{eq:reductionsi1}
 \frac{a}{\tau(b)}=\frac{a}{\si_1^m(u)}=\De_{1}(g)+\frac{\si_1^{-m}(a)}{u}
 =\De_{1}(g)+\frac{\si_1^{-m}(a)}{\lambda(b)},
\end{equation}
for some $g\in K(\x)$ of the form~\eqref{eq:reductiong}.
Applying the above reduction~\eqref{eq:reductionsi1} to each summand $ a_{i,j,\tau}/\tau(b_i^j) $ in Equation~\eqref{eq:orbitaldecompf1} yields
\begin{equation}\label{eq:abramovdecomp}
	f_1=\De_1(g_0)+\sum_{i=1}^{I}\sum_{j=1}^{J} \widetilde{f}_{1,i,j}
        \quad\text{with}\quad
        \widetilde{f}_{1,i,j} = \sum_{\lambda\in \widetilde{\Lambda}_{i,j}}
        \frac{\widetilde{a}_{i,j,\lambda}}{\lambda(b_i^j)},
\end{equation}
where $ g_0 \in K(\mathbf{x}) $, $ \widetilde{\Lambda}_{i,j}\subseteq E $, and $ \lambda(b_i)\not\sim_{\left\langle \si_1\right\rangle }\lambda'(b_i) $ whenever $ \lambda,\lambda'$ are two distinct elements from $\widetilde{\Lambda}_{i,j} $. Since the shift operators~$\sigma_1^{-m}$ preserve the degrees of the polynomials~$a_{i,j,\lambda}$, we have for all $i,j$ that $\widetilde{f}_{1,i,j}\in U_{b_i,j}^G$.
In fact,
\[
[\widetilde{a}_{i,j,\lambda}] = \res_{\si_1}\bigl(f_1,\lambda(b_i),j\bigr).
\]
We give an illustrative example to show how we can immediately obtain the orbital residue
via the reduction~\eqref{eq:abramovdecomp}. Note that the result is the same as specified
in Definition~\ref{def:polynomialresidue}.
\begin{example}[Continuing Example~\ref{ex:orbitalresidue}]
	Rewrite $ f $ as
	\[
	f=\frac{x}{b^2}+\frac{x+y}{\si_x^{-1}\si_z(b^2)}+\frac{2x}{\si_x\si_y^{-2}\si_z(b^2)}+\frac{2x+3}{\si_x^{-3}\si_z^3(b^2)}.
	\]
	First we get rid of the operator $ \si_x $ among all the denominators,
	\begin{align*}
	f={}&\De_{x}\biggl(-\frac{x+y}{\si_x^{-1}\si_z(b^2)}+\frac{2x-2}{\si_y^{-2}\si_z(b^2)}-\frac{2x+3}{\si_x^{-3}\si_z^3(b^2)}-\frac{2x+5}{\si_x^{-2}\si_z^3(b^2)}-\frac{2x+7}{\si_x^{-1}\si_z^3(b^2)}\biggr)\\
	&+\frac{x}{b^2}+\frac{x+y+1}{\si_z(b^2)}+\frac{2x-2}{\si_y^{-2}\si_z(b^2)}+\frac{2x+9}{\si_z^3(b^2)}.
	\end{align*}
	Note that $ \si_y^{-2}\si_z(b^2)=\si_x^{-3}\si_z(b^2) $, so we continue the reduction as follows:
	\[\frac{2x-2}{\si_y^{-2}\si_z(b^2)}=\De_x\biggl(-\frac{2x-2}{\si_x^{-3}\si_z(b^2)}-\frac{2x}{\si_x^{-2}\si_z(b^2)}-\frac{2x+2}{\si_x^{-1}\si_z(b^2)}\biggr)+\frac{2x+4}{\si_z(b^2)}.\]
	 Hence
	 \[
	 f= \De_x(g)+\frac{x}{b^2}+\frac{3x+y+5}{\si_z(b^2)}+\frac{2x+9}{\si_z^3(b^2)},
	 \]
	 for some $ g\in K(\x) $. We observe that $\bigl\{b^2, \si_z(b^2),\si_z^3(b^2)\bigr\} =
	\bigl\{b^2, (b+5)^2, (b+15)^2 \bigr\}$ are pairwise
	$ \left\langle \si_x \right\rangle  $-inequivalent, hence the reduction is done. We have
	\[
	\res_{\si_x}(f,b,2)= [x], \
	\res_{\si_x}\bigl(f,\si_z(b),2\bigr)=[3x+y+5],\
	\res_{\si_x}\bigl(f,\si_z^3(b),2\bigr)=[2x+9].
	\]
\end{example}

Using the $g_0$ that was obtained by Abramov's reduction~\eqref{eq:reductionformula},
we define an exact WZ-form $\omega_0:= \bigl(\De_1(g_0),\ldots,\De_n(g_0)\bigr)$,
which we remove from the given WZ-form~$\omega$. To this end, we let
$\widetilde{f}_i:=f_i-\De_i(g_0)$ and observe that $\bigl(\widetilde{f}_1,\ldots,\widetilde{f}_n\bigr)$ is still a WZ-form, which implies that for each $k\in \{2,\ldots,n\}$, $\De_{k}(\widetilde{f}_1)$
is $ \si_{1} $-summable. Note that $ \sum_{i=1}^{I}\sum_{j=1}^{J}\widetilde{f}_{1,i,j} $ is the orbital decomposition of $\widetilde{f}_1 $ with respect to $ x_1 $ and $ G $. By Theorem~\ref{thm:decompkeepprop}, for each $ i,j$, we have $\De_k\bigl(\widetilde{f}_{1,i,j}\bigr) $ is $ \si_1 $-summable. Then we can focus on each orbital component of $\widetilde{f}_1$ in a linear $K(\xh)$-subspace $ U_{b,m}^G $.

\begin{remark}\label{rmk:pairwise}
  We claim that $a\in K(\x)\setminus K(\xh)$ is pairwise shift-invariant if
  and only if for each $ k\in \{2,\ldots,n\} $, there exist $L_k, N_k\in \bZ$
  with $L_k \neq 0$, such that $ \si_{k}^{L_k}(a)=\si_{1}^{N_k} (a)$. The
  necessity follows from Definition~\ref{def:shiftinvariant}. For the
  sufficiency, we combine for any $k,s\in \{2,\ldots,n\}$ the
  $N_s$-fold application of $ \si_{k}^{L_k}(a)=\si_{1}^{N_k} (a)$
  with the $N_k$-fold application of  $ \si_{s}^{L_s}(a)=\si_{1}^{N_s} (a)$ to obtain
  \[
    \si_k^{L_kN_s}(a)=\si_1^{N_kN_s}(a)=\si_{s}^{L_sN_k}(a).
  \]
  If $ N_k=N_s=0 $, then $ a $ is free of $ x_k$ and $x_s $ which implies that
  $\si_k^{1}(a)=\si_s^{1}(a) $.
\end{remark}	
\begin{lemma}\label{lemma:subspaceWZ-forms}
	Let $ f_1= \sum_{\lambda\in \Lambda}{a_{\lambda}}/{\lambda(b^m)}\in U_{b,m}^G $ with $ \Lambda\subset E $ and the $ \lambda(b)$ being in distinct $\left\langle \si_1 \right\rangle  $-orbits. If $ \De_{k}(f_1) $ is $ \si_{1} $-summable for each $ k\in \{2,\ldots,n\} $, then all of the $ a_{\lambda} $ and $ b $ are integer-linear of the same type.
\end{lemma}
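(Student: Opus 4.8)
The plan is to establish, for each $k\in\{2,\dots,n\}$, a bivariate relation between $x_1$ and $x_k$ satisfied simultaneously by $b$ and by all the $a_\lambda$, and then to glue these together via Remark~\ref{rmk:pairwise} and Lemma~\ref{lemma:pairwiseinteger}. After discarding the trivial cases ($f_1=0$, and $\deg_{x_1}(b)=0$, which again forces $f_1=0$) we may assume $b\notin K(\xh)$, all $a_\lambda\ne 0$, and (after scaling) $b$ monic in $x_1$. Fix $k$ and set $F:=\left\langle\si_1,\si_k\right\rangle$, so that $\De_k=\si_k-1$ and $\De_1=\si_1-1$ lie in $K(\xh)[F]$. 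The $\lambda(b)$ are pairwise $\langle\si_1\rangle$-inequivalent irreducible polynomials in $K(\xh)[x_1]$, all of $x_1$-degree $\deg_{x_1}b$, so grouping the $\lambda\in\Lambda$ by the $F$-orbit of $\lambda(b)$ exhibits $f_1=\sum_s f_1^{(s)}$ as an orbital decomposition of $f_1$ with respect to $x_1$ and $F$ (vanishing polynomial part, all parts of multiplicity $m$). Since $\De_k(f_1)=\De_1(g)$ for some $g\in K(\x)$, Theorem~\ref{thm:decompkeepprop} with $\theta_1=\De_k$, $\theta_2=\De_1$ shows that $\De_k(f_1^{(s)})$ is $\si_1$-summable for every $s$.

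Now fix a nonempty part $f_1^{(s)}$ and a representative $B_0$ of its $F$-orbit; then $B_0=\theta(b)$ for some $\theta\in G$ and $B_0$ is monic irreducible in $x_1$ over $K(\xh)$, and we write $\lambda(b)=\si_1^{e_\lambda}\si_k^{c_\lambda}(B_0)$ for the $\lambda$ occurring in $f_1^{(s)}$. Choosing the $c_\lambda$ inside a single window of length $p$, where $p\ge 1$ is least with $\si_k^{p}(B_0)\sim_{\langle\si_1\rangle}B_0$ (or taking the uniquely determined $c_\lambda$ if no such $p$ exists), and applying~\eqref{eq:reductionsi1} to strip the factor $\si_1^{e_\lambda}$ from each denominator, we obtain $f_1^{(s)}=\De_1(G_s)+h$ with $h=\sum_{c=0}^{N}\hat a_c/\si_k^{\,c}(B_0^{\,m})$, where $\hat a_0\ne0$, $\deg_{x_1}(\hat a_c)<\deg_{x_1}(B_0)$, and --- after filling the absent positions by $\hat a_c=0$ --- the $\si_k^{\,c}(B_0)$, $0\le c\le N$, lie in pairwise distinct $\langle\si_1\rangle$-orbits (this is exactly what the window choice secures). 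Because $\De_k(h)=\De_k(f_1^{(s)})-\De_1(\De_k(G_s))$ is again $\si_1$-summable, Lemma~\ref{lemma:cyclicproof}, applied over the field $K(x_j:j\in\{2,\dots,n\}\setminus\{k\})$ with $x_1,x_k$ in the roles of $x,y$, gives an integer $t$ with $\si_k^{N+1}(B_0)=\si_1^{t}(B_0)$, $\si_k^{N+1}(\hat a_0)=\si_1^{t}(\hat a_0)$, and $\hat a_c=\si_k^{\,c}(\hat a_0)$. Undoing $\theta$ and the invertible reduction shifts and using that $G$ is abelian, this yields $M:=N+1\ge 1$ and $T:=t$ with $\si_k^{M}(b)=\si_1^{T}(b)$ and $\si_k^{M}(a_\lambda)=\si_1^{T}(a_\lambda)$ for every $\lambda$ in $f_1^{(s)}$. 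Letting $k$ and $s$ range, we conclude: for each $k\in\{2,\dots,n\}$ and each $\lambda\in\Lambda$ there are $M_{k,\lambda}\ge1$, $T_{k,\lambda}\in\bZ$ with $\si_k^{M_{k,\lambda}}(b)=\si_1^{T_{k,\lambda}}(b)$ and $\si_k^{M_{k,\lambda}}(a_\lambda)=\si_1^{T_{k,\lambda}}(a_\lambda)$.

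To assemble: since $b\in K(\x)\setminus K(\xh)$, Remark~\ref{rmk:pairwise} and Lemma~\ref{lemma:pairwiseinteger} give $b=r(\mathbf v\cdot\mathbf x)$ with $\mathbf v\in\bZ^n$ primitive and $r$ a non-constant univariate polynomial, and $v_1\ne0$ since $b$ involves $x_1$. Fix $\lambda\in\Lambda$. A constant $a_\lambda$ is integer-linear of any type. If $a_\lambda\in K(\xh)\setminus K$, the relations become $\si_k^{M_{k,\lambda}}(a_\lambda)=a_\lambda$ for all $k$, forcing $a_\lambda$ to be free of every $x_k$ (a rational function fixed by a positive power of $\si_k$ is free of $x_k$) and hence constant --- a contradiction; so a non-constant $a_\lambda$ lies in $K(\x)\setminus K(\xh)$, whence, again by Remark~\ref{rmk:pairwise} and Lemma~\ref{lemma:pairwiseinteger}, $a_\lambda=r_\lambda(\mathbf w\cdot\mathbf x)$ with $\mathbf w\in\bZ^n$ primitive, $r_\lambda$ non-constant, $w_1\ne0$. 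Substituting $b=r(\mathbf v\cdot\mathbf x)$ and $a_\lambda=r_\lambda(\mathbf w\cdot\mathbf x)$ into the two relations for a fixed $k$ gives $M_{k,\lambda}v_k=T_{k,\lambda}v_1$ and $M_{k,\lambda}w_k=T_{k,\lambda}w_1$, hence $v_k/v_1=T_{k,\lambda}/M_{k,\lambda}=w_k/w_1$ for all $k$; since $\mathbf v,\mathbf w$ are primitive this forces $\mathbf w=\pm\mathbf v$, i.e.\ $a_\lambda$ is integer-linear of type $\mathbf v$. Therefore all the $a_\lambda$ and $b$ are integer-linear of the common type $\mathbf v$.

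The main obstacle is the passage, inside each $F$-orbit part, to a genuine $\si_k$-chain $h$ meeting the hypotheses of Lemma~\ref{lemma:cyclicproof}: the delicate point is the isotropy of $b$. When $b$ is invariant under some $\si_1^d\si_k^e$ with $e\ne0$ --- equivalently, $b$ is already integer-linear in the pair $(x_1,x_k)$ over $K(x_j:j\in\{2,\dots,n\}\setminus\{k\})$ --- the $\langle\si_1\rangle$-orbits of the $\si_k$-shifts of $B_0$ are periodic, so one must pick the representatives $c_\lambda$ within a single period and fill the gaps with care so that the pairwise $\langle\si_1\rangle$-inequivalence really holds; a short case distinction according to whether $b$ admits such a nontrivial invariance disposes of this.
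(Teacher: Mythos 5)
Your proposal is correct and follows essentially the same route as the paper's proof: reduce to a joint $(\si_1,\si_k)$-shift-invariance for each $k$, isolate each $\langle\si_1,\si_k\rangle$-orbit via Theorem~\ref{thm:decompkeepprop}, normalize to a consecutive $\si_k$-chain with Abramov's reduction~\eqref{eq:reductionformula}, invoke Lemma~\ref{lemma:cyclicproof}, transfer the invariance back to the original numerators, and conclude with Remark~\ref{rmk:pairwise} and Lemma~\ref{lemma:pairwiseinteger}. The only (harmless) deviation is in the final assembly: the paper extracts a single common pair $(L_k,N_k)$ for all orbits from the pairwise $\langle\si_1\rangle$-inequivalence of the $\si_k^{\ell}(b)$, whereas you allow exponents depending on $\lambda$ and then match the integer-linear types by comparing the ratios $T_{k,\lambda}/M_{k,\lambda}$, additionally disposing of the degenerate case $a_\lambda\in K(\widehat{\mathbf{x}})$ explicitly.
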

\begin{proof}
	By Remark~\ref{rmk:pairwise} and Lemma~\ref{lemma:pairwiseinteger}, it is sufficient to show that for each $ k\in \{2,\ldots,n\} $, there exist $ L_k, N_k \in \bZ  $  with $ L_k $ nonzero such that $ \si_{k}^{L_k}(b)=\si_{1}^{N_k} (b)$ and $ \si_{k}^{L_k}(a_{\lambda})=\si_{1}^{N_k} (a_{\lambda})$ for all $ \lambda\in \Lambda$. Let $ E_k:=\left\langle \si_2,\ldots,\si_{k-1},\si_{k+1},\ldots,\si_n\right\rangle $.
	For each $ \lambda\in \Lambda \subset E $, there exist $ t_{\lambda}\in \bZ, \ \eta_{\lambda}\in E_k  $ such that $ \lambda=\si_k^{t_\lambda}\eta_{\lambda} $, and therefore
	\[f_1=\sum_{\lambda \in \Lambda }\frac{a_{\lambda}}{\si_k^{t_\lambda}\eta_{\lambda}(b^m)}.\]
	By applying the reduction formula~\eqref{eq:reductionformula} once again, we can rewrite $ f_1 $ in the form
	\begin{equation}\label{eq:rewritef1}
	f_1=\De_1(f_{1,k})+\sum_{\eta\in \Lambda_k}\sum_{\ell=0}^{T_{\eta}}\frac{\widetilde{a}_{\eta,\ell}}{\si_{k}^{\ell}\eta(b^m)},
	\end{equation}
	where $\Lambda_k\subset E_k$, $ \eta(b)\not\sim_{\left\langle \si_1,\si_k\right\rangle } \eta'(b)$ if $ \eta\neq \eta' $, $ \si_k^{\ell}(b)\not\sim_{\left\langle \si_1 \right\rangle }\si_k^{\ell'}(b)$ if $\ell\neq \ell' $, and $ \widetilde{a}_{\eta,T_\eta}\neq 0 $ for each~$ \eta  $. Furthermore, we assume that this representation is such that $T_\eta \geq0 $ is as small as possible.
	Note that $ \sum_{\ell=0}^{T_{\eta}} {\widetilde{a}_{\eta,\ell}}/{\si_k^{\ell}\eta(b^m)}\in U_{\eta(b),m}^{ \left\langle \si_1,\si_k\right\rangle }$. Recall that by our assumption $ \De_k(f_1) $ is $ \si_1 $-summable. Then by Theorem~\ref{thm:decompkeepprop}, we have that $ \De_k\Bigl(\sum_{\ell=0}^{T_{\eta}} {\widetilde{a}_{\eta,\ell}}/{\si_k^{\ell}\eta(b^m)}\Bigr) $ is $ \si_1 $-summable for each~$\eta$.  Now Lemma~\ref{lemma:cyclicproof} implies that there exist integers $ S_{\eta} $ such that
\begin{gather}
  \si_k^{T_{\eta}+1}\bigl(\eta(b)\bigr) = \si_1^{S_{\eta}}\bigl(\eta(b)\bigr),
  \label{eq:shiftinvarianceofb}\\
  \si_k^{T_{\eta}+1}(\widetilde{a}_{\eta,0}) = \si_1^{S_{\eta}}(\widetilde{a}_{\eta,0}),
  \label{eq:shiftinvarianceofa}\\
  \widetilde{a}_{\eta,\ell} = \si_k^{\ell}(\widetilde{a}_{\eta,0}),\quad
  \forall \ell\in \{0,\ldots,T_{\eta}\}.
  \label{eq:shiftstepofa}
\end{gather}
Applying $\eta^{-1}$ to both sides of Equation~\eqref{eq:shiftinvarianceofb} yields
$\si_k^{T_{\eta}+1}(b)=\si_1^{S_{\eta}}(b)$ since $ G $ is commutative.
Since the $ \si_k^\ell(b) $ are in distinct $ \left\langle \si_1\right\rangle$-orbits,
we have $ T_{\eta}=T_{\eta'}$ and  $S_{\eta}=S_{\eta'}$ for any two
$\eta, \eta' \in \Lambda_k$. Let $ L_k:=T_{\eta}+1 $ and $ N_k:=S_\eta $, then $ L_k $
is the minimal positive integer such that
$ \si_k^{L_k}(b)\sim_{\left\langle \si_1\right\rangle } b $ and
$\si_k^{L_k}(b)=\si_1^{N_k}(b) $. According to Equations~\eqref{eq:shiftinvarianceofa}
and~\eqref{eq:shiftstepofa}, we have
$\si_k^{L_k}(\widetilde{a}_{\eta,\ell})=\si_1^{N_k}(\widetilde{a}_{\eta,\ell})$
for each~$\eta$ and~$\ell$. We observe that
\[
  \res_{\si_{1}}\bigl(f_1,\lambda(b),m\bigr)=[a_{\lambda}]
  \quad \text{and} \quad
  \res_{\si_1}\bigl(f_1,\si_k^{\ell}\eta(b),m\bigr)=[\widetilde{a}_{\eta,\ell}].
\]
	For each $ \lambda\in \Lambda $, there exists a unique pair $ (\eta,\ell) $ where $  \eta\in \Lambda_k,\ell\in \{0,\ldots,T_{\eta}\}$ such that $ \lambda(b)\sim_{\left\langle \si_1 \right\rangle }  \si_k^\ell\eta(b)$. By Definition~\ref{def:polynomialresidue} we have $a_{\lambda}\sim_{\left\langle \si_1 \right\rangle }\widetilde{a}_{\eta,\ell}   $.
	Now Lemma~\ref{lemma:isotropygroupinvariant} implies that $\si_k^{L_k}(a_\lambda)=\si_1^{N_k}(a_\lambda)$.
\end{proof}

We are now ready to give the proof of Theorem~\ref{thm:decompthm}.
\begin{proof}
  We proceed by induction on~$n$. For the base case $n=1$, the theorem follows
  from the fact that any univariate rational function is a uniform WZ-form.
  For $n\geq 2$ suppose that the theorem holds for any WZ-forms in $n-1$ variables.
  As in Lemma~\ref{lemma:subspaceWZ-forms}, we focus on each component of
  the orbital decomposition of~$f_1$ and rewrite it as in~\eqref{eq:rewritef1}.
  Next we use the cyclic operator to describe $ f_1 $ in a more precise way as
  \[
    f_1=\De_1(f_{1,k})+\frac{\si_k^{L_k}-1}{\si_k-1}\bullet
    \sum_{\eta\in \Lambda_k}\frac{\widetilde{a}_{\eta,0}}{\eta(b^m)}.
  \]
  Suppose that $L_k,N_k\in \bZ$ with $L_k\neq0$ such that
  \[
    \si_k^{L_k}\Biggl(\frac{\widetilde{a}_{\eta,0}}{\eta(b^m)}\Biggr) =
    \si_1^{N_k}\Biggl(\frac{\widetilde{a}_{\eta,0}}{\eta(b^m)}\Biggr) .
  \]
  For each $ k\in \{2,\ldots,n\} $, let
  \[
    f_k'= \De_k(f_{1,k})+\frac{\si_1^{N_k}-1}{\si_1-1} \bullet
    \sum_{\eta\in \Lambda_k}\frac{\widetilde{a}_{\eta,0}}{\eta(b^m)}.
  \]
  Then one can easily check that $ \De_k(f_1)=\De_1(f_k') $ with $ f_k' $ and $ f_1 $ being integer-linear of the same type. For $ k,\ell \in \{2,\ldots,n\} $ with $ k\neq \ell  $, we have
	$	\De_k(f_1)=\De_1(f_k') $ and $\De_\ell(f_1)=\De_1(f_\ell')$, from which it follows that
	\[
	\De_\ell\De_1(f_k')=\De_\ell\De_k(f_1)=\De_k\De_1(f_\ell').
	\]

Thus $ \De_1\bigl(\De_\ell(f_k')-\De_k(f_\ell')\bigr)=0 $, i.e., $ \De_\ell(f_k')-\De_k(f_\ell')\in K(\xh) $. By construction, we have $ f_{1,k}\in U_{b,m}^G $ and $ f_2',\ldots,f_n' \in U_{b,m}^G$. By Lemma~\ref{lemma:invariantspace}, also $\De_\ell(f_k')-\De_k(f_\ell')$ is an element of $ U_{b,m}^G$. According to the definition of $  U_{b,m}^G $ in~(\ref{eq:orbitalsubspace}), one has
\[
  U_{b,m}^G \cap K(\xh)=\{0\}.
\]
Thus $ \De_\ell(f_k')-\De_k(f_\ell')=0 $.
By Definition~\ref{def:uniformWZ-formdefinition}, $ (f_1,f_2',\ldots,f_n') $ is a uniform WZ-form in $U_{b,m}^G$, denoted by $\omega_{i,j}$ for some $i,j$.

In conclusion, from the orbital decomposition of~$f_1$, we can obtain a
WZ-form $(f_1,f_2',\ldots,f_n')$ which is one exact WZ-form $\omega_0$ plus
several uniform WZ-forms~$\omega_{i,j}$. Note that there may remain a WZ-form
of the form $ (0,f_2-f_2',\ldots,f_n-f_n')$. From the compatibility
conditions~\eqref{eq:compatiblehyperarithmetic}, we have for each
$ k\in \{2,\ldots,n\} $ that $ \De_1(f_k-f_k')=\De_k(0)=0 $, so
$ f_k-f_k' \in K(\xh)$. Hence the remaining form can be viewed as an
$(n-1)$-variable WZ-form w.r.t.~$(\De_2,\ldots,\De_n)$. By the induction
hypothesis, the proof is completed.
\end{proof}
Note that this decomposition is not unique, because of two aspects.
When a WZ-form is both exact and uniform, see Remark~\ref{rmk:exactuniform},
we choose to put it into the exact part, which minimizes the uniform part.
Second, the final result depends on the operators in~$G$ that are chosen
for the orbital decomposition.
Next we give an example to illustrate how the decomposition works.
\begin{example}\label{ex:tripleWZ}
  Let $ \omega=(f,g,h)\in K(x,y,z)^3 $ be a WZ-form with
  \begin{align*}
    f &= \sum_{\ell=0}^{3}\frac{1}{4x+6y+5z+\ell},\\
    g &= \sum_{\ell=0}^{5}\frac{1}{4x+6y+5z+\ell}+\sum_{\ell=0}^{2}\frac{1}{3y+2z+\ell},\\
    h &= \sum_{\ell=0}^{4}\frac{1}{4x+6y+5z+\ell}+\sum_{\ell=0}^{1}\frac{1}{3y+2z+\ell}.
  \end{align*}
  It is easy to check that $(f,g,h)$ satisfy the following compatibility conditions:
  \[
    \big\lbrace\De_y(f)=\De_x(g), \ \De_z(f)=\De_x(h),  \ \De_z(g)=\De_y(h)\big\rbrace.
  \]
  Let $b=4x+6y+5z$ be the same as in Example~\ref{ex:orbitalresidue}, while
  the rational function~$f$ here is different. In terms of~$b$ it can be
  written as
  \[
    f=\frac{1}{b}+\frac{1}{\si_x^{-1}\si_z(b)}+
    \frac{1}{\si_x^{-1}\si_y(b)}+\frac{1}{\si_x^{-3}\si_z^{3}(b)},
  \]
  but note that this representation is not unique.
  Similarly, let $c=3y+2z$ and rewrite
  \[
    \sum_{\ell=0}^{2}\frac{1}{3y+2z+\ell}=
    \frac{1}{c}+\frac{1}{\si_{y}^{-1}\si_z^{2}(c)}+\frac{1}{\si_z(c)}.
  \]
  Then we can decompose~$\omega$ into an exact WZ-form plus two uniform WZ-forms:
	\begin{align*}
		f&=\De_x(a+\bar{a}) + \biggl(\De_x(a_2)+ \frac{\si_y^2-1}{\si_y-1}\cdot\frac{\si_z^2-1}{\si_z-1}\bullet \dfrac{1}{b} \biggr) +
		\frac{\si_y^0-1}{\si_y-1}\cdot\frac{\si_z^3-1}{\si_z-1}\bullet \frac{1}{c}\\
		&=\De_x(a+\bar{a})+\biggl(\De_x(a_3)+\frac{\si_z^4-1}{\si_z-1}\cdot \frac{\si_y-1}{\si_y-1}\bullet \dfrac{1}{b}\biggr)+\frac{\si_z^3-1}{\si_z-1}\cdot\frac{\si_y^0-1}{\si_y-1}\bullet \frac{1}{c},\\
		g&=\De_y(a+\bar{a})+\biggl( \De_y(a_2)+\frac{\si_x^3-1}{\si_x-1}\cdot \frac{\si_z^2-1}{\si_z-1}\bullet \dfrac{1}{b}\biggr) +\frac{\si_x-1}{\si_x-1}\cdot \frac{\si_z^3-1}{\si_z-1}\bullet \frac{1}{c},\\
		h&=\De_z(a+\bar{a})+\biggl(\De_z(a_3)+\frac{\si_x^5-1}{\si_x-1}\cdot\frac{\si_y-1}{\si_y-1}\bullet \dfrac{1}{b}\biggr)+\frac{\si_x-1}{\si_x-1}\cdot\frac{\si_y^2-1}{\si_y-1}\bullet \frac{1}{c}.
	\end{align*}
	where
	\begin{align*}
	    &a=-\dfrac{1}{\si_x^{-1}\si_z(b)}-\dfrac{1}{\si_x^{-1}\si_y(b)}-\dfrac{1}{\si_x^{-3}\si_z^3(b)}-\dfrac{1}{\si_x^{-2}\si_z^3(b)}-\frac{1}{\si_x^{-1}\si_z^3(b)} ,\\
		&a_2=\frac{1}{\si_y\si_z(b)},\quad
		a_3=-\dfrac{1}{\si_x^{-1}\si_z^2(b)},\quad
		\bar{a}=-\frac{1}{\si_y^{-1}\si_z^{2}(c)}.
	\end{align*}
	As we can see, the first uniform WZ-form has type~$(4,6,5)$, while
        the second one has type~$(0,3,2)$.
\end{example}

\section{Structure of uniform WZ-forms}\label{sec:uniformWZ-forms}

Theorem~\ref{thm:decompthm} tells us how every WZ-form can be decomposed into
exact and uniform WZ-forms. While exact WZ-forms are easy to describe and to
construct, Definition~\ref{def:uniformWZ-formdefinition} only allows us to
check whether a given tuple is a uniform WZ-form, but this characterization is
not explicit enough to construct such forms. In this section, we use a
difference homomorphism in order to write a uniform WZ-form in terms of its
integer-linear type and a single univariate rational function. Then we finish
our proof of the additive Ore--Sato theorem.

Let $(A,\bsi) $ and $(A,\btau) $ be two difference
rings, where $ \bsi=(\si_1,\ldots,\si_n )$ and $
\btau=(\tau_1,\ldots,\tau_n) $. A homomorphism (resp.\ isomorphism)
$ \phi\colon A  \rightarrow  A $ is called a difference homomorphism
(resp.\ isomorphism) from $(A,\bsi) $ to  $(A,\btau) $
if $ \phi \circ \si_{i}=\tau_i \circ \phi $ for each $i \in \{1,\dots,n\}$.
In other words, for each~$i$ there is a commutative diagram:
\begin{center}
  \begin{tikzcd}
    A \arrow{r}{ \si_i } \arrow{d}{ \phi }& A \arrow{d}{ \phi }\\
    A  \arrow{r}{ \tau_i } & A
  \end{tikzcd}
\end{center}

\begin{lemma}\label{lemma:differencehomomorphism}
  Given a unimodular matrix $ \mathbf{D}\in \bZ^{n\times n} $, i.e.,
  $\mathbf{D}^{-1}\in \bZ^{n \times n}$, we define a ring isomorphism
  $\phi \colon K(\x) \to  K(\x)$ by $\phi(\x)=\mathbf{D}\cdot\x$.
  Furthermore, we let the $\sigma_i$ act on vectors as
  $\sigma_i(\x)=\x+\e_i$, where $\e_i$ denotes the $i$-th unit vector.
  If we define $\tau_i(\x)=\x+\mathbf{D}^{-1}\cdot\e_i$
  for all $i\in\{1,\dots,n\}$, then $ \phi $ is a difference isomorphism
  from $ \bigl(K(\mathbf{x}),\bsi\bigr) $ to
  $\bigl(K(\mathbf{x}),\btau\bigr)$.
\end{lemma}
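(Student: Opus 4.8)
The plan is to verify the three things the conclusion asks for: that $\phi$ is a ring endomorphism of $K(\x)$, that it is bijective, and that it intertwines $\bsi$ with $\btau$. For the first two points I would start on the polynomial ring: the assignment $x_j\mapsto(\mathbf{D}\x)_j=\sum_{k=1}^{n}D_{jk}x_k$ extends uniquely to a $K$-algebra endomorphism of $K[\x]$ that acts on a general polynomial by the linear substitution $f\mapsto f(\mathbf{D}\x)$. Since $\mathbf{D}$ is unimodular, $\mathbf{D}^{-1}$ also lies in $\bZ^{n\times n}$, so the analogous substitution $\psi\colon f\mapsto f(\mathbf{D}^{-1}\x)$ is again a $K$-algebra endomorphism of $K[\x]$; the relations $\mathbf{D}\mathbf{D}^{-1}=\mathbf{D}^{-1}\mathbf{D}=I$ give $\phi\circ\psi=\psi\circ\phi=\mathrm{id}$ on $K[\x]$, so both $\phi$ and $\psi$ are automorphisms of $K[\x]$, and they extend uniquely to mutually inverse automorphisms of the fraction field $K(\x)$. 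Thus $\phi$ is a ring isomorphism with $\phi^{-1}=\psi$.

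Next I would record that the $\tau_i$ are genuine difference operators of the required type: each $\mathbf{D}^{-1}\e_i$ is the $i$-th column of $\mathbf{D}^{-1}$, hence an integer vector, so $\tau_i$ is the shift by this vector --- an automorphism of $K(\x)$ --- and the $\tau_i$ commute pairwise because $\mathbf{D}^{-1}\e_i+\mathbf{D}^{-1}\e_j$ is symmetric in $i$ and $j$; hence $(K(\x),\btau)$ is indeed a difference ring. The heart of the lemma is the intertwining identity $\phi\circ\si_i=\tau_i\circ\phi$. As both sides are $K$-algebra homomorphisms, it suffices to compare them on an arbitrary $f\in K(\x)$ via the substitution description: on one hand $(\phi\circ\si_i)(f)=(\si_i f)(\mathbf{D}\x)=f(\mathbf{D}\x+\e_i)$, and on the other hand $(\tau_i\circ\phi)(f)=(\phi f)(\x+\mathbf{D}^{-1}\e_i)=f\bigl(\mathbf{D}(\x+\mathbf{D}^{-1}\e_i)\bigr)=f(\mathbf{D}\x+\e_i)$, using $\mathbf{D}\mathbf{D}^{-1}\e_i=\e_i$. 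These coincide, which is precisely the commuting square in the statement. Equivalently, one could check the identity directly on the generators $x_1,\dots,x_n$, which boils down to the same computation $\mathbf{D}\mathbf{D}^{-1}=I$.

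There is no real obstacle; the proof is bookkeeping once the conventions are pinned down. The only point needing care is the direction of the substitutions: the endomorphism sending $x_j$ to the $j$-th coordinate of $\mathbf{D}\x$ corresponds to $f(\x)\mapsto f(\mathbf{D}\x)$, and for the square to close one must shift by $\mathbf{D}^{-1}\e_i$ rather than $\mathbf{D}\e_i$; interchanging $\mathbf{D}$ and $\mathbf{D}^{-1}$ anywhere would break the intertwining. With the conventions exactly as stated in the lemma, the single identity $\mathbf{D}(\x+\mathbf{D}^{-1}\e_i)=\mathbf{D}\x+\e_i$ completes the argument.
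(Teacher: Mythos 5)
Your proposal is correct and its core — the computation $(\phi\circ\si_i)(f)=f(\mathbf{D}\x+\e_i)=(\tau_i\circ\phi)(f)$ via $\mathbf{D}\mathbf{D}^{-1}\e_i=\e_i$ — is exactly the paper's proof. The additional verifications (that $\phi$ is a ring automorphism with inverse $f\mapsto f(\mathbf{D}^{-1}\x)$, and that the $\tau_i$ are commuting integer shifts) are routine bookkeeping the paper omits, so the two arguments are essentially the same.
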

\begin{proof}
  We have to check that $ \phi \circ \si_{i}=\tau_i \circ \phi $. For the
  left-hand side we get
  \[
    \phi(\si_{i}(f(\x))) = \phi(f(\x+\e_i)) = f(\mathbf{D}\cdot\x+\e_i),
  \]
  and the right-hand side gives
  \[
    \tau_i(\phi(f(\x))) = \tau_i\bigl(f(\mathbf{D}\cdot\x)\bigr) =
    f\bigl(\mathbf{D}\cdot(\x+\mathbf{D}^{-1}\cdot\e_i)\bigr) = f\bigl(\mathbf{D}\cdot\x+\e_i\bigr).
  \] This completes the proof. 
\end{proof}

Given $ f_1,\ldots,f_n\in K(\x) $ satisfying the compatibility
conditions~\eqref{eq:compatiblehyperarithmetic}, then
\cite[Theorem~2]{Bronstein2005} shows that there exists a difference ring extension $
\bigl(K(\x)[H],\bsi\bigr)$ of $ \bigl(K(\x),\bsi\bigr) $, where $ H $ is a
hyperarithmetic expression with certificates~$f_1,\ldots,f_n $. A difference
homomorphism from $ \bigl(K(\x),\bsi\bigr)$ to $ \bigl(K(\x),\btau\bigr)$
can naturally be extended to the corresponding difference ring extensions.

\newcommand\citePropNineAP{\cite[Proposition~9]{AbramovPetkovsek2002a}}
\begin{lemma}[\citePropNineAP]\label{lemma:integermatrix}
	For every integer vector $ \mathbf{v}=(v_1,\dots,v_n) $ there is an integer matrix $ \mathbf{D}\in \mathbb{Z}^{n\times n} $ with the first row $ \mathbf{v} $ and $ \det (\mathbf{D})=\gcd (v_1,\ldots,v_n) $.
\end{lemma}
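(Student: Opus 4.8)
The plan is to induct on $n$, using nothing beyond B\'ezout's identity and a single cofactor expansion. For $n=1$ the matrix $\mathbf{D}=(v_1)$ already does the job, so suppose $n\ge 2$ and that the statement holds in dimension $n-1$. I would set $h:=\gcd(v_1,\dots,v_{n-1})$ and $g:=\gcd(v_1,\dots,v_n)=\gcd(h,v_n)$. The degenerate case $h=0$ (where $v_1=\dots=v_{n-1}=0$ and $g=|v_n|$) is handled directly by a matrix with first row $(0,\dots,0,v_n)$ and the remaining rows a suitably signed list of unit vectors, so from now on assume $h\ne 0$.

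Next, by the induction hypothesis I obtain $\mathbf{M}\in\bZ^{(n-1)\times(n-1)}$ with first row $(v_1,\dots,v_{n-1})$ and $\det\mathbf{M}=h$. Since $h\mid v_i$ for $i<n$, the entries $v_i/h$ are integers, and by B\'ezout I can pick $s,t\in\bZ$ with $sh+(-1)^n t\,v_n=g$. I would then border $\mathbf{M}$ into an $n\times n$ integer matrix $\mathbf{D}$: take $\mathbf{v}$ as the first row, the rows $2,\dots,n-1$ of $\mathbf{M}$ with a trailing $0$ appended as the middle $n-2$ rows, and $\bigl(-t v_1/h,\dots,-t v_{n-1}/h,\,s\bigr)$ as the last row. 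Expanding $\det\mathbf{D}$ along the last column, only the top entry $v_n$ and the bottom entry $s$ contribute; the minor of $s$ is exactly $\mathbf{M}$ (hence $h$), and the minor of $v_n$ is $\mathbf{M}$ with its first row rescaled by $-t/h$ (hence $-t$). Therefore $\det\mathbf{D}=(-1)^{1+n}v_n(-t)+sh=sh+(-1)^n t\,v_n=g$, while the first row of $\mathbf{D}$ is $\mathbf{v}$ by construction, completing the induction.

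I do not expect a genuine obstacle here, since the argument is elementary; the only points that demand care are the sign/parity bookkeeping in the cofactor expansion (this is why the B\'ezout relation is taken in the form $sh+(-1)^n t v_n=g$ rather than $sh+tv_n=g$, and also why one can pin the determinant down to exactly $g$: for $n\ge 2$ one may always correct an overall sign by negating a row or swapping two rows other than the first, so the convention on the sign of $\gcd$ is immaterial) and the analogous bookkeeping in the degenerate case $h=0$. As an alternative I could argue more conceptually by first dividing out $g$: writing $\mathbf{w}=\mathbf{v}/g$, a primitive integer vector spans a saturated sublattice, so $\bZ^n/\bZ\mathbf{w}$ is torsion-free, hence free, hence the inclusion $\bZ\mathbf{w}\hookrightarrow\bZ^n$ splits and $\mathbf{w}$ extends to a $\bZ$-basis of $\bZ^n$, whose transition matrix is unimodular; multiplying its first row by $g$ then yields $\mathbf{D}$ with $\det\mathbf{D}=g$. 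Since this rests on the same circle of facts (a primitive vector is part of a $\bZ$-basis), I would present the explicit inductive bordering above.
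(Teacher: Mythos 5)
The paper does not prove this lemma at all: it is imported verbatim as Proposition~9 of Abramov and \Petkovsek{} (2002), so there is no in-paper argument to compare against. Your inductive bordering construction is the standard proof of this fact and is essentially correct; the reduction to the primitive case and the lattice-theoretic alternative you sketch are also both fine. One concrete slip in the bookkeeping you yourself flag as delicate: in the cofactor expansion along the last column, the minor of the entry $v_n$ is \emph{not} simply ``$\mathbf{M}$ with its first row rescaled by $-t/h$''. After deleting row $1$ and column $n$ of $\mathbf{D}$, the rescaled copy of the first row of $\mathbf{M}$ sits in the \emph{last} position, so the minor equals $(-1)^{n-2}\cdot(-t/h)\cdot\det\mathbf{M}=(-1)^{n+1}t$ rather than $-t$, and the total comes out as $\det\mathbf{D}=sh+tv_n$ (no $(-1)^n$). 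With your choice $sh+(-1)^n t v_n=g$ this matches $g$ only for even $n$. The fix is exactly the one you anticipate: since $s$ and $t$ range freely over $\bZ$ and the determinant is $\varepsilon_1 sh+\varepsilon_2 t v_n$ for fixed signs $\varepsilon_i\in\{\pm1\}$, B\'ezout lets you hit $g$ on the nose whatever the signs turn out to be (equivalently, replace $t$ by $-t$ in the last row when needed). A second, purely conventional point: in the base case $n=1$ the determinant is forced to be $v_1$, which equals $\gcd(v_1)$ only up to sign; this propagates harmlessly into the induction because the sign of $\det\mathbf{M}=\pm h$ is absorbed into the coefficient $s$. Neither issue is a genuine gap.
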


Next we use such a matrix $\mathbf{D}$ to construct the difference homomorphism.
\begin{theorem}\label{thm:structureUniWZ-forms}
	Let $ \bigl(f_1(\mathbf{v}
	\cdot \mathbf{x}),\ldots,f_n(\mathbf{v}\cdot \mathbf{x})\bigr) $ be a uniform WZ-form of type~$ \mathbf{v} $, then there exist constants $ \mu_1,\ldots,\mu_n \in K $ and a univariate rational function $r \in K(z) $ such that for each $ i\in \{1,\ldots,n\} $,
	\[
	f_i(\mathbf{v}\cdot\mathbf{x})=\mu_{i}+\sideset{}{_0^{v_i}}\sum_{\ell}r(\mathbf{v} \cdot \mathbf{x}+\ell) .
	\]
\end{theorem}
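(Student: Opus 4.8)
The plan is to use the difference isomorphism to transport the whole problem to a single variable, and then to produce the rational function $r$ by a B\'ezout identity in the algebra of shift operators. Since $\mathbf v$ is an integer-linear type, $\gcd(v_1,\dots,v_n)=1$, so Lemma~\ref{lemma:integermatrix} gives an integer matrix $\mathbf D$ with first row $\mathbf v$ and $\det\mathbf D=1$; thus $\mathbf D$ and $\mathbf D^{-1}$ are unimodular. Applying Lemma~\ref{lemma:differencehomomorphism} with the matrix $\mathbf D^{-1}$, the ring isomorphism $\psi\colon K(\x)\to K(\x)$, $\psi(\x)=\mathbf D^{-1}\cdot\x$, is a difference isomorphism from $\bigl(K(\x),\bsi\bigr)$ to $\bigl(K(\x),\btau\bigr)$, where $\tau_i$ acts by $\tau_i(\x)=\x+\mathbf D\cdot\e_i$; its inverse $\phi:=\psi^{-1}$ satisfies $\phi(x_1)=\mathbf v\cdot\x$, so that $f_i(\mathbf v\cdot\x)=\phi\bigl(f_i(x_1)\bigr)$ and $\psi\bigl(f_i(\mathbf v\cdot\x)\bigr)=f_i(x_1)$. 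Applying $\psi$ to the relations $\De_i\bigl(f_j(\mathbf v\cdot\x)\bigr)=\De_j\bigl(f_i(\mathbf v\cdot\x)\bigr)$, using $\psi\circ\si_i=\tau_i\circ\psi$ and the fact that the first entry of $\mathbf D\cdot\e_i$ is $v_i$, and writing $z:=x_1$, we obtain $f_1,\dots,f_n\in K(z)$ with
\[
  (\si^{v_i}-1)\bullet f_j=(\si^{v_j}-1)\bullet f_i\qquad(1\le i,j\le n),
\]
where $\si$ now denotes the shift $z\mapsto z+1$. It then suffices to find $r\in K(z)$ and $\mu_1,\dots,\mu_n\in K$ with $f_i=\mu_i+\frac{\si^{v_i}-1}{\si-1}\bullet r$; substituting $z\mapsto\mathbf v\cdot\x$ afterwards yields the asserted formula.

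For the univariate statement I work in the principal ideal ring $R:=K[\si,\si^{-1}]$ of Laurent polynomials in the shift. Since $\si^{-m}-1=-\si^{-m}(\si^{m}-1)$, since $\gcd(\si^{a}-1,\si^{b}-1)=\si^{\gcd(a,b)}-1$ in $K[\si]$, and since $\mathbf v\neq\vnull$ while any vanishing $v_k$ contributes $\si^{0}-1=0$, the ideal of $R$ generated by $\si^{v_1}-1,\dots,\si^{v_n}-1$ coincides with the ideal generated by $\si^{\gcd(v_1,\dots,v_n)}-1=\si-1$. Hence there are $q_1,\dots,q_n\in R$ with $\sum_k q_k(\si^{v_k}-1)=\si-1$. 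Define $r:=\sum_k q_k\bullet f_k\in K(z)$. Because the elements of $R$ commute, the compatibility relations above give
\begin{align*}
  (\si^{v_i}-1)\bullet r
  &=\sum_k q_k\bullet\bigl((\si^{v_i}-1)\bullet f_k\bigr)
   =\sum_k q_k\bullet\bigl((\si^{v_k}-1)\bullet f_i\bigr)\\
  &=\Bigl(\sum_k q_k(\si^{v_k}-1)\Bigr)\bullet f_i
   =(\si-1)\bullet f_i .
\end{align*}
Writing $P_i:=\frac{\si^{v_i}-1}{\si-1}$ for the cyclic operator of Definition~\ref{def:cyclicoperator}, so that $(\si-1)P_i=\si^{v_i}-1$, this says $(\si-1)\bullet\bigl(f_i-P_i\bullet r\bigr)=0$. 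A rational function in $K(z)$ fixed by $z\mapsto z+1$ is a constant, so $f_i-P_i\bullet r=\mu_i\in K$, i.e.\ $f_i=\mu_i+\frac{\si^{v_i}-1}{\si-1}\bullet r=\mu_i+\sideset{}{_0^{v_i}}\sum_\ell r(z+\ell)$. Substituting $z\mapsto\mathbf v\cdot\x$ completes the proof.

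The reduction to one variable is routine once Lemmas~\ref{lemma:differencehomomorphism} and~\ref{lemma:integermatrix} are available, so the real content lies in the univariate step, and there the only points needing care are: (i) verifying that $\gcd\{\si^{v_k}-1\}=\si-1$ in $K[\si,\si^{-1}]$ even when some $v_k$ are negative or zero, which is precisely where the hypothesis $\gcd(v_1,\dots,v_n)=1$ enters; and (ii) checking that the single candidate $r=\sum_k q_k\bullet f_k$ works for \emph{all} indices $i$ simultaneously, which is exactly where the full system of compatibility conditions is used.
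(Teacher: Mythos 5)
Your argument is correct, and while the first half (reducing to a single variable via Lemmas~\ref{lemma:integermatrix} and~\ref{lemma:differencehomomorphism}) coincides with the paper's setup, the second half takes a genuinely different route. The paper lifts the form to a hyperarithmetic term $H$ in a difference ring extension, transports it to $H'$ with $\tau_i(H')=H'+f_i(x_1)$, rewrites $\si_i=\prod_j\tau_j^{\widetilde d_{ji}}$ to see that the $\si$-certificates $f_i'$ of $H'$ are univariate, uses compatibility to force $f_2',\dots,f_n'\in K$, reconstructs $H'\simeq F'(x_1)+\sum_k f_k'x_k$ with $F'$ an antidifference of $f_1'$, and reads off the formula from $\De_i(H)$ with $r=f_1'$. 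You instead stay entirely inside $K(z)$: after transport the compatibility conditions become $(\si^{v_i}-1)\bullet f_j=(\si^{v_j}-1)\bullet f_i$, and you solve this system by a B\'ezout identity $\sum_k q_k(\si^{v_k}-1)=\si-1$ in the principal ideal ring $K[\si,\si^{-1}]$ (which is exactly where $\gcd(v_1,\dots,v_n)=1$ enters), obtaining $r=\sum_k q_k\bullet f_k$ explicitly and concluding from the fact that $(\si-1)$-invariant elements of $K(z)$ are constants. Your version buys a more self-contained and constructive argument: it needs no difference ring extension, no antidifference $F'$ living outside $K(\x)$, and it exhibits $r$ as an explicit $\bZ[\si,\si^{-1}]$-combination of the $f_k$; the paper's version, in exchange, stays uniform with its hyperarithmetic-term narrative and feeds directly into the subsequent polygamma description. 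The only hypotheses you should state explicitly are the conventions you already flag: that the type satisfies $\gcd(v_1,\dots,v_n)=1$ (the paper's WLOG, justified by Definition~\ref{def:integerlinear}) and that at least one $v_k$ is nonzero, both of which you handle correctly.
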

\begin{proof}
	Let $ H(\mathbf{x}) $ be a hyperarithmetic expression, and let $f_1(
	\vv\cdot\mathbf{x}),\ldots,f_n(\vv\cdot\mathbf{x})$ be its  certificates. That is to say for each $ i $,
	\begin{equation}\label{eq:hyperarithmetic}
		\si_{i}\bigl(H(\mathbf{x})\bigr)=H(\mathbf{x}) + f_i(\mathbf{v} \cdot \mathbf{x}).
	\end{equation}
	Without loss of generality, we can assume that $ \gcd(v_1,\ldots,v_n)=1 $. By Lemma~\ref{lemma:integermatrix}, there exists an integer matrix $ \mathbf{D}=(d_{ij})\in \bZ^{n\times n} $ with $ \det (\mathbf{D})=1 $ whose first row is $ \mathbf{v} $.
	Let $ \phi\colon K(\mathbf{x}) \rightarrow K(\mathbf{x}) $ such that
	\[ \phi\bigl(f(\mathbf{x})\bigr)=f\bigl(\mathbf{D}^{-1}\cdot\mathbf{x}\bigr),\ \text{for all } f(\mathbf{x})\in K(\mathbf{x}). \]
	By Lemma~\ref{lemma:differencehomomorphism}, $ \phi $ is a difference isomorphism from $ \bigl(K(\mathbf{x})[H],\bsi\bigr) $ to $ \bigl(K(\mathbf{x})[H],\btau\bigr) $, where $ \tau_i(\x)=\x+\mathbf{D}\cdot\e_i$ for all $ i $ in $\{1,\ldots,n\} $.
	Applying the operator $ \phi $ to Equation~\eqref{eq:hyperarithmetic} yields
	\begin{align*}
		\phi\bigl(\si_{i}\bigl(H(\mathbf{x})\bigr)\bigr)&=
                \phi\bigl(H(\mathbf{x})\bigr) + \phi\bigl(f_i(\mathbf{v} \cdot \mathbf{x})\bigr),\\
		\tau_i \bigl(\phi\bigl(H(\mathbf{x})\bigr) & =\phi\bigl(H(\mathbf{x})\bigr) + f_i(x_1).
	\end{align*}
	Let $ H'(\mathbf{x})=\phi\bigl(H(\mathbf{x})\bigr) $, then it follows that  $ \tau_i\bigl(H'(\mathbf{x})\bigr)=H'(\mathbf{x}) + f_i(x_1)$.
	For any integer $ m>0$ and $ i\in \{1,\ldots,n\} $ we have
	\begin{align}
	  \tau_i^{m}\bigl(H'(\mathbf{x})\bigr)&= H'(\x) + \sum_{j=0}^{m-1}f_i\bigl(x_1+jd_{1i}\bigr)
          =: H'(\x) + f_{i,m}(x_1),\label{eq:tauim}\\
	  \tau_i^{-m}\bigl(H'(\mathbf{x})\bigr) &= H'(\x) - \sum_{j=1}^{m}f_i\bigl(x_1-jd_{1i}\bigr)
          =: H'(\x) + f_{i,-m}(x_1).\label{eq:tauimm}
	\end{align}
	Let $ \mathbf{D}^{-1}=:\bigl(\widetilde{d}_{ij}\bigr)_{n\times n} $, then for all $  i\in \{1,\ldots,n\} $ we can rewrite $\si_i$ in terms of $\tau_1,\dots,\tau_n$:
	\[
	  \si_{i} = \prod_{j=1}^{n}\tau_j^{\widetilde{d}_{ji}}.
	\]
	By applying \eqref{eq:tauim} and \eqref{eq:tauimm} repeatedly, we obtain $\si_{i}\bigl(H'(\x)\bigr) = H'(\x)+f_{i}'(x_1)$ for some univariate rational function~$f_i'$.
    That is to say, $\De_i\bigl(H'(\mathbf{x})\bigr)=f_{i}'(x_1)$. By the compatibility conditions~\eqref{eq:compatiblehyperarithmetic} we have that $\De_1\bigl(f'_k(x_1)\bigr) = \De_k\bigl(f'_1(x_1)\bigr) = 0$, and thus
    $ f_k' \in K$ for all $ k\in \{2,\ldots,n\} $.
	Then an easy induction shows that
	 \[H'(\mathbf{x})\simeq F'(x_1)+\sum_{k=2}^{n}f_{k}'x_k,\]
	 where $ F'(x_1) $ is a solution of the difference equation $ y(x_1+1)-y(x_1)=f_1'(x_1) $.
	Next, we can recover $ H(\x) $ as follows,
	\begin{align*}	
		H(\x)&= \phi^{-1}\bigl(H'(\x)\bigr)\\
		&=H'\bigl(\mathbf{D}\cdot\mathbf{x}\bigr)\\
		&\simeq F'(\mathbf{v}\cdot\x)+\sum_{k=2}^{n}f_{k}'\biggl(\sum_{i=1}^{n}d_{ki}x_i\biggr) \\
		&=F'(\mathbf{v}\cdot\x)+\sum_{i=1}^{n}\biggl(\sum_{k=2}^{n}f_{k}'d_{ki}\biggr)x_i,
	\end{align*}
	where $ F'(\mathbf{v}\cdot \x+1)-F'(\mathbf{v})=f_1'(\mathbf{v}\cdot \x) $.
	Write that $ \mu_{i}:=\sum_{k=2}^{n}f_{k}'d_{ki} $. Then for each $ i\in \{1,\ldots,n\} $,
	\[
		f_i(\mathbf{v}\cdot\mathbf{x})=\De_{i}\bigl(H(\mathbf{x})\bigr)=
		\begin{cases}
			\displaystyle \ \mu_{i}+\sum_{\ell=0}^{v_i-1}f_1'(\mathbf{v} \cdot \mathbf{x}+\ell),& \text{if }  v_i > 0;\\[0.4 em]
		\ 	\mu_i , & \text{if }  v_i=0; \\[0.4 em]
		\ 	\displaystyle \mu_{i}-\sum_{\ell=v_i}^{-1}f_1'(\mathbf{v} \cdot \mathbf{x}+\ell), & \text{if }   v_i <0.
		\end{cases}
	\]
	Finally we let the univariate rational function $ r$ be defined as $ f_1' $.
\end{proof}
Eventually we obtain Theorem~\ref{thm:addOreSato} by combining
Theorems~\ref{thm:decompthm} and~\ref{thm:structureUniWZ-forms}. Note that we
can disregard the $\mu_i$ since the constant tuple $(\mu_1,\ldots,\mu_n)$
itself can be viewed as an exact WZ-form. We show that any
hyperarithmetic expression can be described, up to conjugation, as a rational
function plus a $ K $-linear combination of polygamma functions.  First we
employ the partial fraction decomposition on the univariate function $ r $
over $ K $:
\[
  r(z)=\sum_s\sum_t\frac{\beta_{s,t}}{(z+\alpha_s)^t},
\]
where $\alpha_s, \beta_{s,t} \in K$ and $s, t\in \bN $, both with finite support.

According to the recurrence formula of polygamma functions~\cite[(5.15)]{NIST:DLMF}
\[
  \psi^{(t)}(z+1)-\psi^{(t)}(z)=\frac{(-1)^tt!}{z^{t+1}},\quad t=0,1,\ldots
\]
we have
\[
  \psi^{(t)}(z+\alpha_s+1)-\psi^{(t)}(z+\alpha_s)=\frac{(-1)^tt!}{(z+\alpha_s)^{t+1}}.
\]
Then the hyperarithmetic expression $ H' $ with certificates
\[
  \Bigl( \sideset{}{_0^{v_1}}\sum_{\ell}r(\mathbf{v} \cdot \mathbf{x}+\ell),\ldots,
  \sideset{}{_0^{v_n}}\sum_{\ell}r(\mathbf{v} \cdot \mathbf{x}+\ell)\Bigr)
\]
is conjugate to
\[
  \sum_s\sum_t \frac{\beta_{s,t+1}}{(-1)^t t!}\psi^{(t)}(\mathbf{v}\cdot \x+\alpha_s).
\]
\begin{corollary}
  Any hyperarithmetic expression is conjugate to
  \[
    a+\sum_{\vv\in V}\sum_s\sum_t \beta_{\vv,s,t}\psi^{(t)}(\vv\cdot \x+\alpha_{\vv,s}),
  \]
  where $a\in K(\x)$, $ V\subset\bZ^{n} $, $ s,t\in \bN $,
  and for each $\vv$, we have $ \beta_{\vv,s,t},\alpha_{\vv,s}\in K $.
\end{corollary}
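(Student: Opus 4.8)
The plan is to read the corollary off the additive Ore--Sato theorem (Theorem~\ref{thm:addOreSato}) together with the partial-fraction/polygamma computation carried out just above, exploiting the elementary fact that the certificate assignment is additive: if $H_1,H_2$ are hyperarithmetic with certificates $(g_1,\dots,g_n)$ and $(h_1,\dots,h_n)$, then $\si_i(H_1+H_2)-(H_1+H_2)=g_i+h_i$ for every $i$, so $H_1+H_2$ is hyperarithmetic with certificates $(g_1+h_1,\dots,g_n+h_n)$; and ``conjugate'' means precisely that two hyperarithmetic terms share their certificates. Hence it suffices to split the certificates of a given term into pieces, each realized by one of the building blocks appearing in the claim, and then add the corresponding terms back together.

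Concretely, I would start with an arbitrary hyperarithmetic term $H$ and let $f_1,\dots,f_n$ be its certificates; these satisfy the compatibility conditions~\eqref{eq:compatiblehyperarithmetic}, so Theorem~\ref{thm:addOreSato} writes each $f_j$ as $\De_j(a)+\mu_j+\sum_{\mathbf{v}\in V}(\text{the telescoped sum of }r_{\mathbf{v}}(\mathbf{v}\cdot\x+\ell))$. The first two contributions merge: setting $\tilde a:=a+\sum_{k=1}^n\mu_k x_k\in K(\x)$ one has $\De_j(\tilde a)=\De_j(a)+\mu_j$, so this part of the certificate tuple is realized by the rational function $\tilde a$ itself (equivalently, the constant tuple $(\mu_1,\dots,\mu_n)$ is the exact WZ-form $\bigl(\De_1(\sum_k\mu_k x_k),\dots,\De_n(\sum_k\mu_k x_k)\bigr)$, as already observed after Theorem~\ref{thm:addOreSato}). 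For each fixed $\mathbf{v}\in V$, the remaining piece is handled exactly by the discussion preceding the statement: performing the partial-fraction decomposition $r_{\mathbf{v}}(z)=\sum_{s,t}\beta_{\mathbf{v},s,t}/(z+\alpha_{\mathbf{v},s})^t$ over $K$ and using the polygamma recurrence $\psi^{(t)}(z+1)-\psi^{(t)}(z)=(-1)^t t!/z^{t+1}$, one gets that the hyperarithmetic term whose $j$-th certificate is the telescoped sum of $r_{\mathbf{v}}(\mathbf{v}\cdot\x+\ell)$ is conjugate to a finite $K$-linear combination of the functions $\psi^{(t)}(\mathbf{v}\cdot\x+\alpha_{\mathbf{v},s})$, after absorbing $(-1)^t t!$ and the index shift $t\mapsto t+1$ into the constants. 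Summing over $\mathbf{v}\in V$ and adjoining $\tilde a$, additivity of certificates yields $H\simeq \tilde a+\sum_{\mathbf{v}\in V}\sum_{s,t}\beta_{\mathbf{v},s,t}\psi^{(t)}(\mathbf{v}\cdot\x+\alpha_{\mathbf{v},s})$; renaming $\tilde a$ as $a$ is the assertion.

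The only point worth spelling out is that the formal manipulation ``$H=\tilde a+(\text{polygamma combination})$'' must take place in a difference ring containing $H$, $\tilde a$, and all of the $\psi^{(t)}(\mathbf{v}\cdot\x+\alpha_{\mathbf{v},s})$ simultaneously; this common difference-ring extension of $\bigl(K(\x),\bsi\bigr)$ is provided by Theorem~2 of~\cite{Bronstein2005} (used to adjoin the needed hyperarithmetic symbols) together with the recurrence above, i.e.\ exactly the framework already in force in the preceding paragraph. Beyond this bookkeeping there is no genuine obstacle; all of the substantive content sits in Theorem~\ref{thm:addOreSato}, and the corollary is its repackaging via the elementary polygamma computation.
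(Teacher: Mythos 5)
Your proposal is correct and follows essentially the same route as the paper: apply Theorem~\ref{thm:addOreSato} to the certificates, absorb the constants $\mu_i$ into the exact (rational) part, and convert each $r_{\mathbf{v}}$ via partial fractions and the polygamma recurrence into a $K$-linear combination of $\psi^{(t)}(\mathbf{v}\cdot\x+\alpha)$. Your extra remark about working in a common difference-ring extension is a harmless bit of bookkeeping the paper leaves implicit.
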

\begin{example}
	Let $ H $ be a hyperarithmetic expression with certificates~$ (f,g,h) $ as in Example~\ref{ex:tripleWZ}. Then $ H $ is conjugate to
	$ \psi^{(0)}(4x+6y+5z)+\psi^{(0)}(3y+2z)$.
\end{example}

\section{An algorithm for the minimal decomposition of WZ-forms}\label{sec:implementation}

Now we will present an algorithm for computing additive representations of WZ-forms based on the recursive idea in the proof of Theorem~\ref{thm:addOreSato}. 
{Furthermore, we require that in such a representation the uniform WZ-forms are minimal,  which we call \lq \lq the minimal decomposition\rq \rq. Therefore with this algorithm we can decide the exactness of WZ-forms by checking whether the uniform part is zero or not.  

\begin{definition}[Additive representation]
  Given a WZ-form $\omega = (f_1,\ldots,f_n)$, there is a decomposition of the form
  \[
    \omega = \bigl(\De_1(a),\ldots,\De_n(a)\bigr)+
    \sum_{\mathbf{v}\in V}\Bigl(
    \sideset{}{_0^{v_1}}\sum_{\ell}r_{\mathbf{v}}(\mathbf{v} \cdot \mathbf{x}+\ell),\ldots,
    \sideset{}{_0^{v_n}}\sum_{\ell}r_{\mathbf{v}}(\mathbf{v} \cdot \mathbf{x}+\ell) \Bigr).
  \]
  We call the list $\bigl( a, V, \{r_{\mathbf{v}}\}_{\mathbf{v}\in V} \bigr)$ an
  \emph{additive representation} of~$ \omega $. 
This decomposition is called \lq \lq minimal \rq \rq if it has minimal degree in the denominator of each~$r_{\mathbf{v}}(z)$.
\end{definition}

Let $\omega = (f_1,\ldots,f_n)\in K(\x)^n$ be a WZ-form.
First, we apply Abramov's reduction~\cite{Abramov1971}
with respect to the variable $ x_1 $ to decompose $f_1$ into
\[f_1= \De_1(g_0)+\sum_{i=1}^{I}\sum_{j=1}^{J}\frac{a_{i,j}}{b_i^j},\]
where $ g_0\in K(\xh)[x_1] $, $ a_{i,j},b_i \in K[\xh][x_1] $ with
$ \deg_{x_1}(a_{i,j})<\deg_{x_1}(b_i) $,  and the $ b_i $ are in distinct
$ \langle\si_1 \rangle$-orbits.

By Lemma~\ref{lemma:subspaceWZ-forms}, each $ {a_{i,j}}/{b_i^j} $ is
integer-linear of some type~$ \mathbf{v}_i $.
In order to compute the type of each simple fraction in the above decomposition,
we are reduced to the following problem.
\begin{problem}[Integer-linear testing]
	Given a polynomial $p\in K[\x]$, decide whether there exist $u\in K[z]$ and $\mathbf{v}\in \bZ^n$ such that $p = u(\mathbf{v}\cdot \x)$.
\end{problem}
This problem can be solved by the algorithm \texttt{IntegerLinearDecomp}~\cite{Huang2019}.
Applying it to the numerator and the denominator of each simple fraction $a_{i, j}/b_i^j$ yields
\[\frac{a_{i,j}}{b_i^j}=u_{i, j}(\mathbf{v}_i\cdot \x),\]
where $u_{i, j}\in K(z)$ and $\mathbf{v}_i\in \bZ^n$ with the first entry $v_{i, 1}$ being nonzero.
By collecting the simple fractions of the same type, we obtain
\[f_1=\De_1(g_0)+\sum_{\mathbf{v}\in V}  u_{\mathbf{v}}(\mathbf{v}\cdot \x),\]
where $V \subset \bZ^n$ is a finite set and $u_{\mathbf{v}}\in K(z)$ for each ${\mathbf{v}}\in V$.
The next step is to write the rational function $u_{\mathbf{v}}$ into the form
\[
u_{\mathbf{v}}(z)=\sideset{}{_0^{v_1}}\sum_{\ell} r_{\mathbf{v}}(z +\ell),
\]
where $r_{\mathbf{v}} \in K(z)$. Note that $r_{\mathbf{v}}$ must be a rational solution of the difference equation
\[y(z+v_1) - y(z)=u_{\mathbf{v}}(z+1) - u_{\mathbf{v}}(z),\]
which can also be solved by Abramov's reduction.

Let $ \omega_0:= \bigl(\De_1(g_0),\ldots,\De_n(g_0)\bigr) $ and $ \omega_{\mathbf{v}}:=\bigl(f_{1,\mathbf{v}},\ldots,f_{n,\mathbf{v}}\bigr) $, where for each $ k\in \{1,\ldots,n\} $,
\[
f_{k,\mathbf{v}} := \sideset{}{_0^{v_k}}\sum_{\ell} r_{\mathbf{v}}(\mathbf{v}\cdot\x +\ell).
\]
Then $ \omega $ can be written as a summation of one exact WZ-form, several uniform WZ-forms and a ``degenerate'' WZ-form:
\[
\omega= \omega_0+\sum_{\mathbf{v}\in V} \omega_{\mathbf{v}}+\widetilde{\omega}.
\]
In order to get the minimal uniform WZ-forms in the sense that the denominator of each~$r_{\mathbf{v}}$ has the lowest possible degree in~$z$, we require the application of Abramov's reduction on~$r_{\mathbf{v}}$ again. The following example illustrates that this step is needed to minimize~$r_{\mathbf{v}}$.
	\begin{example}
		Let $ \omega=(f,g,h)\in K(x,y,z)^3 $ be a WZ-form with
		\begin{align*}
			f &= \Delta_x\biggl(\frac{1}{(4x+6y+5z+1)^2}\biggr)+\sum_{\ell=0}^{3}\frac{1}{4x+6y+5z+\ell},\\
			g &= \Delta_y\biggl(\frac{1}{(4x+6y+5z+1)^2}\biggr)+ \sum_{\ell=0}^{5}\frac{1}{4x+6y+5z+\ell},\\
			h &= \Delta_z\biggl(\frac{1}{(4x+6y+5z+1)^2}\biggr)+ \sum_{\ell=0}^{4}\frac{1}{4x+6y+5z+\ell}.
		\end{align*}
		Since the residue of Abramov's reduction is not unique, we may get the following decomposition of~$f$,
		\begin{align*}
			f &= \Delta_x\left( \frac{1}{(4x+6y+5z+1)^2} - \frac{1}{4x+6y+5z+1}\right)  \\
			& + \frac{1}{4x+6y+5z} + \frac{1}{4x+6y+5z+5}+ \frac{1}{4x+6y+5z+2} + \frac{1}{4x+6y+5z+3} .
		\end{align*}
		The univariate function in~$Z$ corresponding to the uniform WZ-form is
		\[r_{\mathbf{v}}(Z) = \frac{1}{Z} - \frac{1}{Z+1} + \frac{1}{Z+2}. \]
		Obviously we would anticipate it to be~${1}/{Z}$. This accident can happen because $4x+6y+5z+1$ and $4x+6y+5z+5$ are in the same $\sigma_x$-orbit, and Abramov's reduction on~$f$ cannot see which one will lead to the minimal~$r_{\mathbf{v}}$. However, we observe that the difference is the summable part of~$r_{\mathbf{v}}$ with respect to~$Z$, which finally can be removed by modifying the exact part of~$f$. In this case,
		\[
		\sum_{\ell=0}^{3} \left(\frac{1}{Z+2+\ell} - \frac{1}{Z+1+\ell}\right) = \frac{1}{Z+5}-\frac{1}{Z+1}.
		\]
		After substituting $Z$ with~$4x+6y+5z$ on the right-hand side of the above equation, we obtain~$\Delta_x\bigl( \frac{1}{4x+6y+5z+1}\bigr)$. The modification is done by absorbing it into the previous exact part of~$f$.
	\end{example}
If $\widetilde{\omega}$ is nonzero, then we proceed with the induction step by repeating the above process for
$ \widetilde{\omega} $ which only involves $(n-1) $-variables.
The above process for computing additive representations of WZ-forms
is summarized in Algorithm~\ref{alg:WZFormDecomp} and is illustrated in
Example~\ref{ex:decomp}. Note that $\omega$ is exact if and only if the output is~$(a,\varnothing,\varnothing)$, which is equivalent to that the uniform part is zero. 
Our Maple code for implementing Algorithm~\ref{alg:WZFormDecomp} is available at
\[
\text{\url{http://www.mmrc.iss.ac.cn/~schen/AddOreSato.html}}
\]
We provide a worst-case complexity analysis of Algorithm~\ref{alg:WZFormDecomp} in terms of arithmetic operations in the base field.
Let $K[\mathbf{x}]_d$ denote the set of polynomials in $K[\mathbf{x}]$ whose degree in $x_i$ is no more than $d$ for each~$i = 1,2,\ldots,n$. Let $K(\mathbf{x})_d$ denote the set of rational functions in $K(\mathbf{x})$ with numerators and denominators in~$K[\mathbf{x}]_d$. The \lq\lq big Oh\rq\rq~notation $O$ is referred to the cost of an algorithm ``up to a constant factor" and the \lq\lq soft Oh\rq\rq~notation $\widetilde{O}$ may further neglect some logarithmic factors. We define the max-norm of the multivariate polynomial~$f = \sum_{i_1,\ldots,i_n} f_{i_1,\ldots,i_n} x_1^{i_1}\cdots x_n^{i_n} \in \mathbb{Q}[\mathbf{x}] $ as
\[ \| f \|_{\infty} = \max_{i_1,\ldots,i_n} |f_{i_1,\ldots,i_n}| .\]

We first recall some known facts on complexity estimates.
 The irreducible factorization of a univariate polynomial~$f \in \bZ[x]$ of degree~$d \geq 1$ with max-norm $\| f \|_{\infty} =A$ takes $\widetilde{O}(d^6\cdot(d+\log A))$ operations in $\bZ$ (see~\cite[Theorem 16.23]{BookMCA2013}).
The partial fraction decomposition of an element in $K(x)_d$ with a given factorization of its denominator takes no more than $\widetilde{O}(d)$ operations in~$K$ (see~\cite[Section 5.11]{BookMCA2013}).
Multi-point evaluation and interpolation in $K[\mathbf{x}]_d$ from the given values on $O(d^m)$ points which form an $m$-dimensional tensor product grid can be done with $\widetilde{O}(md^m)$ operations in $K$ (see~\cite[Theorem 1]{JorisEric2013}).
Let $f(\mathbf{x}) = \prod_{\ell=1}^{s} \bar{f}_\ell(\mathbf{x})$ be a multivariate factorization of~$f(\mathbf{x})\in \bZ[\mathbf{x}]$ over~$\mathbb{Q}$. Then Mahler's bound in~\cite{Mahler1962} using Gelfond's inequality  leads to
\[
	\prod_{\ell=1}^{s} \|\bar{f}_\ell\|_{\infty} \leq 2^{nd}(d+1)^{n/2} \|f\|_{\infty}.
\]
According to~\cite{Huang2019}, the multivariate integer-linear decomposition of $f \in \mathbb{Z}[\mathbf{x}]$ with~$\|f\|_{\infty} = A$ takes $(n + \log A +d)^{O(1)}$ word operations.

Let~$f_1,\ldots, f_n \in \mathbb{Z}(\mathbf{x})_d$ with their max-norms bounded by~$A \in \bN$.
We claim that the total cost of Algorithm~\ref{alg:WZFormDecomp} is $\widetilde{O}\bigl(n^{O(1)}d^{O(n)}\log A\bigr)$ operations in~$\mathbb{Q}$.
The cost of the algorithm before the recursive loop (line 30  of Algorithm~\ref{alg:WZFormDecomp}) is dominated by the irreducible partial fraction decomposition of $f_1 \in \mathbb{Q}(\mathbf{x})_d$ with respect to~$x_1$,
where all simple fractions are of degree in~$x_j, \ j=2,\ldots,n $ bounded by~$d^2$.
To obtain partial fraction decompositions with respect to $x_1$ over $\bQ(x_2, \ldots, x_n)$, we take the strategy of multi-point evaluation and interpolation, i.e., first evaluating the rational function for variables  $(x_2, \ldots, x_n)$
at many points, performing the univariate irreducible partial fraction decompositions with respect to $x_1$, and then recovering the desired decompositions over $\bQ(x_2, \ldots, x_n)$ by rational interpolation.
To be sufficient for recovering the final result,  we need to evaluate~$x_2,\ldots,x_n$ at~$O(d^{2(n-1)})$ many good points which takes~$\widetilde{O}\bigl((n-1)d^{2(n-1)}\bigr)$ operations in~$\mathbb{Q}$.
For each point, the univariate irreducible partial fraction decompositions takes~$\widetilde{O}(d^6 \cdot (d+ \log A))$ operations in~$\mathbb{Q}$,
which is the dominating cost of the irreducible factorization of the denominator. Finally, we need the rational interpolation to recover~$O(d)$ coefficients in~$\bQ(x_2,\ldots,x_n)$, where each coefficient costs $\widetilde{O}\bigl((n-1)d^{2(n-1)}\bigr)$ operations in~$\mathbb{Q}$. Thus before line 30, the total cost is
\[\widetilde{O}\bigl((n-1)d^{2(n-1)}\cdot d + d^{2(n-1)}\cdot d^{6}\cdot (d+\log A) \bigr).\]

From the special structure of WZ-forms, we observe that the degree bound will not be changed in the recursive steps,
and the max-norms of denominators appeared in irreducible partial fraction decompositions are uniformly bounded by $\widetilde{A} := 2^{nd}(d+1)^{n/2}A$. The exact parts during the computation will
always be expressed in a sparse representation. This idea was used to get a polynomial-time algorithm for univariate rational summation in~\cite{Gerhard2003}. Let~$T(n,d,A)$ denote the worst-case running cost of Algorithm~\ref{alg:WZFormDecomp}  in terms of operations in~$\mathbb{Q}$. According to the previous discussions, we have
\[T(n,d,A) = \widetilde{O}\left((n-1)d^{2n-1} + d^{2n+4}\cdot (d+ \log A)\right) + \widetilde{T}(n-1, d, \widetilde{A}),\]
where  $\widetilde{T}(n-1, d, \widetilde{A})$ satisfies the recursive formulae
\begin{gather*}
	\widetilde{T}(m-1,d, \widetilde{A}) = \widetilde{O}\bigl((m-2)d^{2m-3} + d^{2m+2}\cdot (d+ \log \widetilde{A})\bigr) + \widetilde{T}(m-2, d, \widetilde{A})
\end{gather*}
for all $m = 3,\ldots, n$.
By solving the recurrence relation, we conclude that~$T(n,d,A)$ is $\widetilde{O}\bigl(n^{O(1)}d^{O(n)}\log A\bigr)$.


\begin{algorithm}
	\caption{WZ-form decomposition algorithm}  \label{alg:WZFormDecomp}
	\textbf{Function:} \texttt{WZFormDecomp}$\bigl(
	(f_1,\ldots,f_n),\x,Z\bigr)$\\
	\textbf{Input:} WZ-form $ (f_1,\ldots,f_n)\in K(\x)^n $, $\x=(x_1,\dots,x_n)$, and
	a new variable~$ Z $\\
	\textbf{Output:} Its additive representation: $ \bigl(a, V, R = \{r_{\mathbf{v}}\}_{\mathbf{v}\in V}\bigr)$
	\begin{algorithmic}
		\If{$ f_1=0 $}
		\State $( {a}, {V}, {R}) \gets \texttt{WZFormDecomp}\bigl(
		(f_2,\ldots,f_n),(x_2,\ldots,x_n),Z\bigr)$
		\For{$ {\mathbf{v}}=(v_2,\ldots,v_n)$ in $V$}
		\State $ {\mathbf{v}}\gets (0,v_2,\ldots,v_n) $
		\EndFor
		\State\Return $ ( {a}, {V}, {R}) $
		\EndIf
		
		\State  Call \texttt{AbramovReduction}:
		$f_1= \De_1(g_0)+\sum_{i=1}^{I}\sum_{j=1}^{J}{a_{i,j}}/{b_i^j}$, { $a \gets g_0$}
		\If{$ n=1 $}
		\State\Return $\bigl(g_0, \bigl((1)\bigr), \bigl(f_1-\De_1(g_0)\bigr) \bigr)$
		\EndIf
		
		\For{$ 1\leq i \leq I $}
		\State Call \texttt{IntegerLinearDecomp}: $b_i=q_i(\vw_i\cdot \x) $
		with $q_i\in K[Z]$
		\EndFor
		\State $ V \gets (\vv_1,\ldots,\vv_m)$ with $\{\vv_1,\ldots,\vv_m\}=\{\vw_1,\dots,\vw_I\}$
		and $\vv_i\neq\vv_j$ for $i\neq j$
		\For{$ 1 \leq k \leq m $}
		\State $ u_k \gets 0 $
		\For{$1 \leq i \leq I  $}
		\If{the integer-linear type of $b_i$ is $\vv_k = (v_{k,1},\ldots,v_{k,n}) $}
		\For{$ 1 \leq j \leq J $}
		\State Perform the substitution $\vv_k\cdot\x\to Z$ in $a_{i,j}$
		so that $a_{i,j}\in K[Z]$
		\State $u_k \gets u_k+ a_{i,j}/q_i^j$
		\EndFor
		\EndIf
		\EndFor
		\State Call \texttt{AbramovReduction}: $\si_z(h_k)-h_k=u_{k}(v_{k,1}z+1)-u_{k}(v_{k,1}z)$
		\State $ r_k \gets h_k(1/v_{k,1} Z) $, { $r_k = \Delta_Z(g_k) + \widetilde{r}_k $, $a \gets a + g_k(\vv_k \cdot \x)$}
		\EndFor
		\State { $ R \gets (\widetilde{r}_1,\ldots,\widetilde{r}_m) $}
		\For{$ 2 \leq k \leq n $}
		\State $\displaystyle f_k' \gets f_k{- \Delta_k(a) - \sum_{i=1}^m \sideset{}{_0^{v_{i,k}}}\sum_\ell
			\widetilde{r}_i(\mathbf{v}_{i}\cdot \x + \ell)}$
		\EndFor
		\If{$ f_k' \neq 0 $ for some $ k $}
		\State $(a', V', R')\gets
		\texttt{WZFormDecomp}\bigl((f_2',\ldots,f_n'),(x_2,\ldots,x_n),Z\bigr)$
		\For{$\mathbf{v}'=(v_2,\ldots,v_n)$ in $V'$}
		\State $\mathbf{v}'\gets (0,v_2,\ldots,v_n) $
		\EndFor
		\State $ a \gets a+a' $, \ $ V \gets \texttt{Join}(V, V')$, \ $ R \gets \texttt{Join}(R, R') $
		\EndIf
		\State \Return $ (a,V,R) $
	\end{algorithmic}
\end{algorithm}

\begin{example}
  \label{ex:decomp}
  Let $\omega = (f,g,h) \in K(x,y,z)^3$ be a WZ-form with respect to
  $ (\De_x,\De_y,\De_z) $, specifically,
  \begin{align*}
    f &= \frac{xyz-y^2z-yz^2+yz-1}{x-y-z+1},\\
    g &= \frac{x^2z-xyz-xz^2+xy-y^2-yz-1}{x-y-z},\\
    h &= \frac{x^2y-xy^2-xyz+xz-yz-z^2-1}{x-y-z}.
  \end{align*}
  Employing Abramov's reduction on $ f $ yields
  \[f= \De_x(xyz)+\frac{1}{-x+y+z-1}.\]
  Then we record the following exact WZ-form as a part of $ \omega $:
  \[ \omega_0:=\bigl(\De_x(xyz),\De_y(xyz),\De_z(xyz)\bigr) .\]
  Obviously from the decomposition of $ f $ there is only one integer-linear type $ \mathbf{v}=(-1,1,1) $ and the corresponding univariate rational function is $ r_{\mathbf{v}}=1/Z $. It is easily checked that there is no summable part in~$r_{\mathbf{v}}$. Then a uniform WZ-form shows up as a part of~$ \omega $:
  \[
    \omega_{\mathbf{v}}=\Bigl(\frac{1}{-x+y+z-1},\frac{1}{-x+y+z},\frac{1}{-x+y+z}\Bigr) .
  \]
  Then we can update $ \omega$ by subtracting $ \omega_0$ and
  $ \omega_{\mathbf{v}} $ and obtain $\widetilde{\omega}=(0,y,z)$, which is
  equivalent to the WZ-pair $ (y,z) $ with respect to $ (\De_y,\De_z) $. By
  simple manipulations we can see that it is an exact WZ-pair:
  \[
    \Bigl( \De_y\bigl(\tfrac{1}{2}y^2+\tfrac{1}{2}z^2\bigr),
    \De_z\bigl(\tfrac{1}{2}y^2+\tfrac{1}{2}z^2\bigr)\Bigr) .
  \]
  Combining this exact WZ-form with the previous one we can update $ \omega_0 $ as:
  \[
    \omega_0=\Bigl(
    \De_x\bigl(xyz+\tfrac{1}{2}y^2+\tfrac{1}{2}z^2\bigr),
    \De_y\bigl(xyz+\tfrac{1}{2}y^2+\tfrac{1}{2}z^2\bigr),
    \De_z\bigl(xyz+\tfrac{1}{2}y^2+\tfrac{1}{2}z^2\bigr)
    \Bigr).
  \]
  Finally the decomposition works as $ \omega=\omega_0+\omega_{\mathbf{v}} $, i.e.,
  the additive representation of $ \omega $ is
  \[
    \Bigl(xyz+\tfrac{1}{2} y^2+\tfrac{1}{2} z^2,\left\lbrace(-1,1,1)\right\rbrace,\{1/Z\}\Bigr).
  \]
\end{example}

\section*{Acknowledgment}
We would like to thank Hui Huang for providing the Maple code of the integer-linear decomposition. We also thank Jing Guo and Hanqian Fang for helpful discussions. We are grateful for the comments of the anonymous referees that helped us to improve the paper significantly.



\bibliographystyle{elsarticle-harv}
\bibliography{AddOreSato}

\begin{thebibliography}{42}
\expandafter\ifx\csname natexlab\endcsname\relax\def\natexlab#1{#1}\fi
\providecommand{\url}[1]{\texttt{#1}}
\providecommand{\href}[2]{#2}
\providecommand{\path}[1]{#1}
\providecommand{\DOIprefix}{doi:}
\providecommand{\ArXivprefix}{arXiv:}
\providecommand{\URLprefix}{URL: }
\providecommand{\Pubmedprefix}{pmid:}
\providecommand{\doi}[1]{\href{http://dx.doi.org/#1}{\path{#1}}}
\providecommand{\Pubmed}[1]{\href{pmid:#1}{\path{#1}}}
\providecommand{\bibinfo}[2]{#2}
\ifx\xfnm\relax \def\xfnm[#1]{\unskip,\space#1}\fi
\bibitem[{Abramov and Petkovšek(2008)}]{Abramov2008}
\bibinfo{author}{Abramov, S.}, \bibinfo{author}{Petkovšek, M.},
  \bibinfo{year}{2008}.
\newblock \bibinfo{title}{Dimensions of solution spaces of h-systems}.
\newblock \bibinfo{journal}{Journal of Symbolic Computation}
  \bibinfo{volume}{43}, \bibinfo{pages}{377--394}.
\newblock \URLprefix
  \url{https://www.sciencedirect.com/science/article/pii/S074771710700154X},
  \DOIprefix\doi{https://doi.org/10.1016/j.jsc.2007.11.006}.
\bibitem[{Abramov(1971)}]{Abramov1971}
\bibinfo{author}{Abramov, S.A.}, \bibinfo{year}{1971}.
\newblock \bibinfo{title}{The summation of rational functions}.
\newblock \bibinfo{journal}{\v{Z}. Vy\v{c}isl. Mat i Mat. Fiz.}
  \bibinfo{volume}{11}, \bibinfo{pages}{1071--1075}.
\bibitem[{Abramov(1975)}]{Abramov1975}
\bibinfo{author}{Abramov, S.A.}, \bibinfo{year}{1975}.
\newblock \bibinfo{title}{The rational component of the solution of a first
  order linear recurrence relation with rational right hand side}.
\newblock \bibinfo{journal}{\v Z. Vy\v cisl. Mat. i Mat. Fiz.}
  \bibinfo{volume}{15}, \bibinfo{pages}{1035--1039, 1090}.
\bibitem[{Abramov and Petkov\v{s}ek(2002)}]{AbramovPetkovsek2002a}
\bibinfo{author}{Abramov, S.A.}, \bibinfo{author}{Petkov\v{s}ek, M.},
  \bibinfo{year}{2002}.
\newblock \bibinfo{title}{On the structure of multivariate hypergeometric
  terms}.
\newblock \bibinfo{journal}{Adv. Appl. Math.} \bibinfo{volume}{29},
  \bibinfo{pages}{386--411}.
\newblock \DOIprefix\doi{10.1016/S0196-8858(02)00022-2}.
\bibitem[{Au(2025)}]{Au2025}
\bibinfo{author}{Au, K.C.}, \bibinfo{year}{2025}.
\newblock \bibinfo{title}{{W}ilf--{Z}eilberger seeds and non-trivial
  hypergeometric identities}.
\newblock \bibinfo{journal}{Journal of Symbolic Computation}
  \bibinfo{volume}{130}, \bibinfo{pages}{102421}.
\newblock \URLprefix
  \url{https://www.sciencedirect.com/science/article/pii/S0747717125000033},
  \DOIprefix\doi{10.1016/j.jsc.2025.102421}.
\bibitem[{Bronstein et~al.(2005)Bronstein, Li and Wu}]{Bronstein2005}
\bibinfo{author}{Bronstein, M.}, \bibinfo{author}{Li, Z.}, \bibinfo{author}{Wu,
  M.}, \bibinfo{year}{2005}.
\newblock \bibinfo{title}{Picard--{V}essiot extensions for linear functional
  systems}, in: \bibinfo{booktitle}{ISSAC '05: Proceedings of the 2005
  {I}nternational {S}ymposium on {S}ymbolic and {A}lgebraic {C}omputation},
  \bibinfo{publisher}{ACM}, \bibinfo{address}{New York, USA}. pp.
  \bibinfo{pages}{68--75}.
\newblock \DOIprefix\doi{10.1145/1073884.1073896}.
\bibitem[{Cartan(1970)}]{Cartan1970}
\bibinfo{author}{Cartan, H.}, \bibinfo{year}{1970}.
\newblock \bibinfo{title}{Differential forms}.
\newblock \bibinfo{publisher}{Houghton Mifflin Co., Boston, MA}.
\newblock \bibinfo{note}{Translated from the French}.
\bibitem[{Chen(2019)}]{Chen2019}
\bibinfo{author}{Chen, S.}, \bibinfo{year}{2019}.
\newblock \bibinfo{title}{How to generate all possible rational
  {W}ilf--{Z}eilberger pairs?}, in: \bibinfo{booktitle}{Algorithms and
  complexity in mathematics, epistemology, and science}.
  \bibinfo{publisher}{Springer, New York}. volume~\bibinfo{volume}{82} of
  \textit{\bibinfo{series}{Fields Inst. Commun.}}, pp. \bibinfo{pages}{17--34}.
\newblock \DOIprefix\doi{10.1007/978-1-4939-9051-1\_2}.
\bibitem[{Chen et~al.(2011)Chen, Feng, Fu and Li}]{Chen2011}
\bibinfo{author}{Chen, S.}, \bibinfo{author}{Feng, R.}, \bibinfo{author}{Fu,
  G.}, \bibinfo{author}{Li, Z.}, \bibinfo{year}{2011}.
\newblock \bibinfo{title}{On the structure of compatible rational functions},
  in: \bibinfo{booktitle}{ISSAC '11: Proceedings of the 2011 {I}nternational
  {S}ymposium on {S}ymbolic and {A}lgebraic {C}omputation},
  \bibinfo{publisher}{ACM}, \bibinfo{address}{New York, NY, USA}. pp.
  \bibinfo{pages}{91--98}.
\newblock \DOIprefix\doi{10.1145/1993886.1993905}.
\bibitem[{Chen et~al.(2016)Chen, Hou, Labahn and Wang}]{Chen2016}
\bibinfo{author}{Chen, S.}, \bibinfo{author}{Hou, Q.H.},
  \bibinfo{author}{Labahn, G.}, \bibinfo{author}{Wang, R.H.},
  \bibinfo{year}{2016}.
\newblock \bibinfo{title}{Existence problem of telescopers: Beyond the
  bivariate case}, in: \bibinfo{booktitle}{Proceedings of the ACM on
  International Symposium on Symbolic and Algebraic Computation},
  \bibinfo{publisher}{ACM}, \bibinfo{address}{New York, NY, USA}. pp.
  \bibinfo{pages}{167--174}.
\newblock \DOIprefix\doi{10.1145/2930889.2930895}.
\bibitem[{Chen and Singer(2014)}]{ChenSinger2014}
\bibinfo{author}{Chen, S.}, \bibinfo{author}{Singer, M.F.},
  \bibinfo{year}{2014}.
\newblock \bibinfo{title}{On the summability of bivariate rational functions}.
\newblock \bibinfo{journal}{J. Algebra} \bibinfo{volume}{409},
  \bibinfo{pages}{320--343}.
\newblock \DOIprefix\doi{10.1016/j.jalgebra.2014.03.023}.
\bibitem[{Christopher(1999)}]{Christopher1999}
\bibinfo{author}{Christopher, C.}, \bibinfo{year}{1999}.
\newblock \bibinfo{title}{Liouvillian first integrals of second order
  polynomial differential equations}.
\newblock \bibinfo{journal}{Electron. J. Differential Equations}
  \bibinfo{volume}{49}, \bibinfo{pages}{1--7}.
\bibitem[{{\relax DLMF}()}]{NIST:DLMF}
{\relax DLMF}, .
\newblock \bibinfo{title}{{\it NIST Digital Library of Mathematical
  Functions}}.
\newblock \bibinfo{howpublished}{\url{https://dlmf.nist.gov/}, Release 1.2.0 of
  2024-03-15}.
\newblock \bibinfo{note}{F.~W.~J. Olver, A.~B. {Olde Daalhuis}, D.~W. Lozier,
  B.~I. Schneider, R.~F. Boisvert, C.~W. Clark, B.~R. Miller, B.~V. Saunders,
  H.~S. Cohl, and M.~A. McClain, eds.}
\bibitem[{Dreyfus et~al.(2018)Dreyfus, Hardouin, Roques and
  Singer}]{Singer2018}
\bibinfo{author}{Dreyfus, T.}, \bibinfo{author}{Hardouin, C.},
  \bibinfo{author}{Roques, J.}, \bibinfo{author}{Singer, M.F.},
  \bibinfo{year}{2018}.
\newblock \bibinfo{title}{On the nature of the generating series of walks in
  the quarter plane}.
\newblock \bibinfo{journal}{Invent. Math.} \bibinfo{volume}{213},
  \bibinfo{pages}{139--203}.
\newblock \DOIprefix\doi{10.1007/s00222-018-0787-z}.
\bibitem[{Du and Li(2019)}]{DuLi2019}
\bibinfo{author}{Du, H.}, \bibinfo{author}{Li, Z.}, \bibinfo{year}{2019}.
\newblock \bibinfo{title}{The {O}re-{S}ato theorem and shift exponents in the
  {$q$}-difference case}.
\newblock \bibinfo{journal}{J. Syst. Sci. Complex.} \bibinfo{volume}{32},
  \bibinfo{pages}{271--286}.
\newblock \DOIprefix\doi{10.1007/s11424-019-8355-1}.
\bibitem[{von~zur Gathen and Gerhard(2013)}]{BookMCA2013}
\bibinfo{author}{von~zur Gathen, J.}, \bibinfo{author}{Gerhard, J.},
  \bibinfo{year}{2013}.
\newblock \bibinfo{title}{Modern Computer Algebra}.
\newblock \bibinfo{edition}{Third} ed., \bibinfo{publisher}{Cambridge
  University Press, Cambridge}.
\newblock \URLprefix \url{https://doi.org/10.1017/CBO9781139856065},
  \DOIprefix\doi{10.1017/CBO9781139856065}.
\bibitem[{Gel{'}fand et~al.(1992)Gel{'}fand, Graev and Retakh}]{Gelfand}
\bibinfo{author}{Gel{'}fand, I.}, \bibinfo{author}{Graev, M.},
  \bibinfo{author}{Retakh, V.}, \bibinfo{year}{1992}.
\newblock \bibinfo{title}{General hypergeometric systems of equations and
  series of hypergeometric type}.
\newblock \bibinfo{journal}{Uspekhi Mat. Nauk (Russian), English translation in
  Russia Math Surveys} \bibinfo{volume}{47}, \bibinfo{pages}{3--82}.
\newblock \URLprefix \url{http://dx.doi.org/10.1070/RM1992v047n04ABEH000915},
  \DOIprefix\doi{10.1070/RM1992v047n04ABEH000915}.
\bibitem[{Gerhard et~al.(2003)Gerhard, Giesbrecht, Storjohann and
  Zima}]{Gerhard2003}
\bibinfo{author}{Gerhard, J.}, \bibinfo{author}{Giesbrecht, M.},
  \bibinfo{author}{Storjohann, A.}, \bibinfo{author}{Zima, E.V.},
  \bibinfo{year}{2003}.
\newblock \bibinfo{title}{Shiftless decomposition and polynomial-time rational
  summation}, in: \bibinfo{booktitle}{Proceedings of the 2003 {I}nternational
  {S}ymposium on {S}ymbolic and {A}lgebraic {C}omputation},
  \bibinfo{publisher}{ACM}, \bibinfo{address}{New York, NY, USA}. pp.
  \bibinfo{pages}{119--126}.
\newblock \DOIprefix\doi{http://doi.acm.org/10.1145/860854.860887}.
\bibitem[{Gessel(1995)}]{Gessel1995}
\bibinfo{author}{Gessel, I.M.}, \bibinfo{year}{1995}.
\newblock \bibinfo{title}{Finding identities with the {WZ} method}.
\newblock \bibinfo{journal}{J. Symb. Comput.} \bibinfo{volume}{20},
  \bibinfo{pages}{537--566}.
\newblock \URLprefix \url{http://dx.doi.org/10.1006/jsco.1995.1064},
  \DOIprefix\doi{10.1006/jsco.1995.1064}.
\bibitem[{Giesbrecht et~al.(2019)Giesbrecht, Huang, Labahn and
  Zima}]{Huang2019}
\bibinfo{author}{Giesbrecht, M.}, \bibinfo{author}{Huang, H.},
  \bibinfo{author}{Labahn, G.}, \bibinfo{author}{Zima, E.},
  \bibinfo{year}{2019}.
\newblock \bibinfo{title}{Efficient integer-linear decomposition of
  multivariate polynomials}, in: \bibinfo{booktitle}{I{SSAC}'19---{P}roceedings
  of the 2019 {ACM} {I}nternational {S}ymposium on {S}ymbolic and {A}lgebraic
  {C}omputation}, \bibinfo{publisher}{ACM, New York}. pp.
  \bibinfo{pages}{171--178}.
\newblock \DOIprefix\doi{10.1145/3326229.3326261}.
\bibitem[{Guillera(2002)}]{Guillera2002}
\bibinfo{author}{Guillera, J.}, \bibinfo{year}{2002}.
\newblock \bibinfo{title}{Some binomial series obtained by the {WZ}-method}.
\newblock \bibinfo{journal}{Adv. in Appl. Math.} \bibinfo{volume}{29},
  \bibinfo{pages}{599--603}.
\newblock \URLprefix \url{https://doi.org/10.1016/S0196-8858(02)00034-9}.
\bibitem[{Guillera(2006)}]{Guillera2006}
\bibinfo{author}{Guillera, J.}, \bibinfo{year}{2006}.
\newblock \bibinfo{title}{Generators of some {R}amanujan formulas}.
\newblock \bibinfo{journal}{Ramanujan Journal} \bibinfo{volume}{11},
  \bibinfo{pages}{41--48}.
\bibitem[{Guillera(2010)}]{Guillera2010}
\bibinfo{author}{Guillera, J.}, \bibinfo{year}{2010}.
\newblock \bibinfo{title}{On {WZ}-pairs which prove {R}amanujan series}.
\newblock \bibinfo{journal}{Ramanujan J.} \bibinfo{volume}{22},
  \bibinfo{pages}{249--259}.
\newblock \URLprefix \url{https://doi.org/10.1007/s11139-010-9238-1}.
\bibitem[{Guillera(2013)}]{Guillera2013}
\bibinfo{author}{Guillera, J.}, \bibinfo{year}{2013}.
\newblock \bibinfo{title}{W{Z}-proofs of ``divergent'' {R}amanujan-type
  series}, in: \bibinfo{booktitle}{Advances in Combinatorics}.
  \bibinfo{publisher}{Springer, Heidelberg}, pp. \bibinfo{pages}{187--195}.
\bibitem[{van~der Hoeven and Schost(2013)}]{JorisEric2013}
\bibinfo{author}{van~der Hoeven, J.}, \bibinfo{author}{Schost, E.},
  \bibinfo{year}{2013}.
\newblock \bibinfo{title}{Multi-point evaluation in higher dimensions}.
\newblock \bibinfo{journal}{Appl. Algebra Engrg. Comm. Comput.}
  \bibinfo{volume}{24}, \bibinfo{pages}{37--52}.
\newblock \URLprefix \url{https://doi.org/10.1007/s00200-012-0179-3},
  \DOIprefix\doi{10.1007/s00200-012-0179-3}.
\bibitem[{Hou and Sun(2023)}]{HouSun2023}
\bibinfo{author}{Hou, Q.H.}, \bibinfo{author}{Sun, Z.W.}, \bibinfo{year}{2023}.
\newblock \bibinfo{title}{Taylor coefficients and series involving harmonic
  numbers}.
\newblock \URLprefix \url{https://arxiv.org/abs/2310.03699},
  \href{http://arxiv.org/abs/2310.03699}{{\tt arXiv:2310.03699}}.
\bibitem[{Hou and Wang(2015)}]{Hou2015}
\bibinfo{author}{Hou, Q.H.}, \bibinfo{author}{Wang, R.H.},
  \bibinfo{year}{2015}.
\newblock \bibinfo{title}{An algorithm for deciding the summability of
  bivariate rational functions}.
\newblock \bibinfo{journal}{Advances in Applied Mathematics}
  \bibinfo{volume}{64}, \bibinfo{pages}{31--49}.
\newblock \DOIprefix\doi{https://doi.org/10.1016/j.aam.2014.11.002}.
\bibitem[{Mahler(1962)}]{Mahler1962}
\bibinfo{author}{Mahler, K.}, \bibinfo{year}{1962}.
\newblock \bibinfo{title}{On some inequalities for polynomials in several
  variables}.
\newblock \bibinfo{journal}{J. London Math. Soc.} \bibinfo{volume}{37},
  \bibinfo{pages}{341--344}.
\newblock \URLprefix \url{https://doi.org/10.1112/jlms/s1-37.1.341},
  \DOIprefix\doi{10.1112/jlms/s1-37.1.341}.
\bibitem[{Mansfield and Hydon(2008)}]{Mansfield2008}
\bibinfo{author}{Mansfield, E.L.}, \bibinfo{author}{Hydon, P.E.},
  \bibinfo{year}{2008}.
\newblock \bibinfo{title}{Difference forms}.
\newblock \bibinfo{journal}{Found. Comput. Math.} \bibinfo{volume}{8},
  \bibinfo{pages}{427--467}.
\newblock \URLprefix \url{https://doi.org/10.1007/s10208-007-9015-8},
  \DOIprefix\doi{10.1007/s10208-007-9015-8}.
\bibitem[{Mu(2019)}]{Mu2019}
\bibinfo{author}{Mu, Y.P.}, \bibinfo{year}{2019}.
\newblock \bibinfo{title}{A family of {WZ} pairs and q-identities}.
\newblock \bibinfo{journal}{Ramanujan J.} \bibinfo{volume}{49},
  \bibinfo{pages}{97--104}.
\newblock \URLprefix \url{https://doi.org/10.1007/s11139-018-0077-9},
  \DOIprefix\doi{10.1007/s11139-018-0077-9}.
\bibitem[{Ore(1930)}]{Ore1930}
\bibinfo{author}{Ore, O.}, \bibinfo{year}{1930}.
\newblock \bibinfo{title}{Sur la forme des fonctions hyperg{\' e}om{\'e}triques
  de plusieurs variables}.
\newblock \bibinfo{journal}{J. Math. Pures Appl. (9)} \bibinfo{volume}{9},
  \bibinfo{pages}{311--326}.
\bibitem[{Petkov{\v{s}}ek et~al.(1996)Petkov{\v{s}}ek, Wilf and
  Zeilberger}]{PWZbook1996}
\bibinfo{author}{Petkov{\v{s}}ek, M.}, \bibinfo{author}{Wilf, H.S.},
  \bibinfo{author}{Zeilberger, D.}, \bibinfo{year}{1996}.
\newblock \bibinfo{title}{{$A=B$}}.
\newblock \bibinfo{publisher}{A K Peters Ltd.}, \bibinfo{address}{Wellesley,
  MA}.
\newblock \bibinfo{note}{With a foreword by Donald E. Knuth}.
\bibitem[{van~der Poorten(1978/79)}]{Poorten1978}
\bibinfo{author}{van~der Poorten, A.}, \bibinfo{year}{1978/79}.
\newblock \bibinfo{title}{A proof that {E}uler missed{$\ldots$} {A}p\'ery's
  proof of the irrationality of {$\zeta (3)$}}.
\newblock \bibinfo{journal}{Math. Intelligencer} \bibinfo{volume}{1},
  \bibinfo{pages}{195--203}.
\newblock \URLprefix \url{https://doi.org/10.1007/BF03028234},
  \DOIprefix\doi{10.1007/BF03028234}. \bibinfo{note}{an informal report}.
\bibitem[{Sato(1990)}]{Sato1990}
\bibinfo{author}{Sato, M.}, \bibinfo{year}{1990}.
\newblock \bibinfo{title}{Theory of prehomogeneous vector spaces (algebraic
  part)---the {E}nglish translation of {S}ato's lecture from {S}hintani's
  note}.
\newblock \bibinfo{journal}{Nagoya Math. J.} \bibinfo{volume}{120},
  \bibinfo{pages}{1--34}.
\newblock \bibinfo{note}{Notes by Takuro Shintani, Translated from the Japanese
  by Masakazu Muro}.
\bibitem[{Sun(2023)}]{Sun2023}
\bibinfo{author}{Sun, Z.}, \bibinfo{year}{2023}.
\newblock \bibinfo{title}{New congruences involving harmonic numbers}.
\newblock \bibinfo{journal}{Nanjing Daxue Xuebao Shuxue Bannian Kan}
  \bibinfo{volume}{40}, \bibinfo{pages}{1--33}.
\bibitem[{Tefera(2010)}]{Tefera2010}
\bibinfo{author}{Tefera, A.}, \bibinfo{year}{2010}.
\newblock \bibinfo{title}{What is {$\dots$} a {W}ilf-{Z}eilberger pair?}
\newblock \bibinfo{journal}{Notices Amer. Math. Soc.} \bibinfo{volume}{57},
  \bibinfo{pages}{508--509}.
\bibitem[{Wilf and Zeilberger(1992a)}]{Wilf1992}
\bibinfo{author}{Wilf, H.S.}, \bibinfo{author}{Zeilberger, D.},
  \bibinfo{year}{1992}a.
\newblock \bibinfo{title}{An algorithmic proof theory for hyper\-geometric
  (ordinary and ``{$q$}'') multisum/integral identities}.
\newblock \bibinfo{journal}{Invent. Math.} \bibinfo{volume}{108},
  \bibinfo{pages}{575--633}.
\newblock \DOIprefix\doi{10.1007/BF02100618}.
\bibitem[{Wilf and Zeilberger(1992b)}]{WilfZeilberger1992}
\bibinfo{author}{Wilf, H.S.}, \bibinfo{author}{Zeilberger, D.},
  \bibinfo{year}{1992}b.
\newblock \bibinfo{title}{Rational function certification of
  multisum/integral/``{$q$}'' identities}.
\newblock \bibinfo{journal}{Bull. Amer. Math. Soc. (N.S.)}
  \bibinfo{volume}{27}, \bibinfo{pages}{148--153}.
\newblock \DOIprefix\doi{10.1090/S0273-0979-1992-00297-5}.
\bibitem[{Xia and Sun(2023)}]{XiaSun2023}
\bibinfo{author}{Xia, W.}, \bibinfo{author}{Sun, Z.W.}, \bibinfo{year}{2023}.
\newblock \bibinfo{title}{On congruences involving {A}p\'ery numbers}.
\newblock \bibinfo{journal}{Proc. Amer. Math. Soc.} \bibinfo{volume}{151},
  \bibinfo{pages}{3305--3315}.
\newblock \URLprefix \url{https://doi.org/10.1090/proc/16387},
  \DOIprefix\doi{10.1090/proc/16387}.
\bibitem[{Zeilberger(1993)}]{Zeilberger1993}
\bibinfo{author}{Zeilberger, D.}, \bibinfo{year}{1993}.
\newblock \bibinfo{title}{Closed form (pun intended!)}, in:
  \bibinfo{booktitle}{A tribute to {E}mil {G}rosswald: number theory and
  related analysis}. \bibinfo{publisher}{Amer. Math. Soc., Providence, RI}.
  volume \bibinfo{volume}{143} of \textit{\bibinfo{series}{Contemp. Math.}},
  pp. \bibinfo{pages}{579--607}.
\newblock \URLprefix \url{https://doi.org/10.1090/conm/143/01023}.
\bibitem[{Zimmermann(2000)}]{Zimmermann2000}
\bibinfo{author}{Zimmermann, B.}, \bibinfo{year}{2000}.
\newblock \bibinfo{title}{{Difference Forms and Hypergeometric Sums}}.
\newblock \bibinfo{type}{RISC Report Series} \bibinfo{number}{00-02}. Research
  Institute for Symbolic Computation (RISC), Johannes Kepler University Linz.
  \bibinfo{address}{Altenberger Straße 69, 4040 Linz, Austria}.
\newblock \bibinfo{note}{{M}aster Thesis}.
\bibitem[{Zoladek(1998)}]{Zoladek1998}
\bibinfo{author}{Zoladek, H.}, \bibinfo{year}{1998}.
\newblock \bibinfo{title}{The extended monodromy group and {L}iouvillian first
  integrals}.
\newblock \bibinfo{journal}{J. Dynam. Control Systems} \bibinfo{volume}{4},
  \bibinfo{pages}{1--28}.
\newblock \DOIprefix\doi{10.1023/A:1022894431882}.

\end{thebibliography}

\end{document}